\documentclass[11pt]{article}
\usepackage{graphicx}
\usepackage{enumitem}
\usepackage{amsmath, amssymb, amsthm}
\usepackage{float}
\usepackage[hidelinks]{hyperref}
\usepackage{lineno}
\usepackage{enumitem}
\usepackage{tikzgrid} 
\usepackage{subcaption}

\usepackage{subcaption}
\usepackage{tikz}
\usepackage{xcolor}
\usetikzlibrary{backgrounds} %
\pgfdeclarelayer{background}    %
\pgfsetlayers{background,main}  %

\usepackage[margin = 1in]{geometry}

\newcommand{\mP}{\mathcal{P}}
\newcommand{\R}{\mathbb{R}}
\newcommand{\rr}{\mathbb{R}}
\newcommand{\N}{\mathbb{N}}
\newcommand{\Z}{\mathbb{Z}}
\newcommand{\C}{\mathcal{C}}

\newcommand{\viable}{\mathcal{V}}

\newcommand{\branch}{\mathcal{B}}
\newcommand{\Alg}{\textsf{SelectMarkedTree}}

\theoremstyle{plain}
\newtheorem{theorem}{Theorem}
\newtheorem{lemma}[theorem]{Lemma}
\newtheorem{claim}[theorem]{Claim}

\newtheorem{proposition}[theorem]{Proposition}
\newtheorem{corollary}[theorem]{Corollary}

\theoremstyle{definition}
\newtheorem{definition}[theorem]{Definition}

\numberwithin{theorem}{section}

\newcommand{\eps}{\varepsilon}

\newcommand{\abs}[1]{\left|{#1}\right|}

\newcommand{\suchthat}{\ | \ }

\newcommand{\txt}[1]{\text{#1}}

\newcommand{\stext}[1]{\ \ \ \ \ \text{(#1)}}

\newcommand{\ipncm}[3]{\begin{figure}[H]\begin{center}\includegraphics[scale = {#1}]{Media/#2.pdf}\caption{#3}\end{center}\end{figure}}

\usepackage{todonotes}
\newcommand{\jcmNote}[1]{}
\newcommand{\gjhNote}[1]{}
\newcommand{\gsNote}[1]{}
\newcommand{\scNote}[1]{}
\newcommand{\jtfNote}[1]{}
\newcommand{\krtNote}[1]{}

\date{\today}
\title{The Balanced Up-Down Walk}
\author{Hugo A. Akitaya\thanks{University of Massachusetts Lowell; \url{hugo_akitaya@uml.edu}.  Supported in part by NSF award CCF-2348067.}
\and Sarah Cannon\thanks{Claremont McKenna College; \url{scannon@cmc.edu}.  Supported in part by NSF award CCF-2443221.}
\and Gregory Herschlag\thanks{Duke University; \url{gregory.herschlag@duke.edu}}
\and Gabe Schoenbach\thanks{University of Chicago; \url{gschoenbach@uchicago.edu}}
\and Kristopher Tapp\thanks{Saint Joseph's University \url{ktapp@sju.edu}}
\and Jamie Tucker-Foltz\thanks{Yale School of Management; \url{j.tuckerfoltz@yale.edu}}}

\begin{document}

\maketitle
\thispagestyle{empty}
\begin{abstract}

Markov chains based on spanning trees have been hugely influential in algorithms for assessing fairness in political redistricting. The input graph represents the geographic building blocks of a political jurisdiction. The goal is to output a large ensemble of random graph partitions, often to understand a probability measure on partitions, which is done by drawing and splitting random spanning trees; the subtrees then induce the parts of the partition. Crucially, these subtrees must be \emph{balanced}, since political districts are required to have equal population. The Up-Down walk (on trees or forests) repeatedly adds a random edge then deletes a random edge to produce a new tree or forest; it can be used to efficiently generate a large ensemble of trees or forests, but the rejection rate to maintain balance grows exponentially with the number of parts. ReCom and its variants, the most widely-used class of Markov chains, circumvent this complexity barrier by merging and splitting pairs of districts at a time. These run fast in practice but can have trouble exploring the state space of all possible partitions.

To overcome these efficiency and mixing barriers, we propose a new Markov chain called the Balanced Up-Down (BUD) walk. The main idea is to run the Up-Down walk on the space of trees, but require all steps to preserve the property that the tree is splittable into balanced subtrees. This chain generalizes the moves of two recently proposed Markov chains, the Cycle Walk and the Marked Edge Walk. Furthermore, the BUD walk samples from a known invariant measure under exact balance. We prove that the BUD walk is irreducible (meaning every possible tree is reachable) in several special cases. %
Notably, this %
includes a regime where ReCom suffers from locked configurations, from which there are no valid moves.

Running the BUD walk efficiently presents multiple algorithmic challenges, especially when parts are allowed to deviate from their ideal size within a certain tolerance, as is the case in practice. A key subroutine is determining whether a tree is splittable into subtrees that all fall within the given tolerance. We give an improved analysis of an existing algorithm for this problem, showing it is possible to determine this faster than was previously known. We also show this is an inherently challenging problem by proving the associated counting problem is \#P-complete.
We empirically validate the usefulness of the BUD walk by comparing its performance to that of other existing methods for sampling partitions, leveraging additional structural results and simplifications to make it computationally feasible.

\end{abstract}

\newpage
\tableofcontents
\newpage
\setcounter{page}{1}
\pagestyle{plain}
\section{Introduction}\label{sec:intro}

In the United States (and around the world), states and regions are divided into districts for the purpose of electing representatives. How these districts are drawn can have a profound effect of who ends up being elected. Computationally, such a region can be represented as a graph with vertices for small geographic units (e.g. census blocks or voting precincts, weighted by population), and edges that represent geographic adjacency. A political districting plan with $k$ districts is then a partition of this graph into $k$ connected subgraphs, each with approximately equal total weight, which we call a {\it balanced} partition. Sampling districting plans has become a widely-used tool for creating a baseline of possibilities, evaluating enacted or proposed districting plans, and detecting and quantifying gerrymandering, especially in the United States~\cite{chikina2019separating, Chikina_Frieze_Pegden_2017, deford2019redistricting, fifield2020automated, herschlag2020quantifying, NCSenate2021, zhao2022mathematically}, including in courts~\cite{chen2022brief, duchin2017expert, duchin2019brief, hirsch2022brief, jcmReport, jcmReportHarperVHallMoore}. 

Most methods for sampling balanced partitions randomly draw spanning trees and split them into forests, with each tree in the forest inducing one part of the partition.  
From a combinatorial perspective, sampling random forests has long served as a canonical example in the study of matroid basis exchange and related Markov chains (e.g. \cite{anari2021logconcavepolynomialsivapproximate}). For a graph with $n$ vertices and any $\ell<n$, the set of all forest subgraphs with $\ell$ total edges forms the bases of a matroid with rank $\ell$, and set of all spanning trees ($\ell = n-1$) forms the bases of the {\it graphic matroid}.
This builds on a rich classical literature about sampling/counting trees and forests in graphs that dates back to Kirchhoff’s matrix–tree theorem \cite{Kirchhoff1847} and includes both exact and Markov chain–based algorithms \cite{tam2025exact,russo2018linking,tutte2001graph,wilsonGeneratingRandomSpanning1996}. However, the natural problem of counting/sampling forests whose component trees are similar is size has remained woefully underexplored. 

Beyond the redistricting application that motivates this work, sampling graph partitions with additional constraints such as balance also arises in certain probabilistic graph clustering methods. In these settings, Markov Chain Monte Carlo methods are used to perform Bayesian inference to produce a graph clustering, that is, a partition of the graph.  These methods have been shown to be successful in application areas such as ocean temperature-salinity relationships~\cite{Luo2021Bayesian}, cancer mortality and income~\cite{Bhattacharyya2025Constrained}, single cell RNA sequences~\cite{Rebaudo2025GARP}, and anomaly detection on road networks~\cite{Lee2021T-LoHo}.

\subsection{Existing sampling methods}

Early sampling approaches developed for the redistricting problem relied on Glauber dynamics (also known as the Flip Walk; see \cite{chikina2019separating,Chikina_Frieze_Pegden_2017,fifield2020automated,herschlag2020quantifying}), where one random unit is moved to a new district in each step. This can exhibit poor mixing behavior, particularly on large or finely discretized graphs, where it takes a long time to produce a partition that is significantly different from the initial one.

A more global Markov chain, known as ReCom (or Recombination), instead merges two entire districts, draws a spanning tree on their union, and then cuts this spanning tree into two pieces to form two new districts. ReCom has demonstrated strong empirical mixing properties \cite{deford2019recombination}, but little is rigorously known about its mixing time or invariant measure. Subsequent work further generalized the state space from partitions to {\it spanning forests} (forests including all vertices of the underlying graph) and allowed sampling from a specified class of probability measures by coupling ReCom with tools like Metropolis-Hastings and Sequential Monte Carlo \cite{autry2023metropolized,autry2021metropolized,cannon2022spanning,mccartan2020sequential}. The practical success of ReCom and its variants has led to widespread use in applications, including the analysis and auditing of gerrymandering in both the academic literature and court cases (e.g.\ \cite{RuchoVCC,deford2019redistricting,duchinPAreport,jcmReportHarperVHallMoore,zhao2022mathematically}). Despite this empirical success, however, key theoretical questions remain unresolved. 

Beyond ReCom, other methods rely on enforcing the equal-size requirement via rejection sampling after producing a (not necessarily balanced) sample~\cite{AnariVinzantVuong2021, cannon2024sampling, charikar2023complexitysamplingredistrictingplans}. 
This includes the Up-Down Markov chain on spanning forests containing $k$ trees, which iteratively adds a random edge and then removes a random edge to produce a new spanning forest~\cite{charikar2023complexitysamplingredistrictingplans}. However, the forests produced are not necessarily balanced, and an additional rejection step is needed to produce balanced samples. As an alternative approach, this same Up-Down chain can be used on a different state space to sample random spanning trees, iteratively adding a random edge to a spanning tree and then removing a random edge from the resulting cycle; this is the basis exchange walk on the dual of the graphic matroid. However, again an additional rejection step --- keeping only spanning trees that are splittable into $k$ equal parts --- is needed to produce a balanced partition. 
Because of this, investigation has focused on the likelihood that a random spanning tree admits a valid balanced cut, which turns out to be closely related to the likelihood a random $k$-tree forest subgraph is balanced~\cite{cannon2024sampling,gallagher_tapp_2026,gillman2025splittable}.
Though this approach has produced a provably polynomial-time sampling algorithm on grids and grid-like graphs (the natural simplified settings to consider)~\cite{cannon2024sampling,charikar2023complexitysamplingredistrictingplans}, this only holds when the number of parts $k$ is constant; the expected running time grows exponentially in $k$. Moreover, even for relatively small $k$, the polynomials involved are too large (and therefore the rejection rates are too high) to make this sampling method practicable.

\subsection{Theoretical guarantees (or lack thereof) for existing sampling methods}

Despite their wide empirical success, there remain many open questions about the properties of Recombination and related Markov chains. For example, it is not known whether recombination moves that merge two districts and split them in a new way suffice to reach any possible districting plan, a property known as {\it irreducibility}. This has motivated a growing body of work demonstrating irreducibility of recombination-based chains in special cases or restricted settings~\cite{akitaya2023reconfiguration, akitaya2022recom, charikar2023complexitysamplingredistrictingplans}.
The broadest known irreducibility results are for unweighted graphs and $k = 3$ districts when partitions can differ in size by $\pm 1$ vertices on triangular regions of the triangular lattice~\cite{cannon2024irreducibility} and grid subgraphs~\cite{akitaya2026redistricting}. In the case of exact balance (rather than $\pm1$), there exist several examples of {\it locked configurations}, partitions of the grid where no recombination moves can change the partition~\cite{LockedPolyominoTilings}. No positive irreducibility results are known for broad graph classes under exact balance. Thus far, ReCom and related chains have also resisted any mixing time analysis.  Though empirically it seems these chains mix fairly quickly, there is no theory to back this up.

One the other hand, a lot is rigorously known about the Up-Down walk, on forests and on trees. It can be viewed as a basis-exchange walk on a matroid, so is known to be irreducible on the space of all trees/forests, whether or not they are splittable/balanced. This chain is also known to mix in $O(n \log n)$ steps, and each step can be implemented in amortized $O(\log n)$ time, giving an $O(n \log^2 n)$ running time for sampling random $k$-component forests or random spanning trees~\cite{AnariVinzantVuong2021}. These impressive theoretical guarantees are largely made possible by dropping the balance restriction and instead looking at a much larger state space. While results have shown that balanced forests (or splittable trees) form at least a polynomial fraction of the state space for grids and grid-like graphs, this polynomial has $k$, the number of districts, in the exponent. Because of this, balanced forests and splittable trees form too small of a fraction of the state space to make the Up-Down walk, coupled with rejection sampling, feasible in practice. 

A main goal of work in this area is to find a middle ground between methods such as Recombination that perform well in practice but lack theoretical guarantees, and methods such as the Up-Down Walk that come with some theoretical guarantees but are not practically useful.  This is what we aim to do with the BUD walk.

\subsection{Samping from a broader class of measures}

At a methodological level, much of the redistricting literature has focused on Markov chains that replace entire trees or forests at each step~\cite{ autry2023metropolized,autry2021metropolized,cannon2022spanning,deford2019recombination}, while combinatorial approaches have more often emphasized local moves, such as adding and then deleting edges \cite{anari2021logconcavepolynomialsivapproximate,russo2018linking}. Both paradigms are effective for sampling from uniform distributions on forests, or equivalently, the {\it spanning tree distribution} on partitions, where the weight of each partition is the product, across its parts, of the number of spanning trees of each part. While this is a natural distribution that has many nice properties~\cite{cannon2022spanning}, it is often desirable to sample from broader classes of probability measures, and this has proven challenging when chains' proposed moves rely heavily on tree structures.  
Several recent methods have begun to address this limitation, including approaches based on balanced spanning forests (Cycle Walk \cite{cyclewalk}), marked spanning trees (Marked Edge Walk, or MEW \cite{mcwhorter2025MEW}), and multiscale constructions \cite{chuang2024multiscaleparalleltemperingfast}. Both Cycle Walk and MEW operate by drawing and dissolving cycles in a forest or marked tree, enabling structural changes that can propagate across small, medium, and large scales. This desirable feature helps enable practical sampling of distributions that are less closely tied to tree weights. 

\subsection{The BUD walk}

Proposed but not studied in \cite{cyclewalk} and motivated by the Up-Down walk on spanning forests, %
we ask the following question: If the state space is restricted from all spanning trees to only those that can be cut into a $k$-component balanced forest, what properties does the resulting Markov chain exhibit? We refer to this process as the \emph{Balanced Up-Down} (BUD) walk. 
Formally, the state space of the BUD walk is all spanning trees that can be cut into $k$ balanced pieces, which we call {\it $k$-splittable trees}. Each move adds a random edge which results in exactly one cycle; and then a random edge of this cycle is removed, conditioned on it being possible to split the resulting tree into $k$ balanced pieces. The BUD walk is formally defined in Section~\ref{sub:tech:BUD}. 

The BUD walk merges desirable features of several Markov chains used for sampling graph partitions into one process. It naturally combines local moves that change only a small part of the tree and global moves that change large potions of the tree, depending on the random edge that is added and the length of the cycle created. 
Unlike ReCom, whose stationary distribution remains unknown (unless the chain is modified so as to make it reversible), the BUD walk's natural stationary distribution can be easily calculated. 
The Up-Down walks on spanning forests and spanning trees remain the only chains used in this area for which mixing time bounds are known, but high rejection rates limit their usefulness. By modifying the Up-Down walk on spanning trees to only walk among those trees which are splittable into balanced pieces, we lose the rigorous mixing time bound but no longer require a final rejection step. While we do not yet have mixing time bounds for the BUD walk, it is hoped that its similarities to the Up-Down walk might lead to such results in the future; in this work, we find some empirical evidence that the mixing times may be related. We are also able to show the BUD walk is irreducible in cases where ReCom isn't, further bolstering its potential as a natural bridge between empirical success and theoretical guarantees.

\subsection{Our contributions}

Our main contribution is the exploration of the properties of the BUD walk and how they compare to previous methods. In Section~\ref{sec:generalizing}, we show that the BUD walk generalizes the moves of both Cycle Walk and the Marked Edge Walk, incorporating all moves available to those chains while admitting additional transitions inaccessible to either.\footnote{This statement holds only for the way in which these two chains have been implemented; both chains could feasibly be generalized to match the moves of the BUD walk.} Section~\ref{sec:BUD-exact-balance} characterizes BUD moves in the case of {\it exact balance}, where each partition must have exactly equal weight. Several structural lemmas regarding how BUD steps (which are defined on trees) act on partitions are presented in this subsection. 

In Section~\ref{sec:irreducibility}, we consider the irreducibility of the BUD walk. 
In Section~\ref{sub:irreducibility-simple}, we show the strong result that the BUD walk is irreducible for all simple lattices (grid subgraphs for which the boundary is a closed loop) when $k = 2$. It is notable that this result holds for irregularly-shaped grid subgraphs, as previous irreducibility results for ReCom (albeit for $k = 3$) required regularly shaped (rectangular or triangular) grid subgraphs~\cite{cannon2024irreducibility, akitaya2026redistricting}.  
In Section~\ref{sub:irreducibility-triominoes}, we show BUD is irreducible in a key case where ReCom is not~\cite{LockedPolyominoTilings}, that of rectangular grid graphs partitioned into pieces of size three (i.e., triominoes). Notably, both these irreducibility results hold under \emph{exact} balance, whereas previous irreducibility results for ReCom require balance deviations of $\pm 1$ vertex. On the other hand, in Section~\ref{sub:negative-results}, we show a contrived grid subgraph on which BUD is not irreducible. 

Section~\ref{sec:complexity} considers the computational complexity of a key subroutine of the BUD walk: determining whether a particular spanning tree is splittable into $k$ balanced parts. While this is simple to answer in unweighted graphs when considering exact balance, it becomes much more difficult when dealing with weighted graphs and approximate balance, the most relevant case for our motivating redistricting application. Determining whether a given weighted tree is splittable into approximately balanced parts had previously been shown to be solvable in time $O(k^4 n)$~\cite{Ito2008Partitioning}. In Section~\ref{subRuntimeAnalysis}, we give an improved amortized runtime analysis of this algorithm, showing this problem is solvable in time $O(k^3 n)$ in general, and time $O(kn)$ in a special case that's particularly relevant to our motivating application.

In the approximate balance case, there may be many ways to split a given tree to produce an approximately balanced partition. If one wants to implement the BUD walk to sample from a probability distribution that depends on the induced partition, rather than simply the underlying tree, it is necessary to have a method for randomly sampling from the possible partitions that could arise by splitting a given tree. In Section~\ref{sub:hardness}, we show it is \#P-hard to compute the necessary conditional probabilities for the natural iterative approach to uniform sampling. 

Despite this hardness, in Section~\ref{sec:practical} we demonstrate that computationally-feasible probabilistic methods can be employed with Metropolis-Hastings to use BUD to sample in practice. 
In Section~\ref{sub:proposing_marked_edges}, we present a method for sampling (non-uniformly) a valid way of splitting a tree to produce an approximately balanced partition.  This method has two nice features: Every valid partition has a nonzero probability of occurring, and the probability with which a given partition arises can be easily computed. 
In Section~\ref{sub:restricting_bud_for_computational_efficiency}, we present a slight modification of BUD that mildly restricts certain transitions; this restriction is necessary to make implementing a single step of BUD computationally feasible. In particular, rather than needing to resample an entire set of $k-1$ split edges in each step, we only recompute those split edges within distance $d$ of the cycle created in the BUD step for some value of $d$, assuming all split edges that are farther away remain the same. 
In Section~\ref{sub:numerical}, we present empirical validations of BUD and comparisons between BUD and Cycle Walk on grid graphs. We also investigate the performance of BUD in several cases, including a graph representing the state of North Carolina and compare it to previous methods used in this same setting.

\section{Technical overview and proof preview}\label{sec:overview}

In this section, we formally define the BUD walk and then give a high-level overview of the proof techniques used to prove our various results.  We begin with a few definitions that will be necessary for this exposition. 

\subsection{Preliminaries and the BUD walk}\label{sub:tech:BUD}

Let $G$ be a connected graph with $n$ vertices, whose vertex and edge sets will be denoted by $V(G)$ and $E(G)$, respectively. Let vertex weights be given by a function $p: V(G) \rightarrow \R^+$; we call $p(v)$ the {\it population} at vertex $v$.  The total population of $G$ is $p(G) := \sum_{v \in V(G)} p(v)$.  The following definition formalizes what we mean for a tree to be splittable into $k$ approximately-balanced components. 

\begin{definition}\label{def:k-balanced-trees}
    For any tolerance $\eps \geq 0$ and integer $k \geq 2$, a spanning tree $T$ of $G$ is  \emph{$\eps$-balanced $k$-splittable}
    if there exists a set of $k - 1$ edges of $T$ whose removal partitions $G$ into $k$ connected components (called \emph{districts}) such that the population of each district is in the range
    \begin{align}
        \left[\frac{p(G)}{k} - \frac{\eps}{2}, \frac{p(G)}{k} + \frac{\eps}{2}\right].
    \end{align}
    Let $B^\varepsilon_k(G)$ be the set of $\eps$-balanced $k$-splittable trees of $G$.
\end{definition}

\noindent Now that we know what it means for a tree to be splittable into balanced pieces, we define the BUD walk. 

\begin{definition}\label{def:BUD}
    The \emph{Balanced Up-Down (BUD) walk on $B^\varepsilon_k(G)$} is the Markov chain whose transition step from a tree $T \in B^\varepsilon_k(G)$ is defined as follows.
\begin{enumerate}
    \item Add to $T$ a uniform random edge from $E(G) \setminus E(T)$, which creates a unique cycle $C$.
    \item Remove an edge from $C$, uniformly at random from the set of edges whose removal would yield a tree $T' \in B^\varepsilon_k(G)$.
\end{enumerate}
\end{definition}

Recall a Markov chain is {\it irreducible} if it can move from any state to any other state, and it is {\it aperiodic} if the possible times at which it could return to a given state have a greatest common divisor of one. The BUD walk is clearly aperiodic; it can return to the same state in a single step if the newly added edge is subsequently removed. Much of the focus of this paper is on the irreducibility of the BUD walk, in various settings.  A Markov chain that is both irreducible and aperiodic is {\it ergodic}.

Let $P$ be the $|B^\varepsilon_k(G)| \times |B^\varepsilon_k(G)|$ transition matrix of the BUD walk, where $P(T, T')$ is the probability, if at tree $T$, of transitioning to tree $T'$ in one BUD step. A {\it stationary distribution} is any row vector $\pi$ such that $\pi P = \pi$. Any aperiodic Markov chain has at least one stationary distribution, and any ergodic Markov chain has exactly one stationary distribution.  For ergodic chains, this unique stationary distribution $\pi$ is also the limiting distribution, that is, it is the probability distribution over states as the number of steps of the chain goes to $\infty$: $\pi = \lim_{t \rightarrow \infty} x_0 P^t$, where $x_0$ is a vector describing any starting state and $P^t$ is the $t$-step transition matrix. 

The following lemma shows that the uniform distribution on $B^\varepsilon_k(G)$ is a stationary distribution of BUD. 

\begin{proposition}\label{prop:stationary-uniform}
    The uniform distribution on $B^\varepsilon_k(G)$
    is a stationary distribution for the BUD walk. If BUD is irreducible, this stationary distribution is unique. 
\end{proposition}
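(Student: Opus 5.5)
The plan is to show that the transition matrix $P$ of the BUD walk is symmetric, $P(T,T') = P(T',T)$ for all $T, T' \in B^\varepsilon_k(G)$. Symmetry makes $P$ doubly stochastic: each column sums to the corresponding row sum, which is $1$. The uniform row vector $\pi$ then satisfies $\pi P = \pi$. Uniqueness under irreducibility will follow from the standard fact recalled just before the statement. We already observed that BUD is aperiodic, so irreducibility makes it ergodic, and an ergodic chain has exactly one stationary distribution.

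To compute $P(T,T')$ for $T \neq T'$, I would first note that a single BUD step changes at most one edge. Hence $P(T,T')>0$ only if $T' = T + e - f$ with $e \in E(G)\setminus E(T)$, $f \in E(T)$, and $f$ on the unique cycle $C$ of $T+e$. In that case $e$ is determined by the pair, since it is the unique edge of $T'$ not in $T$, and likewise $f$ is the unique edge of $T$ not in $T'$. So the only way to move from $T$ to $T'$ is to add $e$ and then delete $f$. Writing $m = |E(G)|$ and $n=|V(G)|$, every spanning tree has $n-1$ edges, so step 1 picks $e$ with probability $1/(m-n+1)$. This quantity is the same for every tree, which is the key point. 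Step 2 picks $f$ with probability $1/|A(T+e)|$, where $A(H)$ denotes the set of cycle edges of the unicyclic graph $H$ whose removal gives a tree in $B^\varepsilon_k(G)$. Therefore
\[
P(T,T') = \frac{1}{(m-n+1)\,|A(T+e)|}.
\]

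The crux is the reverse move. Going from $T'$ to $T$ requires adding $f$ and deleting $e$, and $T'+f = T+e$ is the same unicyclic graph $H$, with the same cycle $C$. The acceptable set $A(H)$ depends only on $H$, not on which tree we came from, and it is nonempty because it contains both $e$ and $f$. Hence $P(T',T) = 1/((m-n+1)|A(H)|) = P(T,T')$.

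Diagonal entries need no argument, since symmetry only concerns off-diagonal pairs. I do not expect a real obstacle here. The only care needed is to check that the pair $(e,f)$ is uniquely determined by $(T,T')$, so that no multiplicities arise, and that the normalizing constant $m-n+1$ does not depend on the current tree.
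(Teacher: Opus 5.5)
Your proposal is correct and is essentially the paper's proof. Both compute $P(T,T') = \frac{1}{(m-n+1)\,L}$, where $L$ depends only on the unicyclic graph $T\cup T'$, deduce $P(T,T')=P(T',T)$ (you phrase this as double stochasticity, the paper as detailed balance), and get uniqueness from ergodicity. Your extra checks, that $(e,f)$ is uniquely determined by $(T,T')$ and that $m-n+1$ does not depend on the tree, are details the paper leaves implicit.
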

\begin{proof} For any adjacent pair $T, T'\in B^\varepsilon_k(G)$, we have:
\begin{align*}
    P(T, T') = \frac{1}{|E(G)|-|E(T)|}\cdot\frac{1}{L} = P(T', T),
\end{align*}
where $L$ denotes the number of edges of the unique cycle $C$ of $T\cup T'$ whose removal would yield a balanced splittable tree.  Thus, the detailed balance equation holds and so the uniform distribution is a stationary distribution for BUD. If BUD is irreducible, then it is ergodic, and it has a unique stationary distribution, which as just shown must be the uniform distribution. 
\end{proof}

\noindent If BUD is not irreducible, it does not have a unique stationary distribution.  In this case, the limiting distribution it converges to depends on the starting state. If a random walk starts at a tree $T$, the limiting distribution will necessarily be uniform on the set of all trees in $B^\varepsilon_k(G)$ reachable by BUD moves from $T$.

\subsection{Irreducibility}\label{sub:tech:irred}

A key question is whether BUD moves suffice to transform any splittable tree in $B^\eps_k(G)$ into any other splittable tree. This is essential to consider before we can ask questions about mixing time and the rate of convergence of the BUD walk to the uniform distribution on $B^\eps_k(G)$. We prove that BUD is irreducible in two important cases, both in the tightest $0$-balanced setting, on opposite sides of the spectrum of values for $k$.

As with previous irreducibility results~\cite{akitaya2026redistricting,cannon2024irreducibility}, we focus on unweighted graphs, or equivalently, graphs where $p(v)$ is the same for every vertex $v$. Without loss of generality, we assume $p(v) = 1$ for all $v$, so $p(G) = |V| = n$ and the target size for each district is $n/k$.  While the broadest irreducibility results for ReCom allow districts sizes of $(n/k) \pm 1$, our irreducibility results hold when each district is required to have size exactly $n/k$. 

In Section~\ref{sec:BUD-exact-balance}, we consider how BUD acts on partitions. In the $0$-balanced case, each $T \in B^0_k(G)$ has a unique set of edges whose removal produces a balanced partition (Lemma~\ref{L:unique_cuts}), so each tree can be associated with exactly one partition.  We prove that if two trees $T, T' \in B^0_k(P)$ yield the same partition~$\mP$, there exists a sequence of BUD moves transforming $T$ into $T'$ where each intermediate tree also forms partition $\mP$ when split (Lemma~\ref{lemInternalStep}).  As a result, our argument for irreducibility can focus only on BUD moves that change the partition, and we can always choose the most convenient tree to use for each partition.

\subsubsection{Simple lattices with $k = 2$} 

In Section~\ref{sub:irreducibility-simple}, we show that for $k = 2$, BUD is irreducible on all {\it simple lattices}: grid subgraphs whose boundary (vertices in the subgraph that are adjacent to vertices not in the subgraph) is a simple closed loop $\alpha$. Equivalently, this is all simply connected, 2-connected grid subgraphs, or grid subgraphs with no holes and no cut vertices.

Without loss of generality, we refer to one part of the 2-partition as red and the other as blue. We also show that it suffices to assume that both the red part and the blue part are themselves simply connected, by giving a method by which to achieve this if it is not initially true (Claim~\ref{lem:reduce-to-simple}). Therefore there are both red vertices and blue vertices in $\alpha$. 

The key insight for this proof is to focus on {\it column components}.  For the tree consisting of the red vertices, we choose the tree that has as many vertical edges as possible, plus additional horizontal edges as necessary to connect these, and the same for the blue tree (Figure~\ref{F:column_graph}). A {\it column component} is a connected set of vertices of the same color in the same column, and a column component is a {\it leaf component} if it is only adjacent to one other column component of the same color. A leaf component is {\it sandwiched} if it has vertices of the opposite color both above and below it (i.e., it does not touch the boundary of the region). 

We are able to show how to transform any 2-partition into a canonical 2-partition that has as few total column components as possible.  When there is both a red sandwiched leaf component and a blue sandwiched leaf component, one can use a cycle that links these components and exchange vertices until one of the leaf components is eliminated, reducing the total number of column components (Lemma~\ref{L:interior_reduction}, Figure~\ref{F:Bud_grid_change}). One can continue doing this until there are no sandwiched leaf components (Lemma~\ref{L:no_sandwich}), at which point all leaf components include a vertex of the boundary cycle $\alpha$. The boundary cycle $\alpha$ must consist of a sequence of red vertices followed by a sequence of blue vertices, meaning there are two places in $\alpha$ where a red vertex is adjacent to a blue vertex (Figure~\ref{F:rotate}). By working near these transition points and gradually moving them around the boundary, we are able to transform this configuration as needed to make the number of column components as small as possible.  It is this step, of gradually working around the boundary cycle $\alpha$, that requires our assumption that the boundary of our grid graph is a simple closed loop. 

For any BUD move from $T$ to $T'$, it is possible to move from $T'$ to $T$. Therefore, showing any partition can be transformed into a particular canonical partition suffices to prove that any partition can be transformed into any other partition, thus showing irreducibility. A complete proof is given in Section~\ref{sub:irreducibility-simple}. 

It would be interesting to show that BUD is irreducible for larger values of $k$. A neat approach would be to use the proof of irreducibility for $k = 2$ as a part of a recursive argument for $k > 2$. To do so, we would need our result to hold for any connected grid subgraph, not just those bounded by a simple closed loop.  This is an aim of ongoing and future work.

\subsubsection{Triominoes in rectangular grids}

In Section~\ref{sub:irreducibility-triominoes}, we show BUD is irreducible on rectangular grids when $k = n/3$, that is, when each district has exactly 3 vertices.  In analogy with the tiling literature, we call these size-three districts {\it triominoes}.  If a rectangular grid has at least one valid partition into triominoes, one of its dimensions must be divisible by three; without loss of generality, we assume it is the horizontal dimension. We show that BUD can transform any triomino partition into a canonical partition where all triominoes are horizontal. This suffices to prove irreducibility. 

For any non-canonical partition, we consider the first vertex in a left-to-right, top-to-bottom ordering that is not in a horizontal triomino (labeled $c$ in Figure~\ref{figTriominoProgram}). We can identify all possible partial partitions including the 13 triominoes closest to this location, and computationally check that for each one, there exists a sequence of BUD moves ultimately producing the next horizontal triomino in the canonical partition (13 is the smallest integer for which this claim is true).  Additional care is taken when $c$ is close to a boundary of the rectangular grid subgraph to be sure that only feasible tilings respecting the boundary are considered. 

\subsubsection{Counterexample} 

An example of a grid subgraph on which BUD is not irreducible is given in Figure~\ref{fig:locked}; this graph was specifically created so that no BUD moves would be possible from the given partition. 
The grid subgraph in question has two holes as well as multiple cut vertices. We do not know of an example of a simply connected grid subgraph on which BUD is not irreducible. Following our results above, for $k = 2$ any such example would need to have cut vertices.

\subsection{Approximately balanced partitions of trees}

A key subroutine of BUD requires us to determine whether a proposed tree is $\eps$-balanced $k$-splittable, and if it is, to choose one such split at random. Sections~\ref{sec:complexity} and \ref{sec:bud_as_a_practical_algorithm} address several aspects of this problem in detail. 

\subsubsection{Determining splittability: improved runtime bounds}

For a vertex-weighted tree $T$, a positive integer $k$, and a tolerance $\eps$, the objective of the {\it Tree Approximate Partition Problem (TAPP)} is to determine whether there exists an $\eps$-balanced $k$-partition of $T$. For simplicity, we assume the vertex weights sum to one (this can easily be accomplished by replacing each weight $p(v)$ with $p(v)/p(G)$, for instance); this means the total weight of each district must be in the range $[\frac1k - \frac\eps2, \frac1k + \frac\eps2]$. 

Ito, Uno, Zhou, and Nishizeki~\cite{Ito2008Partitioning} consider a variant of this problem.  The algorithm they develop can be applied to solve TAPP, with a runtime bound of $O(k^4 n)$ for a tree with $n$ vertices. We describe an essentially equivalent algorithm, using our language and notation, in Section~\ref{subApproximateBalanceDp}. At a high level, this algorithm arbitrarily roots the tree $T$, and is essentially a dynamic program traversing $T$ from the leaves to the root.  At each node $v$, the algorithm tracks all the possible `left-over' populations resulting from the ways the subtree rooted at $v$ could be partitioned into some number of districts plus one partial district. Denote by $S$ the set of possible weights for this partial district.  Unfortunately, $S$ could be exponentially large, making it infeasible to track in its entirety.  Instead, the algorithm tracks the \emph{$\eps$-closure} of $S$, which is $S$ along with all intervals between consecutive values of $S$ that are at most $\eps$-far from each other. 
By merging nearby values of $S$ into one interval, this enables a polynomially-sized representation of the essential information needed to determine $\eps$-balanced $k$-splittability. 

Our main contribution is an improved amortized runtime analysis of this algorithm. At each vertex $v$, the algorithm performs a sequence of Minkowski sums over unions of intervals (specifically, of the $\eps$-closures for each child of $v$). The bound of $O(k^4 n)$ from~\cite{Ito2008Partitioning} follows by noting there are $n-1$ merges, each merging the set of intervals for a child with those for a parent. Each such merge between a vertex $v$ and a vertex $w$ involves $k$ choices for the number of completed districts in the subtree rooted at $v$ and $k$ choices for the subtree rooted at $w$. For each such option there are at most $k$ intervals to be merged from each side of the Minkowski sum, which takes $O(k^2)$ steps.

However, this analysis ignores the fact that costly merges can only appear near the root of the tree. By amortizing this analysis to account for the fact that merges near leaves are much less costly (the key idea is Lemma~\ref{L:few_components_2}), we improve this analysis to $O(n k^3)$ in general~(Theorem~\ref{T:deadline}). Additionally, much of the difficulty in the algorithm comes from needing to track the number of possible completed districts in a subtree rooted at $v$, and handle each possibility separately.  If the number of possible completed districts in any subtree were unique, rather than having to consider $k$ possible such values for $v$ and $k$ possible such values for $w$ and perform $O(k^2)$ merges, we only need to consider 1 or 2 possible values for each, reducing the runtime by an additional factor of $k^2$.  A simple calculation shows that $\eps < \frac{2}{k(k+1)}$ suffices to ensure that it's not possible to partition the graph into $k-1$ or $k+1$ districts, implying the number of possible completed districts in any subtree is unique. In this way, we obtain a runtime bound of $O(nk)$ when $\eps = O(1/k^2)$ (Theorem~\ref{T:deadline}). In many practical redistricting cases, $\eps$ is small enough for this hypothesis to hold, making this bound of $O(kn)$ the most relevant for the practical applications that motivate our work.

\subsubsection{Hardness of exact sampling}
Once we have determined that a tree is $\eps$-balanced $k$-splittable, a natural next goal would be to sample one of the possible splits uniformly at random.  However, we show in Section~\ref{sub:hardness} that one natural approach for such sampling will not succeed. The approach in question is to sample one edge at a time, each with the correct conditional probability. That is, one would consider an edge $e$ and randomly (with the appropriate probability) either make $e$ a cut edge and remove it from the tree, or determine $e$ will not be a cut edge and contract it. Unfortunately, computing the correct probabilities for removing vs. contracting edge $e$ turns out to be intractable: For any edge $e$ in a tree $T$, it is \#P-complete to determine the probability that $e$ is split in a uniformly random $\eps$-balanced $k$-partition of $T$ (Theorem~\ref{thmApproximateBalanceExactSamplingHardness}). 

The proof of this result uses a reduction from a particular 0/1 Knapsack counting problem, where one seeks to determine how many subsets of items have total weight at most some capacity $C$, a problem that is known to be \#P-complete. One can build a tree where each possible item corresponds to one branch. If $s$ is the number of knapsack solutions, there is a particular edge $e$ of this tree where the probability $e$ is split in a uniformly random $\eps$-balanced $k$-partition of $T$ is exactly $s/(s+1)$.  Therefore, if we could estimate this probability, we could compute $s$, and so the problem of estimating this probability must be \#P-hard. (That the problem is also in \#P is straightforward to show.)

\subsubsection{An alternate sampling method with theoretical guarantees}

This hardness result necessitates an alternate approach to randomly sample an $\eps$-balanced $k$-partition of a given tree $T$.  This is necessary for BUD to be able to sample from any distribution that depends on the induced partition rather than just the tree structure, as many practicably relevant distributions do.  Any such sampling method should satisfy (1) any valid $\eps$-balanced partition of $T$ must be sampled with non-zero probability, and (2) given the sampled partition, we can efficiently compute the probability with which it was sampled. Property (1) is important to support irreducibility: we don't want our sampling method to invalidate any transitions that are otherwise possible.  Property (2) is necessary to compute values necessary for our probability filters (Metropolis filters) when using BUD to target a particular distribution.

In Section~\ref{sub:proposing_marked_edges}, we propose an algorithm $\Alg$ (Definition~\ref{def:select-marked-tree}) to randomly sample an $\eps$-balanced $k$-partition of $T$. We do so by sampling a set of edges that, if cut, induce an $\eps$-balanced forest whose $k$ connected components correspond to the desired partition. At a high level, our idea is to recursively choose a leaf $v$ of $T$ and sample a single edge that, if cut, would break $v$ into its own district. How should we choose $v$? A natural approach that satisfies Property (1) would be to sample a random ordering on the vertices. But this would run afoul of Property (2), as computing the probability of sampling a given partition would require summing over all $n!$ vertex orderings. On the other hand, fixing an ordering on which to visit the leaves of $T$ would easily satisfy Property (2), but may deterministically avoid certain partitions.\footnote{
    For instance, this approach would never sample partitions for which the district containing the first leaf vertex requires \emph{more than one cut} to be split off from $T$.
} Our solution is to use a fixed ordering, but also occasionally contract $v$ into the tree instead of cutting it off. Under a mild bound on the population tolerance --- which ensures that our probabilistic contraction still allows us to satisfy Property (1) --- we show in Lemma~\ref{lem:alg-has-properties} that $\Alg$ satisfies both of our desired properties.

\subsubsection{Restricting BUD for computational efficiency}

As described, BUD allows transitions between pairs of splittable trees, but it does not specify the split. When defining measures on partitions rather than splittable trees, one must extend the state space to ``mark'' edges of a tree such that removing the edges will lead to a balanced forest (and hence partition). 
Naively, one could re-mark \emph{all} edges at every transition, which would require running $\Alg$ on the entire tree, for every step of the BUD walk. This is computationally expensive.

In Section~\ref{sub:restricting_bud_for_computational_efficiency}, we present a slightly restricted version of the BUD walk on partitions. At core, the idea is to fix marked edges that are ``far'' from the cycle involved in the BUD step, so we only need to run $\Alg$ on a restricted subtree, which can lead to significant speedups in practice. The distance at which to fix marked edges is parameterized by a distance $d$; our experiments use $d = 0$, but our approach works for any $d \in \N$.

\subsection{Experimental results}
We present our experimental results in Section~\ref{sub:numerical}. We first validate our code on a $4 \times 4$ grid divided into 4 districts, and observe that we are sampling from the correct distribution. We then consider the $4 \times 4$ grid divided into 5 districts, each of size 3 or 4.  The Cycle Walk fails on this example, but we are able to show that the BUD walk does not (Figure~\ref{fig:cyclevsbud4x4d5}). We then compare the BUD walk and the Cycle Walk on an $8 \times 8$ grid divided into $5$ districts, each required to be size 12 or 13. Despite some challenges determining the fairest comparison between BUD and the Cycle Walk, when looking at effective sample rates and autocorrelations for the distributions each is best suited to sample from, we see BUD outperforms the Cycle Walk over a fixed number of iterations (Figure~\ref{fig:cyclevsbud8x8d5autocor}), though BUD's walk clock time per iteration is significantly larger. We also test BUD's performance on a larger graph by examining the problem of congressional redistricting in the state of North Carolina. We find that it has promising mixing properties on observables of interest.

When adding marked edges to BUD, a certain fraction of the proposed moves will select edges with nodes that belong to the same region, and some that will create loops in the graph in which we quotient out the regions. Previous algorithms have used parameters to tune between selecting edges internal to a region and edges that span it. We find the BUD walk proposes internal and external moves that is at a ratio that is near optimal.

We conclude by comparing the BUD walk with the Up-Down walk: The BUD walk has restricted moves, but the Up-Down walk has a large state space. We find that despite concentrating on different regions of state space, the two walks demonstrate remarkably similar auto-correlations and effective sample rates.

\section{The BUD walk}\label{sec:BUD}

In this section we begin to explore the properties and structure of the BUD walk in more detail, beginning by comparing it to previous chains. 

\subsection{Generalizing other walks}\label{sec:generalizing}

There are two recent sampling methods that are (i) similar to BUD and (ii) have shown promise in generalizing the types of measures that can be efficiently sampled (empirically), that is, in sampling measures that are farther away from the spanning tree measure. In this section we will discuss how BUD can be considered to be a generalization of each of these methods.

\subsubsection{The cycle walk}

The first method we discuss is called the Cycle Walk \cite{cyclewalk}. In it, the state space is considered to be all balanced forests (with no interlinking edges between the trees of the forest). For such a balanced forest, the {\it quotient graph} contracts each tree in the forest to a single vertex, keeping multiple edges and loops. One then probabilistically selects a $\ell$-tree Cycle Walk step which is defined by (i) selecting a length $\ell$ cycle in the quotient graph; (ii) adding the $\ell$ edges of this cycle to the (unquotiented) forest, which creates a unique cycle $C$ involving $\ell$ of the forest's trees; and (iii) finding all $\ell$-tuples of edges of $C$ to remove from $C$ that result in a balanced forest. In the paper, the authors considered 1-tree and 2-tree Cycle Walk steps: A 1-tree Cycle Walk step selects a loop in step (i) and alters one of the trees representing a partition,
and a 2-tree Cycle Walk step draws two edges between two adjacent trees to create a cycle. 

In moving to $\ell>2$, one must establish a protocol for selecting cycles of size $\ell$ and then for removing edges in the resulting cycle to recover a balanced forest. The former step is non-trivial and was not addressed in original exposition; the latter step grows in complexity with the length of the cycle due to the combinatorics of being able to select multiple edges per cut when handling approximate balance conditions.
BUD provides a natural framework for the former step: by expanding the state space to trees splittable into balanced pieces (rather than balanced forests), adding a single edge generates a cycle involving some arbitrary number of (possible) partitions. Furthermore, when traversing the cycle, we only need to consider removing any single edge along the cycle to check if the remaining object is a balanced tree.

Under exact balance conditions, BUD also naturally generalizes to handle the second issue of selecting the $k-1$ remaining edges to move since there is a unique forest associated with any given splittable tree. For approximate balance, we can expand the BUD walk to operate on partitions/forests (rather than trees) by (i) first adding and removing an edge on the splittable tree, and (ii) randomly selecting from the set of $k-1$ edge-tuples which, if removed, would result in a balanced forest with $k$ trees. We can call the chosen $k-1$ edges `marked,' in that we do not remove them from the splittable tree, but selecting them defines a corresponding balanced forest. What is left to do is to then develop an algorithm to propose such a marked edge set given a balanced tree. We develop such an algorithm below in Section~\ref{sub:proposing_marked_edges}.
 The algorithm, called \Alg, is capable of proposing any possible set of marked edges. However, it is also capable of fixing a subset of marked edges and selecting the remainder.  A state space of $k$-splittable trees with $k-1$ marked split edges can be viewed as a `lifted' version of partitions or forests, where we project from splittable trees to balanced forests by removing the split edges to produce $k$ components. A random walk among marked $k$-splittable trees can thus be projected down to a random walk among balanced partitions or forests. 

Therefore, we can think of BUD (combined with marking edges via \Alg) as a generalization of the Cycle Walk in that (i) we can naturally propose cycles that span an arbitrary number of partitions and (ii) new marked edges are not necessarily limited to the cycle that we generate since we first find a $k$-balanced tree and only then find marked edges. In practice, however, it may be convenient to restrict the marked-edge domain when proposing a new marked-edge set as well as fix distant marked edges. We explore possible choices below in Section~\ref{sub:restricting_bud_for_computational_efficiency}; these ideas create a trade-off between algorithmic simplicity and generalized moves. Cycle Walk may be thought of one type of possible restriction in which marked edges on a cycle (formed within a BUD step) must re-appear on cycle, and marked-edges off the cycle are fixed.

\subsubsection{The marked edge walk}
Another recent and promising algorithm is called the Marked-Edge Walk, or MEW~\cite{mcwhorter2025MEW}. The state space for MEW is a tree with marked edges, just as in the method above. The evolution of the tree involves (A) adding a random edge to create a cycle, (B) removing a non-marked edge from the cycle, and then (C) moving a single marked edge (chosen indepdently of the cycle location) so that it still shares a node with the previous marked edge. 

The first step (A) of the MEW is similar to the first step in BUD. However, in step (B), BUD ensures the resulting tree is $k$-splittable whereas MEW does not. Additionally, as described in Section~\ref{sec:bud_as_a_practical_algorithm}, BUD doesn't avoid currently marked edges when picking an edge to remove. Finally, if a MEW step creates a non-balanced tree in this step, it handles this by simply rejecting the new proposed tree and trying again. 
For step (C), the MEW walk can be thought of as employing $\Alg$ in which we (i) restrict all but a single marked edge from changing, (ii) limit the possible edges we can cut (via the adjacency condition), and (iii) ignore the balance condition when marking a new edge and instead address it by rejecting (marked) non-balanced trees, that is, trees where the removal of the marked edges does not produce a balanced partition. 

Conceiving of the rejection conditions as added self-loops in the Markov chain, all of the moves of MEW are possible with a combination of BUD moves and restrictions on applying $\Alg$ to a subset of the marked-edges (in this case a single random choice with a restricted new edge selection). 
Thus BUD provides both a unified and general framework for discussing and analyzing both of these new and empirically promising algorithms, Cycle Walk and Marked Edge Walk.

\subsection{Characterizing BUD moves}\label{sec:BUD-exact-balance}

In this section we consider the case $\varepsilon = 0$, so each district must have population exactly $p(G)/k$. This is the setting in which our irreducibility results hold. As necessary preliminaries for our irreducibility results, the following sequence of results characterizes what the BUD walk looks like from the perspective of partitions rather than splittable trees, and gives conditions on partitions guaranteeing that one partition can be transformed into another by a BUD move or sequence of BUD moves. 
To begin, we note that in this case the set of marked edges (the edges whose removal produces $k$ connected components of the same size) is unique. 

\begin{lemma}\label{L:unique_cuts} If $T\in B^0_k(G)$, then $T$ has a unique set of $k - 1$ edges whose removal partitions $G$ into equipopulous pieces.  We denote this set as $T_0$.
\end{lemma}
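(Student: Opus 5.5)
Fix a root $r$ of $T$, and for each edge $e$ of $T$ let $c(e)$ be the vertex of $e$ farther from $r$. Let $D(e)$ be the set of descendants of $c(e)$, including $c(e)$ itself; this is the vertex set of the component of $T - e$ not containing $r$. Set $s(e) = p(D(e))$. The plan is to show that any set $F$ of $k-1$ edges whose removal yields $k$ equipopulous components must be exactly
\[
F^* = \{ e \in E(T) : s(e) \text{ is a positive integer multiple of } p(G)/k \}.
\]
Since $F^*$ is defined from $T$ alone, uniqueness then follows. Existence of at least one such set is the hypothesis $T \in B^0_k(G)$.

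First, $F \subseteq F^*$. Take $e \in F$. The subtree below $e$, namely $D(e)$, contains no further copy of the root side, so removing $F$ splits $D(e)$ into the components hanging below the cut edges of $F$ that lie in $D(e)$ (together with $e$ itself). Hence $D(e)$ is a disjoint union of whole components of $T - F$. Each component has population $p(G)/k$, so $s(e)$ is a positive multiple of $p(G)/k$.

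Second, $F^* \subseteq F$. Take $e \notin F$. Then $c(e)$ lies in the same component $K$ of $T - F$ as the parent endpoint of $e$. The set $D(e)$ is the disjoint union of $D(e) \cap K$ and the full components lying strictly below edges of $F$ inside $D(e)$. The part $D(e) \cap K$ contains $c(e)$, so it is nonempty. It is also a proper subset of $K$, because $K$ contains the parent endpoint of $e$, which lies outside $D(e)$.

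The main obstacle is turning this into a contradiction. Because all populations are positive, $0 < p(D(e)\cap K) < p(K) = p(G)/k$. So $s(e) \equiv p(D(e)\cap K) \not\equiv 0 \pmod{p(G)/k}$, and therefore $e \notin F^*$.

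The only subtlety is that the weights $p(v)$ are real numbers, so "multiple" means a real multiple of $p(G)/k$. The congruence argument still works because the leftover amount lies strictly between $0$ and $p(G)/k$; this is exactly where positivity of $p$ is used. Combining the two inclusions gives $F = F^*$, which is therefore the unique set $T_0$.
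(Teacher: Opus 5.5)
Your proof is correct, and it takes a different route from the paper.

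The paper fixes one valid cut set $T_0$ and views it as a spanning tree on the districts. It then peels off a leaf district $P$ together with its connecting edge $e$. No edge inside $P$ can be cut, since that would leave a component that is a proper piece of $P$ and hence underpopulated, so $e$ is forced into every valid set. Iterating shows that $T_0$ is contained in every valid cut set. Equality then follows because all such sets have $k-1$ edges.

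You instead give an intrinsic description that refers to no witness, namely $T_0 = \{e : s(e) \text{ is a positive integer multiple of } p(G)/k\}$, and you prove both inclusions directly. Your decomposition of $D(e)$ for $e \notin F$ is sound. Every component of $T - F$ other than $K$ that meets $D(e)$ lies entirely inside $D(e)$, because $e$ is the only edge leaving $D(e)$. Positivity then gives $0 < p(D(e)\cap K) < p(G)/k$, and you rightly flag this as the one place positivity is used. The paper's ``underpopulated'' step relies on positivity in exactly the same way.

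What your version buys: it needs no cardinality count, so it shows that any edge set leaving all components at population $p(G)/k$ equals $F^*$. It also translates directly into the linear-time subtree-sum procedure alluded to at the start of Section~\ref{sec:complexity}.

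What the paper's version buys: the leaf-peeling structure is exactly what it reuses in the proof of Lemma~\ref{lemPartitionCharacterization}. There it argues that the peeling proceeds identically for $T$ and $\tilde{T}$ until only the districts on the cycle remain.
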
 
\begin{proof}
Because $T \in B^0_k(G)$, $T$ is $0$-balanced $k$-splittable, meaning there exists at least one set of $k-1$ edges of $T$ whose removal partitions $G$ into $k$ connected components with population exactly $p(G)/k$. To show this set of edges is unique, let $T_0$ be one such set of edges, which we can regard as a spanning tree on the set of components of $T\setminus T_0$.  This spanning tree has at least one leaf (one component $P$ that is connected to only one other component by some edge $e\in T_0$).  The removal from $T$ of any edge within $P$ would result in a component that is underpopulated because it would comprise only a proper subgraph of $P$, so $e$ must be included in any valid set of marked edges.  This argument can be repeated after removing $P$ and $e$ to yield a smaller tree with at least one leaf, and so on until only one district remains.  Therefore all of the edges of $T_0$ must be included in any valid set of balanced cut edges.  
\end{proof}

\noindent There is a natural correspondence between balanced trees and balanced partitions of $G$.

\begin{definition}\label{def:k-balanced-partitions}
    Given a connected graph $G$, we say $\mP$ is a \emph{balanced partition of $G$} if $\mP$ is a partition of $V(G)$ where each part induces a connected subgraph of equal population. For $T \in B^0_k(G)$, the balanced partition given by $T$ is denoted by $\mP(T, k) = P_1, \dots, P_k$.
\end{definition}

Note that for any $T \in B^0_k(G)$, Lemma~\ref{L:unique_cuts} implies $\mP(T, k)$ is unique. For any balanced partition $\mP$ of $G$, let $G/\mP$ denote the quotient graph, which has one node for each district. Formally, $G/\mP$ can be obtained from $G$ by contracting each edge between vertices in the same part in $\mP$, keeping multiple edges. The number of edges in $G/\mP$ between the nodes corresponding to $P_i$ and $P_j$ equals the number of edges of $G$ connecting a vertex of $P_i$ to a vertex of $P_j$. Notice that $T$ can be determined from $\mP$ together with the following additional information: a spanning tree of each district, and a spanning tree of $G/\mP$ (which corresponds to the inter-district edges).  This added information can be freely altered via steps of the BUD walk, as we now show.

\begin{lemma}\label{lemInternalStep}
    For any pair of trees $T, \tilde{T} \in B_k^0(G)$ such that $\mP(T, k) = \mP(\tilde{T}, k)$, it is possible for the BUD walk to transition from $T$ to $\tilde{T}$ in at most $n - 1$ steps, where $n=|V(G)|$.
\end{lemma}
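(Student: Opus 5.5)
We will use the standard basis-exchange argument for spanning trees, and check at each step that the intermediate tree stays in $B^0_k(G)$ and has the same partition $\mP=\mP(T,k)=\mP(\tilde T,k)$. The key invariant is this: if a spanning tree $T'$ of $G$ restricts to a spanning tree on every part $P_i$, then $T'$ is automatically in $B^0_k(G)$ with $\mP(T',k)=\mP$. Indeed, the edges of $T'$ between different parts number exactly $(n-1)-\sum_i(|P_i|-1)=k-1$. Removing them leaves the $k$ connected parts, each of population $p(G)/k$. Lemma~\ref{L:unique_cuts} then identifies these as the cut edges. So it suffices to exhibit at most $n-1$ exchange moves from $T$ to $\tilde T$ through such trees; each such move has positive probability under Definition~\ref{def:BUD}.

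We do the exchanges one edge at a time. While $T'\neq\tilde T$, choose an edge $f\in E(\tilde T)\setminus E(T')$, add it to $T'$, and remove a suitable edge $e\in E(T')\setminus E(\tilde T)$ from the resulting cycle $C$. Each step increases $|E(T')\cap E(\tilde T)|$ by one, so there are at most $n-1$ steps, which gives the bound in the statement. We split into two phases.

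In the first phase we handle intra-district edges. Take $f$ inside some part $P_i$, with $f\in E(\tilde T)\setminus E(T')$. The path in $T'$ between the endpoints of $f$ stays inside $P_i$, since $T'|_{P_i}$ is a spanning tree of $P_i$. So $C\subseteq P_i$. Because $\tilde T|_{P_i}$ is acyclic, $C$ contains an edge $e\notin E(\tilde T)$, and this $e$ lies in $P_i$. After removing $e$, the restriction to $P_i$ is still a spanning tree of $P_i$ (an up-down exchange within $P_i$), and the other parts are untouched. Repeating this until every $T'|_{P_i}=\tilde T|_{P_i}$ keeps the invariant throughout.

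In the second phase we handle inter-district edges. Now $T'$ and $\tilde T$ agree inside every part, and their $k-1$ cross edges each form spanning trees of $G/\mP$. Take a cross edge $f\in E(\tilde T)\setminus E(T')$. The cycle $C$ projects to a cycle in $G/\mP$ through $f$. Since $\tilde T$'s cross edges form a tree in $G/\mP$, this projected cycle contains a cross edge $e\in E(T')\setminus E(\tilde T)$. Swapping $e$ for $f$ leaves the intra-part edges intact, so the invariant holds. I expect the main obstacle to be this step: justifying that the removed edge can always be chosen as a cross edge not in $\tilde T$, rather than an intra-part edge, which would break a district. This is exactly the matroid exchange argument applied in the quotient graph $G/\mP$, which is why I treat the two phases separately.
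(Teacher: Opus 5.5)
Your proposal is correct and takes essentially the same route as the paper. You first exchange edges inside each part to turn $T|_{P_i}$ into $\tilde T|_{P_i}$, then run the same basis exchange on the cross edges viewed as a spanning tree of $G/\mP$, for a total of at most $\sum_i(|P_i|-1)+(k-1)=n-1$ steps. Your explicit invariant (any spanning tree restricting to spanning trees on all parts lies in $B^0_k(G)$ with partition $\mP$), and your checks that an intra-part cycle stays inside $P_i$ and that the projected quotient cycle contains a cross edge outside $\tilde T$, fill in details the paper's proof leaves implicit.
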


\begin{proof}
    Write $\mP(T, k) = \mP(\tilde{T}, k) = (P_1, P_2, \dots, P_k)$. Any tree in $T \in B_k^0(G)$ decomposes into a set of $k$ spanning trees $T_1, T_2, \dots, T_k$, one on each part in $\mP(T, k)$, together with a spanning tree $T_0$ of $G / P(T, k)$. For $T$ and $\tilde{T}$, we write these decompositions as $(T_0, T_1, \dots, T_k)$ and $(\tilde{T}_0, \tilde{T}_1, \dots, \tilde{T}_k)$, respectively, so that for each $1\leq i\leq k$, $T_i$ and $\tilde{T}_i$ are spanning trees on $P_i$, whereas $T_0$ and $\tilde{T}_0$ are spanning trees of $G/\mP(T,k)$. For each $i \in \{1, 2, \dots, k\}$, we can transition from $T_i$ to $\tilde{T}_i$ in $\abs{P_k} - 1$ steps: at each step, we add any edge in $T_i$ that is not in the current tree, then remove a different edge in the cycle that is not in $T_i$ (which must exist since there are no cycles in $T_i$). None of these moves change the underlying partition, and so all intermediate trees are in $B_k^0(G)$. We then apply the same method to transition from $T_0$ to $\tilde{T}_0$ in a further $k - 1$ steps. This changes which edges are the split edges, but the underlying partition remains the same. The total number of steps is thus at most
    \begin{equation*}
        \sum_{i = 1}^k (\abs{P_i} - 1) + (k - 1) = \sum_{i = 1}^k \abs{P_i} - 1 = n - 1.  \qedhere
    \end{equation*}
\end{proof}

To decide whether the BUD walk is irreducible on $B^0_k(G)$, it therefore suffices to simply ask whether any balanced partition $\mP$ can be converted into any other balanced partition $\tilde{\mP}$, where we can choose whichever trees $T$ and $\tilde{T}$ satisfying $\mP(T,k) = \mP$ and $\mP(\tilde{T},k) = \tilde{\mP}$ are most convenient. Lemma~\ref{lemPartitionCharacterization} characterizes which changes to the underlying partitions are possible in a single BUD step.

\begin{lemma}\label{lemPartitionCharacterization}
    Let $\mP$ and $\tilde{\mP}$ be a pair of distinct $0$-balanced $k$-partitions of $G$. The following statements are equivalent:
    \begin{enumerate}
        \item\label{itmPCharTreesStatement} There exists a pair of trees $T, \tilde{T} \in B_k^0(G)$ such that $P(T, k) = \mP$, $P(\tilde{T}, k) = \tilde{\mP}$, and it is possible to transition from $T$ to $\tilde{T}$ in a single BUD walk step.
        \item\label{itmPCharPartitionsStatement} There exists a cycle in $G/\mP$ (which after re-indexing $\mP = (P_1, \dots, P_k)$ for notational convenience looks like $P_1\rightarrow P_2 \rightarrow \cdots\rightarrow P_\ell\rightarrow P_1$) and subgraphs $H_i\subset P_i$ for all $i\in[\ell]$ such that:
        \begin{enumerate}
            \item\label{itmPCharConnected} For each $i \in [\ell]$, $H_i$ is connected and $P_i \setminus H_i$ is connected.
            \item\label{itmPCharSwitch} $\tilde{\mP}$ is obtained from $\mP$ by transferring $H_i$ from $P_i$ to $P_{i+1}$ for all $i\in[\ell]$ (with indexes interpreted cyclically, so $\ell+1$ means $1$).
        \end{enumerate}
    \end{enumerate}
\end{lemma}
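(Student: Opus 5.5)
We prove both implications directly, keeping in mind two facts: a single BUD step replaces one edge $f\in T$ by one edge $e\notin T$, and, by Lemma~\ref{L:unique_cuts}, the split-edge sets $T_0$ and $\tilde T_0$ are unique.

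\emph{(1)$\Rightarrow$(2).} Let $\tilde T=T+e-f$, with $f$ on the cycle $C$ of $T+e$. First we look at how $C$ sits relative to $\mP$. Contracting each $P_i$ turns $T$ into the tree $T_0$ on $G/\mP$. If $e$ joins two vertices of the same district, then $C$ lies inside that district (the tree path within a district is the $T_i$ path). In that case removing $f$ either leaves $\mP$ unchanged, or $f\in T_0$; but then $f\notin C$, which is impossible. So this case yields $\tilde{\mP}=\mP$, contradicting distinctness. Hence $e$ joins two districts, and $C$ projects to a cycle $P_1\to\cdots\to P_\ell\to P_1$ of $G/\mP$ formed by $e$ and the $T_0$-path. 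This cycle is simple because $T_0$ is a tree.

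Within each $P_i$, $C$ enters at a vertex $a_i$ and leaves at $b_i$ along the $T_i$-path $\gamma_i$. Next we determine the new partition. Since $\tilde{\mP}\neq\mP$, we have $f\notin T_0$, so $f$ lies in some $\gamma_j$. In $\tilde T$, every $T_0$ edge on the cycle becomes removable only after being shifted. The natural candidate for $\tilde T_0$ is as follows: removing $f$ from $\gamma_j$ splits $T_j$, and the split edges must shift along the cycle. Concretely, orient $C$ so that each shifted split edge moves "forward". Then in each $P_i$ there is a subtree $H_i$ of $T_i$ hanging off $\gamma_i$ that is reassigned to $P_{i+1}$, and the new cut edge in $\gamma_i$ separates $H_i$ from $P_i\setminus H_i$. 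Both pieces are connected, being components of $T_i$ minus an edge. Here $\ell$ may need to be reduced to the districts whose parts actually move. Equal populations force $p(H_i)$ to be constant around the cycle.

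The main obstacle is making this rigorous. We must show that the unique $\tilde T_0$ consists of exactly one edge in each $\gamma_i$ (or in $T_0\cap C$), and that the transfers are consistent in one cyclic direction. We will argue by leaf-peeling as in Lemma~\ref{L:unique_cuts}. Districts off the cycle keep their cut edges, since their subtree populations are unchanged. On the cycle, each component of $\tilde T$ minus the candidate edges must have population $p(G)/k$, and this determines the cut positions successively around $C$.

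\emph{(2)$\Rightarrow$(1).} We construct $T$ from the $H_i$. For each $i$, let $u_i\in H_i$ and $v_{i+1}\in P_{i+1}\setminus H_{i+1}$ be the endpoints of an edge $g_i$ of $G$ realizing the quotient edge $P_i\to P_{i+1}$. Such endpoints can be chosen after possibly adjusting which quotient edges are used, given that $\tilde{\mP}$ has connected parts; this is where connectivity of the new parts $(P_i\setminus H_i)\cup H_{i-1}$ is used. Choose spanning trees of $H_i$ and of $P_i\setminus H_i$, joined by one edge $h_i$. Let $T$ contain these trees, $g_1,\dots,g_{\ell-1}$, and a spanning tree of the quotient extending this path. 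Adding $e=g_\ell$ creates the cycle through all the $h_i$ and $g_i$. Removing $f=h_1$ gives $\tilde T$, whose split edges $h_2,\dots,h_\ell,g_\ell$ (together with the untouched ones) cut exactly $\tilde{\mP}$. So $\tilde T\in B_k^0(G)$, and this is a valid BUD step.
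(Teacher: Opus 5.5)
Your (2)$\Rightarrow$(1) is the paper's construction: your $g_i,h_i$ are its $f_i,e_i$, and you are slightly more careful than the paper about completing $T$ away from the cycle. One slip: $g_\ell$ is not a split edge of $\tilde T$. It joins $H_\ell$ to $P_1\setminus H_1$, and both lie in the new district $(P_1\setminus H_1)\cup H_\ell$, so the new cut edges on the cycle are only $h_2,\dots,h_\ell$.

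The genuine gap is in (1)$\Rightarrow$(2). Your preliminary steps are correct and match the paper: the added edge $e$ joins two districts, $C$ projects to a simple cycle of $G/\mP$, and the removed edge $f$ lies in some $\gamma_j$. But the heart of this direction, identifying $\tilde T_0$, is only announced. Two facts are needed:
\begin{itemize}
\item[(a)] none of the $\ell$ quotient-cycle edges (the old split edges on $C$, together with $e$) is a split edge of $\tilde T$;
\item[(b)] the new split edges on $C$, together with $f$, lie exactly one in each $\gamma_i$.
\end{itemize}
Your hedges ``(or in $T_0\cap C$)'' and ``$\ell$ may need to be reduced'' show that (a) is not established. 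The fallback would not work anyway: deleting an unchanged $P_i$ from $P_1\to\cdots\to P_\ell$ does not leave a cycle of $G/\mP$.

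The paper closes the gap as follows. Peeling shows that $\tilde T_0$ agrees with $T_0$ away from $P_1,\dots,P_\ell$. Since $\tilde\mP\neq\mP$, $\tilde T_0$ contains no quotient-cycle edge, so $(\tilde T_0\setminus T_0)\cup\{f\}$ consists of exactly $\ell$ edges inside the cycle districts. Balance forbids two of them in one district, so pigeonhole gives one per district. The two sides of that edge in $T_i$ are $H_i$ and $K_i$, and the cyclic direction is forced because $H_i$ is attached to $K_{i+1}$ by a quotient edge.

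A clean way to make (a) and (b) rigorous uses $N=p(G)/k$. An edge of a tree in $B^0_k(G)$ is a split edge iff one of its sides has population in $N\mathbb{Z}$.
\begin{itemize}
\item The trees hanging off the vertices of $C$ are the same in $T$ and $\tilde T$, so the off-cycle split edges coincide.
\item On $C$, the split edges of $T$ plus $e$ (resp.\ of $\tilde T$ plus $f$) form a level set, of size $\ell$, of the cumulative population around $C$ taken mod $N$.
\item Distinct level sets are disjoint. If the two were equal, $f$ would be a split edge of $T$, forcing $\tilde\mP=\mP$. This gives (a).
\item Suppose two new level-set edges lay inside one $\gamma_i$. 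They would bound an arc whose population is a nonempty proper part of $P_i$ plus whole off-cycle districts, which is not in $N\mathbb{Z}$. Combined with pigeonhole, this gives (b).
\end{itemize}
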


\begin{proof}
(\ref{itmPCharPartitionsStatement}) $\implies$ (\ref{itmPCharTreesStatement}): Let $i\in[\ell]$.  By (\ref{itmPCharConnected}), there exists a spanning tree $T^H_i$ of $H_i$ and a spanning tree $T^K_i$ of $K_i := P_i\setminus H_i$.  Since $\mP_i$ is connected, there exists an edge $e_i$ linking these two trees.  Furthermore, since $\tilde{\mP}_{i+1}$ is connected, there exists an edge $f_i$ linking $T^H_i$ with $T^K_{i+1}$ (with indexes interpreted cyclically, so $\ell+1$ means $1$); see Figure~\ref{F:BUD_step}.  This is because $\tilde{\mP}_{i+1}$ has vertex set $V(H_i)\cup V(K_{i+1})$ according to (\ref{itmPCharSwitch}).

Let $T$ be the spanning tree of $G$ that includes all of the edges of all of the trees $\{T^H_i\mid i\in[\ell]\}$ and $\{T^K_i\mid i\in[\ell]\}$, together with all of the edges $\{e_i\mid i\in[\ell]\}$ and 
$\{f_i\mid i\in[\ell-1]\}$.  Then $P(T,k)=\mP$, and the BUD step that alters $T$ by adding $f_\ell$ and removing $e_\ell$  yields a tree $\tilde{T}$ for which $P(\tilde{T}, k) = \tilde{\mP}$.

(\ref{itmPCharTreesStatement}) $\implies$ (\ref{itmPCharPartitionsStatement}):  Decompose $T$ and $\tilde{T}$ respectively as $(T_0, T_1, \dots, T_k)$ and $(\tilde{T}_0, \tilde{T}_1, \dots, \tilde{T}_k)$, as in the proof of Lemma~\ref{lemInternalStep}.  Let $f$ be the added edge, and let $e$ be the removed edge in the BUD step from $T$ to $\tilde{T}$, and let $C$ denote the corresponding cycle in $G$.  The edge $f$ connects some pair of trees $T_i,T_j$ with $1\leq i,j\leq k$, and we know that $i\neq j$ because of the assumption that $\mP \neq\tilde{\mP}$. Thus, $f$ completes a cycle $C_0$ in $T_0$.  After re-indexing $\mP = (P_1, \dots, P_k)$, we can assume this cycle looks like
$P_1\rightarrow P_2 \rightarrow \cdots\rightarrow P_{\ell}\rightarrow P_1$ and that $f$ connects $P_\ell$ and $P_1$, so we will denote $f$ also as $f_\ell$.  Denote by $f_1,...,f_{l-1}$ the remaining edges of $C_0$, so that $f_i\in T_0$ connects $P_i$ with $P_{i+1}$.  Note that $C$ is a cycle in $G$, while $C_0$ is a cycle in $G/\mP$; both include the edges $f_1,...,f_\ell$ in that order, but $C$ also includes a path in $P_{i+1}$ between $f_i$ and $f_{i+1}$ for each $i\in[\ell]$.

We now claim that each edge in $\tilde{T}_0$ either lies in $T_0\setminus \{f_1,...,f_\ell\}$ or lies in $T_i$ for some $i\in[\ell]$.  For this, we first claim that $T_0$ and $\tilde{T}_0$ agree about the inclusion of any edge of $G$ which neither lies in one of the pieces $P_1,...,P_\ell$ nor connects a pair of these pieces.  This is because the proof of Lemma~\ref{L:unique_cuts} will progress through the same steps when applied to $T$ and $\tilde{T}$, until all districts have been removed except the ones along the cycle $C_0$.  Moreover, no edge of $\tilde{T}_0$ could be in $\{f_1,...,f_\ell\}$ because $\mP\neq\tilde{\mP}$.  In summary, the set $\tilde{T}_0\setminus T_0$ contains exactly $\ell-1$ edges, each of which lies in $P_j$ for some $j\in[\ell]$.  The same is true of the removed edge $e$, so the set $(\tilde{T}_0\setminus T_0)\cup\{e\}$ contains exactly $\ell$ edges, each of which lies in  $P_j$ for some $j\in[\ell]$.  No two such edges could lie in the same part because of population balance, so we can enumerate them as $e_1,...,e_{\ell}$, where $e_j\in T_j$ for each $j\in[\ell]$.

Let $j\in[\ell]$.  For the two components of the partition of $P_j$ obtained by removing $e_j$ from $T_j$, let $H_j$ be the one adjacent to $f_j$ and let $K_j$ be the other one, which is adjacent to $f_{j-1}$.  It is now straightforward to show that (\ref{itmPCharConnected}) and (\ref{itmPCharSwitch}) hold.

\end{proof}

\begin{figure}[bht!]\centering
\includegraphics[width=5.5in]{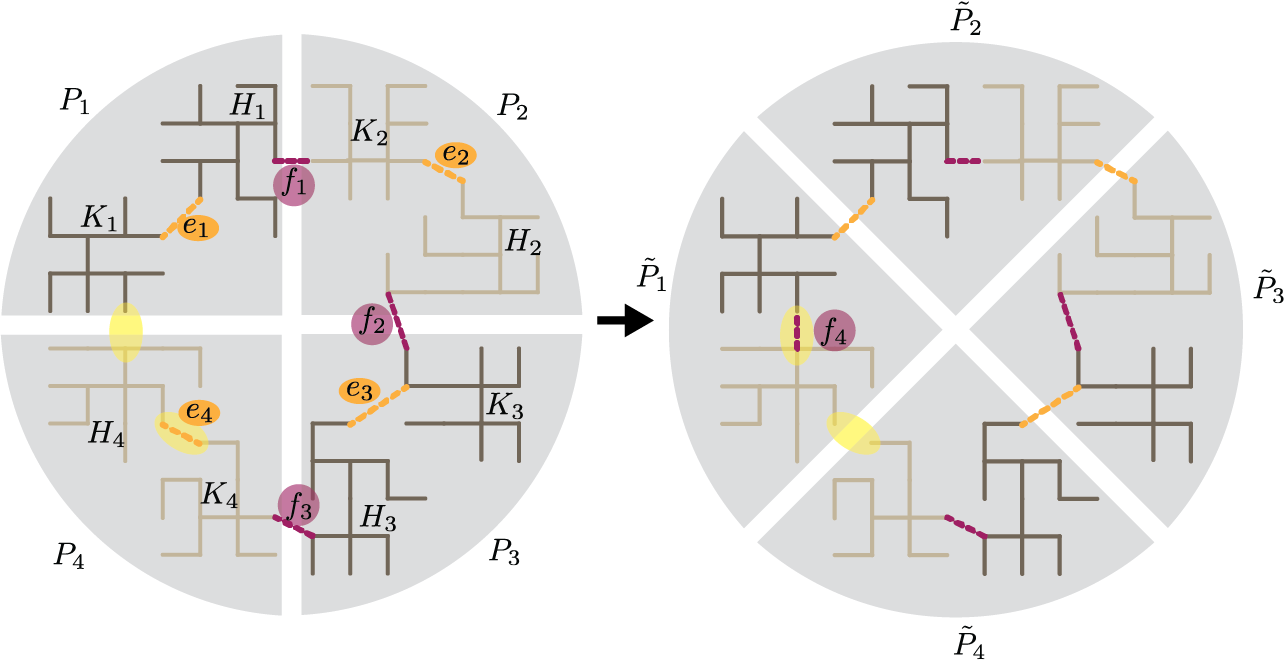}
\caption{The BUD step in the proof of Lemma~\ref{lemPartitionCharacterization} with $k=4$.}\label{F:BUD_step}
\end{figure}

If $\mP$ and $\tilde{\mP}$ satisfy the conditions of Lemma~\ref{lemPartitionCharacterization}, then we will abuse language by saying that they differ by a single \emph{BUD step on partitions}.

Notice that Lemma~\ref{lemPartitionCharacterization} assumes that both $\mP$ and $\tilde{\mP}$ are balanced partitions, which means by definition that they have connected parts.  If instead we only assume that $\mP$ has connected parts, and if there exists a cycle satisfying conditions (2a) and (2b), then the following additional assumption is required to ensure that $\tilde{\mP}$ also has connected parts:
\begin{enumerate}
\item[(2c)] For each $i \in [\ell]$, $H_i$ is adjacent to $P_{i+1} \setminus H_{i+1}$.
\end{enumerate}

Lemma~\ref{lem:counting-characterization} provides a counting-based characterization of when $\mP$ and $\tilde{\mP}$ are connected by a set of disjoint BUD steps on partitions.
\begin{lemma}\label{lem:counting-characterization}
  Let $\mP$ and $\tilde{\mP}$ be a pair of distinct balanced $k$-partitions of $G$.  The following are equivalent:
  \begin{enumerate}
      \item $\mP$ and $\tilde{\mP}$ differ by a set of disjoint BUD steps on partitions.
      \item There exists a spanning tree on each piece of $\mP$ and on each piece of $\tilde{\mP}$ such that the number of pieces in $\mP$ but not in $\tilde{\mP}$ equals the number of edges in the union of the trees of $\mP$ but not in union of the trees of $\tilde{\mP}$.
  \end{enumerate}
\end{lemma}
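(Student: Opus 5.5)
The plan is to prove both directions by comparing the two spanning forests $F = \bigcup_j T_j$ and $\tilde F = \bigcup_j \tilde T_j$, where $T_j$ and $\tilde T_j$ are the spanning trees chosen on the pieces of $\mP$ and $\tilde{\mP}$. Both forests have exactly $n-k$ edges. I read ``a set of disjoint BUD steps on partitions'' as a family of vertex-disjoint cycles in $G/\mP$, each carrying subgraphs $H_i$ as in Lemma~\ref{lemPartitionCharacterization}, with all transfers performed simultaneously. I also use that in each such step $H_i$ and $P_i\setminus H_i$ are nonempty, since otherwise some new part would have the wrong population.

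For (1) $\Rightarrow$ (2), I reuse the construction in the proof of Lemma~\ref{lemPartitionCharacterization}.
\begin{enumerate}
\item For each cycle $P_1\to\cdots\to P_\ell\to P_1$, choose spanning trees $T^H_i$ of $H_i$ and $T^K_i$ of $K_i=P_i\setminus H_i$, an edge $e_i$ joining them, and an edge $f_i$ joining $T^H_i$ to $T^K_{i+1}$.
\item Take the tree on $P_i$ to be $T^H_i\cup T^K_i\cup\{e_i\}$, and the tree on the new piece $H_i\cup K_{i+1}$ to be $T^H_i\cup T^K_{i+1}\cup\{f_i\}$.
\item On pieces untouched by any cycle, use the same tree in both partitions.
\end{enumerate}
Then $F\setminus\tilde F=\{e_i\}$ over all cycles, which is exactly one edge per changed piece. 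Since every piece on a cycle genuinely changes, the two counts agree.

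For (2) $\Rightarrow$ (1), let $D=F\setminus\tilde F$ with $|D|=m$, the number of pieces of $\mP$ not in $\tilde{\mP}$. Since $|F|=|\tilde F|$, we also have $|\tilde F\setminus F|=m$.

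Key claim: every changed piece $P$ of $\mP$ contains at least one edge of $D$. Otherwise all edges of the tree on $P$ lie in $\tilde F$, so $P$ lies inside a single piece of $\tilde{\mP}$. Equal populations then force the two to coincide, a contradiction. Counting now gives exactly one edge $e_P\in D$ in each changed piece and none in unchanged pieces. By the symmetric argument, each changed piece of $\tilde{\mP}$ contains exactly one edge of $\tilde F\setminus F$.

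Hence $F\cap\tilde F$ is a forest in which each changed old piece splits into exactly two subtrees, which I call halves. Each changed new piece is likewise a union of exactly two halves. Every half, being connected in $F\cap \tilde F$, lies in a single new piece. The two halves of one old piece cannot lie in the same new piece, since again that would force equality by population.

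Now form the auxiliary graph whose vertices are the changed old and new pieces, with one edge for each half, joining the old piece containing it to the new piece containing it. This graph is bipartite and 2-regular, so it is a disjoint union of even cycles $P_1,\tilde P_2,P_2,\tilde P_3,\dots$. Along each cycle, let $H_i$ be the half of $P_i$ that lies in $\tilde P_{i+1}$ and $K_{i+1}$ the other half of $P_{i+1}$, so that $\tilde P_{i+1}=H_i\cup K_{i+1}$.

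It remains to check the conditions of Lemma~\ref{lemPartitionCharacterization} for each cycle.
\begin{itemize}
\item Condition (2a) holds because halves are subtrees, so both $H_i$ and $K_i=P_i\setminus H_i$ are connected.
\item Condition (2b) holds by construction.
\item The cyclic sequence $P_1,\dots,P_\ell$ is a cycle in $G/\mP$: connectivity of $\tilde P_{i+1}$ supplies an edge of $G$ between $H_i\subseteq P_i$ and $K_{i+1}\subseteq P_{i+1}$.
\end{itemize}
When $\ell=2$ the cycle consists of two parallel edges, which is allowed because $G/\mP$ keeps multiple edges. The cycles are vertex-disjoint in $G/\mP$, so they are disjoint BUD steps.

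The main obstacle is the counting step that pins down exactly one $D$-edge per changed piece on both sides. That step is what makes the half-incidence structure 2-regular. I will handle it by stating the population-equality argument once and applying it symmetrically to $\mP$ and $\tilde{\mP}$.
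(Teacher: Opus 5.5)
Your proof is correct and follows the paper's argument. The direction $(1)\Rightarrow(2)$ reuses the $T^H_i, T^K_i, e_i, f_i$ construction from Lemma~\ref{lemPartitionCharacterization}. The direction $(2)\Rightarrow(1)$ uses the same population-based count, forcing exactly one differing edge per changed piece. Your 2-regular bipartite ``half-incidence'' graph is a cleaner version of the paper's ``re-index and chain until an index repeats'' step. You also usefully make explicit two points the paper leaves implicit: why the two halves of a piece land in different new pieces, and why consecutive pieces are adjacent in $G/\mP$.

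One small quibble concerns the nonemptiness of $H_i$ and $P_i\setminus H_i$. This should be justified by each BUD step being nontrivial, not by population alone. A cycle with all $H_i=\emptyset$, or with all $H_i=P_i$, preserves every population while changing nothing.
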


\begin{proof}

   $(1) \implies (2)$: If $\mP$ and $\tilde{\mP}$ differ by a single BUD step on partitions, then there exists a cycle  $P_1\rightarrow P_2 \rightarrow \cdots\rightarrow P_\ell\rightarrow P_1$ in $G/\mP$ as in Lemma~\ref{lemPartitionCharacterization}, and these $\ell$ pieces are the only pieces on which $\mP$ and $\tilde{\mP}$ differ.  Furthermore, the proof of the lemma constructs trees $T^H_1,...,T^H_l, T^K_1,...,T^K_l$ that can be connected by edges $e_1,...,e_\ell$ to become trees on the pieces of $\mP$, and also can be connected by edges $f_1,...,f_\ell$ to become trees on the pieces of $\tilde{\mP}$.  In particular, $e_1,...,e_l$ are the only edges that are in the union of the trees on $\mP$ but not in the union of the trees on $\tilde{\mP}$.  Thus, the number of pieces in $\mP$ but not in $\tilde{\mP}$ equals the number of edges in the union of the trees of $\mP$ but not in union of the trees of $\tilde{\mP}$.

   If $\mP$ and $\tilde{\mP}$ differ by a set of of disjoint BUD steps on partitions, then there is a set of disjoint cycles as above, and we arrive at the same conclusion.

    $(2) \implies (1)$:  Assume there exist spanning trees on the pieces of $\mP$ and $\tilde{\mP}$ such that the number of pieces in $\mP$ but not in $\tilde{\mP}$ equals the number of edges in $\mP$ but not in $\tilde{\mP}$, and call this number $\ell$.  Here we're abusing language by saying that an edge is in $\mP$ if it is in the spanning tree on one of the pieces of $\mP$.  Note that $\ell$ is also the number of pieces in $\tilde{\mP}$ but not in $\mP$ and the number of edges in $\tilde{\mP}$ but not in $\mP$.

    Let $P_1,...,P_\ell$ be the pieces of $\mP$ that are not in $\tilde{\mP}$, and let $\tilde{P}_1,...,\tilde{P}_\ell$ be the pieces of $\tilde{\mP}$ that are not in $\mP$.  Let $e_1,...,e_\ell$ be the edges of $\mP$ that are not in $\tilde{\mP}$, and let $f_1,...,f_\ell$ be the edges of $\tilde{\mP}$ that are not in $\mP$.  First notice that each piece $P_i$ contains at least one and therefore exactly one of $\{e_1,...,e_\ell\}$, so we can re-index so that $P_i$ contains $e_i$ for each $i\in[\ell]$.  Similarly, each piece $\tilde{P}_i$ contains exactly one of $\{f_1,...,f_\ell\}$.  
    
    For each $i\in[\ell]$, the removal of $e_i$ splits the tree on $P_i$ into two trees, which we denote $T^H_i$ on $H_i$ and $T^K_i$ on $K_i$.  Since $e_i$ is the only edge in $P_i$ that's not in $\tilde{P}$, each of $K_i$ and $H_i$ is contained in a single piece of $\tilde{\mP}$.
    
    After re-indexing and swapping the labels of $K_1$ and $H_1$ if necessary, we can assume that $\tilde{P}_1$ contains $K_1$ and $\tilde{P}_2$ contains $H_1$, and that $f_1$ connects $H_1$ to $K_2$.  Similarly, we can assume that $\tilde{P}_2$ contains $K_2$ and $\tilde{P}_3$ contains $H_2$, and that $f_2$ connects $H_2$ to $K_3$, and so on until an index is repeated.  If all indices have been used, then $\mP$ and $\tilde{\mP}$ differ by a single BUD step on partitions.  If not, then repeating the argument with the remaining pieces shows that they differ by a set of disjoint BUD steps on partitions.
\end{proof}

In Section~\ref{sub:irreducibility-triominoes} we will consider triomino tilings of a rectangular grid $G$, so $|V(G)| = 3k$.  In this case, there is a \emph{unique} spanning tree on each piece of any partition, so we can unambiguously say that an edge is in a partition if it is in the spanning tree of one of the pieces of the partition.  The previous lemma gives the following.
\begin{corollary}
  Let $\mP$ and $\tilde{\mP}$ be a pair of different balanced $k$-partitions of $G$ such that each piece of each partition has a unique spanning tree.  The following are equivalent:
  \begin{enumerate}
      \item $\mP$ and $\tilde{\mP}$ differ by a set of disjoint BUD steps on partitions.
      \item The number of pieces in $\mP$ but not in $\tilde{\mP}$ equals the number of edges in $\mP$ but not in $\tilde{\mP}$.
  \end{enumerate}
\end{corollary}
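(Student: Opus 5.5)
This corollary should be a direct specialization of Lemma~\ref{lem:counting-characterization}. Statement (2) of that lemma asks for the \emph{existence} of a spanning tree on each piece of $\mP$ and on each piece of $\tilde{\mP}$ with a certain counting property. Under the present hypothesis, each piece has a unique spanning tree. So the existential quantifier ranges over exactly one choice of trees, and statement (2) of the lemma becomes exactly statement (2) of the corollary. Concretely, "an edge is in $\mP$" means an edge of the unique spanning tree of some piece of $\mP$. Statement (1) is identical in both. The equivalence therefore follows immediately from Lemma~\ref{lem:counting-characterization}, provided its hypotheses hold: $\mP$ and $\tilde{\mP}$ must be distinct balanced $k$-partitions, which is assumed.

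I would write the proof in the following order.
\begin{enumerate}
\item Observe that, because each piece has a unique spanning tree, the phrase ``edge in $\mP$'' is well defined. It means an edge of $G$ lying in the unique spanning tree of some piece of $\mP$; the same holds for $\tilde{\mP}$.
\item For (1)$\implies$(2): Lemma~\ref{lem:counting-characterization} gives spanning trees with the stated counting property. By uniqueness, these must be the canonical trees, so the count in (2) of the corollary holds.
\item For (2)$\implies$(1): the canonical trees witness condition (2) of the lemma, which then yields (1).
\end{enumerate}

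There is essentially no obstacle. The only point needing care is to note explicitly that uniqueness of spanning trees collapses the existential choice in the lemma, so that both directions transfer verbatim.
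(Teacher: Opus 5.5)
Your proposal is correct and follows the paper's own argument. The paper notes that uniqueness of the spanning tree on each piece makes ``an edge in $\mP$'' unambiguous, and then reads the corollary off Lemma~\ref{lem:counting-characterization} directly, with the existential choice of trees collapsing just as you describe.
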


\section{Irreducibility of BUD}\label{sec:irreducibility}

In this section we focus on the BUD walk over \emph{exactly balanced} trees, so the population tolerance is $\varepsilon = 0$. Despite this strict balance requirement, we show that BUD is irreducible in two natural settings. In Section~\ref{sub:irreducibility-simple}, we consider subgraphs $G$ of the infinite square lattice whose boundary is a simple closed loop, and show that BUD is irreducible over $B^0_2(G)$. In Section~\ref{sub:irreducibility-triominoes}, we consider rectangular subgraphs $G$ of the infinite square lattice, and show that BUD is irreducible whenever $k = |V(G)| / 3$, i.e., each district is of size $3$.

\subsection{Irreducibility for simple grid graphs}\label{sub:irreducibility-simple}

We start by defining a large class of subgraphs of the infinite grid.

\begin{definition}\label{def:simple-grid-graph}
    A subgraph $G$ of the infinite square lattice $\Z^2$ is called a \emph{simple grid graph} if $G$ is comprised of all the vertices and edges that are on and interior to a simple closed loop $\alpha$ in $\Z^2$. We call the set of vertices of $\alpha$ the \emph{boundary} of $G$ and denote it by $\partial G$. The edges of $\alpha$ are called \emph{boundary edges} of $G$.
\end{definition}

Recall that $B^0_2(G)$ is the set of spanning trees $T$ of $G$ that contain an edge whose removal partitions $T$ into two trees with equal populations. More generally, if $a, b \in \N$ with $a + b = |V(G)|$, let $B^{a,b}_2(G)$ denote the set of spanning trees $T$ of $G$ that contain an edge whose removal partitions $T$ into two trees with sizes $a$ and $b$, respectively. Our goal is to prove the following theorem.

\begin{theorem}\label{thm:irreducible-simple}
    For any $n, a, b \in \N$ such that $a + b = n$, if $G$ is a simple grid graph with $n$ vertices, then  the BUD walk is irreducible on $B_2^{a,b}(G)$ and has diameter $O(n)$.
\end{theorem}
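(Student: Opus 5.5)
By Lemma~\ref{lemInternalStep}, adapted verbatim to $B^{a,b}_2(G)$ (with $a\neq b$ the cut edge is still unique, by the leaf argument of Lemma~\ref{L:unique_cuts}), any two trees inducing the same red/blue 2-partition $(R,B)$ with $|R|=a$, $|B|=b$ are joined by at most $n-1$ BUD steps. BUD steps are reversible. It therefore suffices to show that every connected 2-partition reaches a single canonical partition $\mP^*$ within $O(n)$ partition-changing steps, where each step interleaves at most $O(1)$ tree adjustments, or tree adjustments whose total cost is amortized to $O(n)$. For $k=2$, Lemma~\ref{lemPartitionCharacterization} says a BUD step on partitions is the following move: choose connected $H_R\subset R$ with $R\setminus H_R$ connected, and connected $H_B\subset B$ with $B\setminus H_B$ connected, with $|H_R|=|H_B|$ so that sizes are preserved, and swap them, subject to (2c). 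I take $\mP^*$ to be the partition minimizing the total number of column components. For example, sweep the columns left to right and color the first $a$ vertices in this order red, adjusted so that both parts are connected.

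The first step is to reduce to the case where both $R$ and $B$ are simply connected (Claim~\ref{lem:reduce-to-simple}). If, say, $B$ has a hole, that hole is filled by a red region. I would take a red leaf piece enclosed by $B$ and exchange it with an equal-size blue piece adjacent to the outer part of $R$; swapping single vertices one at a time suffices. Since $G$ is simply connected, iterating this empties every hole, and the number of moves is bounded by the number of enclosed vertices.

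Once both colors are simply connected, both appear on $\alpha$, each as a single arc. Fix the column-maximal trees: every vertical edge inside a color is used, and horizontal edges connect column components. The second step is Lemma~\ref{L:interior_reduction}. Suppose there is a red sandwiched leaf component $L_R$ and a blue sandwiched leaf component $L_B$. I would build the cycle of Lemma~\ref{lemPartitionCharacterization} by walking from $L_R$ through the red tree, crossing to $L_B$, and returning through the blue tree. Swapping an end vertex of $L_R$ (one adjacent to blue) with an end vertex of $L_B$ keeps all four sets connected, because removing the end of a vertical segment that is a leaf component does not disconnect anything. Repeating these swaps shrinks $L_R$ and $L_B$ at the same rate until one of them disappears. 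This strictly decreases the number of column components, and the total work is proportional to the number of vertices moved. The third step is Lemma~\ref{L:no_sandwich}: iterate until one color has no sandwiched leaf components. Then every leaf column component of that color touches $\alpha$.

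The final step, which I expect to be the main obstacle, handles leaf components that touch $\alpha$. Near one of the two red/blue transition points on $\alpha$, I would exchange a vertex at the boundary end of a red leaf column with a blue vertex next to the transition point. The effect is either to delete a leaf component or to slide the transition point one position along $\alpha$. Simplicity of $\alpha$ is exactly what lets the transition points travel around the whole boundary without passing through cut vertices. The delicate part is a potential function, for example the number of column components together with the arc position of the transition points measured against $\mP^*$, that strictly decreases under every move. The case analysis must cover corners of $\alpha$ and columns of $G$ that are themselves disconnected. I would attempt this case analysis first on the local $3\times 3$ neighborhoods of the transition points.

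For the diameter, each vertex is moved $O(1)$ times on average. The $O(n)$ column-component reductions and $O(n)$ boundary slides then give $O(n)$ partition steps. The tree adjustments between partition steps are local, since the needed trees differ only near the moved vertices, so the total stays $O(n)$.
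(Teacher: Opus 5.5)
Your outline follows the paper's proof up through the pairwise elimination of sandwiched leaves. The endgame, which you yourself flag as the main obstacle, is left as a plan, and that is where most of the proof's content lies. Two ingredients are missing.

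\emph{Remaining sandwiched leaves.} Once (say) blue has no sandwiched leaf, the remaining \emph{red} sandwiched leaves, which need not touch $\alpha$, still have to be removed, and nothing in your sketch does this. The paper pairs each one with a blue \emph{transition} node as the opposite changer. First, Lemma~\ref{L:all_boundary} (no blue sandwiched leaf forces every blue column component to meet $\partial G$) is used to reroute every horizontal edge of the blue tree onto a boundary edge (Corollary~\ref{C:all_boundary}); this is what makes the relevant portion of the transition node a genuine subtree with an anchor. Progress is then tracked by $\Phi=(\text{number of nodes of }\mathcal{C}(T))+3\cdot(\text{number of blue vertices of }\partial G)$. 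The weighting is needed because recoloring part of a transition node can create up to two new column components. Your ``either delete a leaf or slide the transition point'' has no such control, and you never show that any potential decreases.

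\emph{The target.} Your canonical $\mP^*$ is not well defined. In a C-shaped simple grid graph, the first $a$ vertices in column-sweep order can leave the complement split between the two arms. The phrase ``adjusted so that both parts are connected'' is unspecified, and minimizers of the column-component count are not unique. The paper needs no canonical partition. When neither color has a sandwiched leaf, every column component is a boundary node, so the coloring is determined by the two transition points on $\alpha$ and the single height at which the color switches in each bicolored stripe. The paper normalizes both the start and, by reversibility, the target to this form. It then matches these parameters monotonically: rotate until $r_1$ agrees, align $r_2$ using a transition node and a half-stripe as changers, then match stripe heights using two half-stripes. Each phase costs $O(n)$ steps.

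There are also smaller errors.
\begin{enumerate}
\item For $a\neq b$ the cut edge is not unique: on a path, the $a$-th and $b$-th edges both split it into sizes $a$ and $b$. This is harmless, since the argument of Lemma~\ref{lemInternalStep} only needs a fixed partition to remain a valid split.
\item Removing an end vertex of a leaf column disconnects its color when the leaf's only horizontal connection is at that end. You must shrink from the end away from the attaching edge. When $L_R$ sits directly above $L_B$, the natural anchors are the very vertices being swapped; the paper handles this by re-choosing the attaching edges and invoking Lemma~\ref{L:tentacle}.
\item One-vertex swaps do not justify the $O(n)$ diameter. Eliminating one node can cost a whole column length, and vertices can be recolored repeatedly, so ``$O(1)$ moves per vertex on average'' is unsupported. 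Lemma~\ref{L:k2} swaps equal-size subtrees of any size, which lets the paper remove a whole half-leaf in one or two steps, giving $O(1)$ steps per node removed.
\item In the hole-filling step the number of red vertices is fixed, so ``bounded by the number of enclosed vertices'' is not a progress measure; you need the distance of the red region to $\partial G$. Each recolored vertex must also be a non-cut vertex of its current color class. The shortest-path and ``bad vertex'' analysis of Lemma~\ref{lem:reduce-to-simple} supplies both.
\end{enumerate}
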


\noindent Each ``tentacle'' of a simple grid graph has width at least two; more precisely, 
\begin{lemma}\label{L:tentacle}
Each vertex of a simple grid graph $G$ has a neighbor in $G$ to its left or right. 
\end{lemma}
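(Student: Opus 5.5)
The plan is to pass from the lattice graph to the polyomino of unit squares bounded by $\alpha$. Then every vertex of $G$ is a corner of some unit square lying inside $\alpha$, and one horizontal side of that square is the required edge.

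\textbf{Step 1: the closed region is a union of closed unit squares.} Since $\alpha$ consists of lattice edges, it lies on the grid lines. So each open unit square of the lattice is disjoint from $\alpha$. Being connected, each such open square lies entirely in the interior or entirely in the exterior of $\alpha$, by the Jordan curve theorem. Let $\mathcal{Q}$ be the set of closed unit squares whose interiors lie inside $\alpha$. The interior region $R$ of $\alpha$ is a bounded open set, and the grid lines have empty interior. Hence $R$ is contained in the union of the closed squares of $\mathcal{Q}$. Conversely, every closed square in $\mathcal{Q}$ is contained in the closure $\overline{R} = R\cup\alpha$. Therefore $\overline{R}=\bigcup_{Q\in\mathcal{Q}} Q$.

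\textbf{Step 2: every vertex of $G$ is a corner of a square in $\mathcal{Q}$.} Let $v$ be a vertex of $G$, so $v\in\overline{R}$ is a lattice point. By Step 1, $v$ lies in some $Q\in\mathcal{Q}$. A lattice point lying in a closed unit square must be one of its four corners.

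\textbf{Step 3: extract the horizontal neighbor.} The square $Q$ has exactly one horizontal side containing $v$. This side joins $v$ to $v\pm(1,0)$, with the sign depending on which corner $v$ is. The whole side lies in $Q\subseteq\overline{R}$, so it is on or interior to $\alpha$. By Definition~\ref{def:simple-grid-graph}, both its endpoint $v\pm(1,0)$ and the edge itself belong to $G$. Thus $v$ has a neighbor in $G$ to its left or right.

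\textbf{Expected obstacle.} The only delicate point is Step 1, namely justifying that the region bounded by a lattice loop is exactly a union of closed unit squares. This matters because two boundary vertices at distance one need not be joined by an edge of $G$: the segment between them could run through the exterior. The argument avoids this by using an edge that is a side of an interior square, which is automatically contained in $\overline{R}$.
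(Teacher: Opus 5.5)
Your proof is correct, but it is organized differently from the paper's.

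The paper fixes a positively oriented parameterization of $\alpha$ and uses the Jordan Curve Theorem to see that the unit square on the left of each boundary edge lies in the interior. It then argues locally:
\begin{enumerate}
\item a non-boundary vertex has all four lattice neighbors in $G$;
\item a boundary vertex is handled by checking the directions of its incident boundary edges: a horizontal edge of $\alpha$ directly supplies a left or right neighbor, while an ``up'' or ``down'' edge forces the left or right neighbor, respectively, into $G$.
\end{enumerate}

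You instead prove one global fact: $R\cup\alpha$ is exactly the union of the closed unit squares whose interiors lie inside $\alpha$. After that, every vertex, on the boundary or not, is a corner of such a square, and the horizontal side of that square through $v$ is the required edge.

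What your route buys:
\begin{enumerate}
\item It avoids orientation, the interior/boundary split, and the case analysis on edge directions.
\item It makes explicit that the edge itself, not merely the neighboring vertex, belongs to $G$. The paper's phrase ``lies in $G$'' leaves this implicit. It matters for exactly the reason you flag: $G$ need not be an induced subgraph of $\Z^2$.
\end{enumerate}

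What the paper's route buys: it sets up the positively oriented parameterization and the edge-direction bookkeeping that are reused later in the section, e.g.\ in the case analysis in the proof of Lemma~\ref{L:no_sandwich}. So it also serves as infrastructure.

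Your Step~1 is terse but sound. Boundedness makes the family of squares finite, so their union is closed. Since grid lines have empty interior, $R$ lies in the closure of $R$ minus the grid lines, and that set is contained in the union of those open squares lying inside $\alpha$.
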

\begin{proof}
Let $\alpha:\{0,...,l\}\rightarrow\partial G$ be a parameterization of $\partial G$ (with $\alpha(0)=\alpha(l)$).  Let $\bar\alpha:[0,l]\rightarrow \R^2$ be the corresponding piecewise linear loop in $\R^2$.  According to the Jordan Curve Theorem, $\bar\alpha$ divides $\R^2$ into a bounded ``interior'' region and an unbounded ``exterior'' region. Furthermore, we can choose $\alpha$ and $\bar\alpha$ to be \emph{positively oriented} parameterizations, which intuitively means that the interior is on one's left as one traverses $\partial G$ this way.  More precisely, let $R_{90}$ rotate vectors in $\R^2$ counterclockwise by 90 degrees. Then $R_{90}(\bar\alpha'(t))$ points to the interior for all smooth points $t\in(0,l)$, which means that $\bar\alpha(t)+s\cdot R_{90}(\bar\alpha'(t))$ lies in the interior for all sufficiently small positive values of $s$ (and hence for all values of $s$ up to the first intersection with the integer grid).

It follows that if $v$ is a node of $G$ that is not in $\partial G$, then all four of its neighbors in $\Z^2$ must lie in $G$.  
Moreover, for an ``up'' edge (which means $\alpha(t+1)$ is above $\alpha(t)$) the left $\Z^2$-neighbors of $\alpha(t)$ and $\alpha(t+1)$ must lie in $G$.  Similarly, the right neighbors of a ``down'' edge lie in $G$, the top neighbors of a ``right'' edge lie in $G$, and the bottom neighbors of a ``left'' edge lie in $G$.  In all cases, the left or right $\Z^2$-neighbor of a boundary vertex lies in $G$.
 \end{proof}

We first require the following, which is essentially the $k=2$ version of Lemma~\ref{lemPartitionCharacterization}, but is generalized to $B_2^{a,b}(G)$ and framed in terms of BUD steps on trees rather than on partitions.
\begin{lemma}\label{L:k2}
    Let $G$ be a connected graph.  Let $T\in B_2^{a,b}(G)$ and let $T_1,T_2$ be subtrees (obtained by removing an edge of $T$) with $|V(T_1)|=a$ and $|V(T_2)|=b$.  For each $i\in\{1,2\}$, let $e_i$ be an edge of $T_i$ whose removal partitions $T_i$ into subtrees called $K_i$ and $H_i$.  Assume that $|V(H_1)|=|V(H_2)|$.  Assume that some vertex of $H_1$ is adjacent in $G$ to some vertex of $K_2$, and that some vertex of $H_2$ is adjacent in $G$ to some vertex of $K_1$.  Then $1$ or $2$ BUD steps suffice to transform $T$ into a tree $\tilde{T}\in B_2^{a,b}(G)$ whose corresponding subtrees $\tilde{T}_1,\tilde{T}_2$ have vertex sets $V(\tilde{T}_1)=V(K_1)\cup V(H_2)$ and $V(\tilde{T}_2)=V(K_2)\cup V(H_1)$.  
\end{lemma}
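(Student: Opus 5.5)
The plan is a direct case analysis on where the connecting edge of $T$ sits. Write $T = T_1 \cup T_2 \cup \{g\}$, where $g$ is the edge of $T$ joining $T_1$ and $T_2$ (the edge whose removal gives the $(a,b)$ split). Fix an edge $f_1$ of $G$ joining $H_1$ to $K_2$ and an edge $f_2$ of $G$ joining $H_2$ to $K_1$; these exist by hypothesis, and neither lies in $T$ because $g$ is the only edge of $T$ between $V(T_1)$ and $V(T_2)$.

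Since each $T_i$ is $H_i$ and $K_i$ joined by $e_i$, the tree $T$ is determined up to internal edges by which of $H_1,K_1$ and which of $H_2,K_2$ the edge $g$ touches. That gives four cases, and they pair up under the symmetry $1\leftrightarrow 2$. Throughout I use the size identity $|V(K_1)|+|V(H_2)| = a - |V(H_1)| + |V(H_2)| = a$. So any tree whose removal of one edge leaves the vertex sets $V(K_1)\cup V(H_2)$ and $V(K_2)\cup V(H_1)$ lies in $B_2^{a,b}(G)$. Also, a BUD step is available whenever the removed edge yields a tree in $B_2^{a,b}(G)$; I only need existence, not the uniform choice.

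\emph{One-step cases.} Suppose $g$ joins $H_1$ to $K_2$. Add $f_2$; the resulting cycle is $K_1 \xrightarrow{e_1} H_1 \xrightarrow{g} K_2 \xrightarrow{e_2} H_2 \xrightarrow{f_2} K_1$ (plus internal paths), so it contains $e_2$. Remove $e_2$. The new tree is $H_2 -f_2- K_1 -e_1- H_1 -g- K_2$. Cutting $e_1$ in it separates $V(K_1)\cup V(H_2)$ from $V(H_1)\cup V(K_2)$, so this is the required $\tilde T$. The case where $g$ joins $H_2$ to $K_1$ is symmetric: add $f_1$ and remove $e_1$.

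\emph{Two-step cases.} Suppose $g$ joins $H_1$ to $H_2$. First add $f_1$; the cycle runs $H_1 \xrightarrow{f_1} K_2 \xrightarrow{e_2} H_2 \xrightarrow{g} H_1$. Remove $g$ to get the path-like tree $K_1 -e_1- H_1 -f_1- K_2 -e_2- H_2$. This intermediate tree is in $B_2^{a,b}(G)$, since cutting $f_1$ leaves $V(T_1)$ and $V(T_2)$. Next add $f_2$ (from $H_2$ to $K_1$), which closes a cycle through all four blocks. Remove $e_2$ to get $H_2 -f_2- K_1 -e_1- H_1 -f_1- K_2$, and cutting $e_1$ gives the target split.

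If instead $g$ joins $K_1$ to $K_2$, first add $f_2$ and remove $g$. This gives $H_1 -e_1- K_1 -f_2- H_2 -e_2- K_2$, which is valid because cutting $f_2$ leaves $V(T_1)$ and $V(T_2)$. Then add $f_1$ and remove $e_1$ to get $K_1 -f_2- H_2 -e_2- K_2 -f_1- H_1$; cutting $e_2$ gives the target split.

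The main point requiring care is not the construction but the verification at each step. For each step I must check that the removed edge really lies on the cycle created by the added edge, which I will argue from the block-path structure of the tree. I must also check that every intermediate tree is in $B_2^{a,b}(G)$, exhibiting an explicit split edge each time as above. Finally, I will note that no other adjacencies matter: the internal trees of $H_i$ and $K_i$ are never touched, so the vertex sets of the final parts are exactly as claimed.
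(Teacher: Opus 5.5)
Your proposal is correct and is essentially the paper's argument. The paper spends one BUD step, if needed, replacing the connecting edge $g$ by an $H_1$--$K_2$ edge, then adds an $H_2$--$K_1$ edge and removes $e_2$; your four-case analysis spells this out, with the $K_1$--$K_2$ case using the mirror-image order. One trivial slip: when $g$ itself joins $H_1$ to $K_2$, your fixed $f_1$ could equal $g$, so ``neither lies in $T$'' can fail. You never add $f_1$ in that case, so nothing breaks.
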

Lemma~\ref{L:k2} is pictured in Figure~\ref{F:Kequals2}. The initial configuration on the left colors $T_1$ red and $T_2$ blue.  The ending configuration on the right colors $\tilde{T}_1$ red and $\tilde{T}_2$ blue.  The lemma can be phrased in terms of colors as follows.  To turn a red subtree $H_1$ to blue and simultaneously turn a same-sized blue subtree $H_2$ to red, we only require that each $H_i$ is adjacent to a node (called its \emph{anchor}) that is and will remain the color it wants to turn.  Note that the term ``subtree'' implies that  removing $H_1$ would leave a \emph{connected} red piece (and similarly removing $H_2$ would leave a connected blue piece).

\begin{proof}
The lemma statement doesn't name the edge of $T$ whose removal partitions $T$ into $T_1$ (red) and $T_2$ (blue).  Let's call this edge $f_1$.  Using just one BUD step, we can make $f_1$ be any edge in $G$ that connects $T_1$ to $T_2$, so we can assume $f_1$ connects $H_1$ to $K_2$.  Let $f_2$ denote an edge of $G$ that connects $H_2$ to $K_1$.  Exactly as in the proof of Lemma~\ref{lemPartitionCharacterization}, the BUD step that adds $f_2$ and removes $e_2$ has the desired effect.
\end{proof}

\begin{figure}[bht!]\centering
\includegraphics[width=4in]{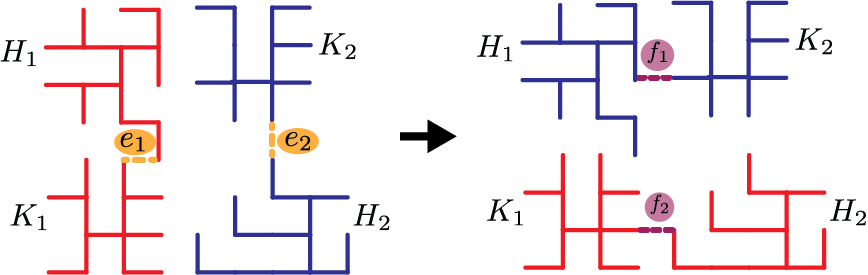}
\caption{The $k=2$ version of Figure~\ref{F:BUD_step}, where $T_1, \tilde{T}_1$ are red and $T_2, \tilde{T}_2$ are blue.}\label{F:Kequals2}
\end{figure}

The remainder of this section is devoted to the proof of Theorem~\ref{thm:irreducible-simple}.  Let $G$ denote a simple grid graph.  Let $T\in B_2^{a,b}(G)$, and let $T_1,T_2$ be two subtrees (obtained by removing an edge of $T$) with $|V(T_1)|=a$ and $|V(T_2)|=b$.  Let $G_1$ and $G_2$ denote the induced subgraphs of $G$ that are spanned by $T_1$ and $T_2$ respectively.

\begin{claim}\label{clm:not-cut-means-leaf}
    Let $T$ be a spanning tree of a graph $G$ with maximum degree $d$. If $v \in V(T)$ is not a cut vertex of $G$, then we can make $O(d)$-many BUD steps to make a spanning tree $T'$ of $G$ where $v$ is a leaf.
\end{claim}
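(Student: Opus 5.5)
The plan is to repeatedly lower the degree of $v$ in the current tree by one, each time using a single BUD step. Each step adds an edge of $G$ that avoids $v$ and removes one tree edge incident to $v$. Since $\deg_T(v)\le d$, at most $d-1$ such steps will make $v$ a leaf.

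\textbf{One step.} Root the current tree $T$ at $v$. Let $u_1,\dots,u_m$ be the neighbors of $v$ in $T$, where $m=\deg_T(v)\le d$, and let $S_1,\dots,S_m$ be the vertex sets of the subtrees hanging below them. These sets partition $V(G)\setminus\{v\}$.
\begin{itemize}
\item If $m\ge 2$, we use that $v$ is not a cut vertex, so $G-v$ is connected. Hence there is an edge $g=xy$ of $G$, not incident to $v$, with $x\in S_i$ and $y\in S_j$ for some $i\neq j$. This edge is not in $T$, since tree edges never join distinct subtrees $S_i,S_j$.
\item Adding $g$ to $T$ creates the unique cycle
\[
C = (\text{path } x\to u_i \text{ in } S_i),\ u_i v,\ v u_j,\ (\text{path } u_j\to y \text{ in } S_j),\ g .
\]
\item We remove the edge $vu_j$, which lies on $C$. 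The result $T'$ is again a spanning tree, and $\deg_{T'}(v)=m-1$.
\item In $T'$ the subtree $S_j$ is now attached to $S_i$, so the new subtrees at $v$ are $S_i\cup S_j$ together with the other $S_k$. Thus the same argument applies to $T'$.
\end{itemize}

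\textbf{Induction.} Iterating until $m=1$ uses exactly $\deg_T(v)-1\le d-1=O(d)$ steps, after which $v$ is a leaf.

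\textbf{Main obstacle.} The real issue is validity: each intermediate tree must remain in the BUD state space $B_2^{a,b}(G)$. The intended use, which I would make explicit, is to apply the claim inside one part, as in Lemma~\ref{lemInternalStep}. That is, we work with $G_1$ (or $G_2$) in place of $G$, with $T_1$ in place of $T$, and with $v$ not a cut vertex of $G_1$.
\begin{itemize}
\item Every added edge $g$ then lies inside that part, and every removed edge $vu_j$ lies inside the same part.
\item So the vertex sets of the two sides are unchanged, and each intermediate tree still contains the original cut edge $f_1$ splitting it into pieces of sizes $a$ and $b$.
\item Hence every intermediate tree is in $B_2^{a,b}(G)$, each step is a legal BUD step, and the partition is preserved.
\end{itemize}
If the claim is instead meant literally for all of $G$, the same argument applies to unrestricted up-down steps. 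In that reading I would add one observation: when the cycle $C$ passes through the current cut edge, we still remove $vu_j$, and the degree-reduction argument is unaffected.
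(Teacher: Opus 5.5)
Your proof is correct and is essentially the paper's argument. The paper likewise uses connectivity of $G-v$ to find an edge avoiding $v$ that reconnects the tree after a tree edge $vu$ at $v$ is deleted, and repeats this $\deg(v)-1$ times. You also ensure each step stays in $B_2^{a,b}(G)$ by running the argument inside one color class $G_i$ with $T_i$. The paper's proof leaves that point implicit, though it is exactly how the claim is used in Lemma~\ref{lem:reduce-to-simple}.
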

\begin{proof}
    Let $u$ be a neighbor of $v$ in $T$, and denote by $e$ the edge connecting $v$ to $u$. If $v$ is not a leaf, let $T_1, T_2$ be two of the subtrees created by removing $e$. Since $v$ is not a cut vertex of $G$, the graph $G - v$ is connected, so there exists an edge $f \in E(G - v)$ that connects $T_1$ to $T_2$. Therefore the BUD step that adds $f$ and removes $e$ creates a new spanning tree $T'$ where $v$ does not neighbor $u$. Repeating this process a total of $\deg(v) - 1 \leq d$ times will turn $v$ into a leaf.
\end{proof}

\begin{claim}\label{clm:exists-ripe}
    Let $G_1, G_2$ be induced subgraphs of a simple grid graph $G$ as described above. There exists a vertex $v \in G_1$ that is not a cut vertex of $G_1$ and is adjacent to a vertex $u \in G_2$. The analogous statement holds if we swap the role of each subgraph.
\end{claim}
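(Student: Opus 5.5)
The plan is a block–cut-vertex extremal argument that uses only one global fact about $G$: a simple grid graph has no cut vertices. Let $S \subseteq V(G_1)$ be the set of vertices of $G_1$ that have a neighbor in $G_2$. Since $G$ is connected and $G_1, G_2$ are nonempty and partition $V(G)$, we have $S \neq \emptyset$. Recall that $G_1$ is connected, since it is spanned by $T_1$. If $|V(G_1)| = 1$, its unique vertex lies in $S$ and is trivially not a cut vertex of $G_1$, so assume $|V(G_1)| \ge 2$. Suppose for contradiction that every vertex of $S$ is a cut vertex of $G_1$.

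\emph{Extremal choice.} Among all pairs $(s, C)$ with $s \in S$ and $C$ a connected component of $G_1 - s$, choose one minimizing $|V(C)|$.

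\emph{$C$ avoids $S$.} Suppose some $s' \in S$ lies in $C$. By assumption $s'$ is a cut vertex of $G_1$, so $G_1 - s'$ has at least two components. At most one of them contains $s$, so let $C'$ be a component of $G_1 - s'$ with $s \notin C'$. The set $C'$ is connected in $G_1 - s'$ and avoids $s$. It is also adjacent to $s'$, which lies in $C$, and $G_1$ has no edges between $C$ and $V(G_1)\setminus(C\cup\{s\})$. Hence $C' \subseteq C \setminus \{s'\}$, so $|V(C')| < |V(C)|$, contradicting minimality. Therefore $C \cap S = \emptyset$.

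\emph{Contradiction via 2-connectivity of $G$.} No vertex of $C$ is adjacent to $G_2$, because $C \cap S = \emptyset$. Within $G_1$, the vertices of $C$ are adjacent only to vertices of $C \cup \{s\}$. So in $G$, every edge leaving $C$ goes to $s$. The set $V(G_2)$ is nonempty and disjoint from $C \cup \{s\}$, so removing $s$ disconnects $C$ from $G_2$ in $G$. Thus $s$ is a cut vertex of $G$.

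The remaining ingredient, and the main obstacle, is to show that a simple grid graph has no cut vertex. The paper asserts this equivalence in Section~\ref{sub:tech:irred}, and I would prove it directly.

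\begin{itemize}
\item \emph{Interior vertices.} By the proof of Lemma~\ref{L:tentacle}, every non-boundary vertex has all four lattice neighbors in $G$. Following a straight horizontal ray from a non-boundary vertex, we stay in $G$ until we first hit $\partial G$.
\item \emph{Removing a boundary vertex $x$.} The cycle $\alpha$ minus $x$ is still a path, so it is connected. For any remaining vertex $y$, either the leftward or the rightward ray from $y$ reaches $\partial G \setminus \{x\}$ without passing through $x$. This holds because $x$ cannot lie on both rays, and if $y$ is itself on the boundary there is nothing to check.
\item \emph{Removing an interior vertex $x$.} The cycle $\alpha$ is untouched. Every other vertex $y$ is connected to $\alpha$ by a horizontal ray, or by a vertical ray if $x$ blocks the horizontal one.
\end{itemize}

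It follows that $G - x$ is connected for every $x$, which contradicts the conclusion that $s$ is a cut vertex of $G$. Some care is needed in a degenerate case: when $\alpha$ has length $4$, so $G$ is a unit square with $n = 4$, the claim is immediate by inspection.

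Finally, the argument never used anything specific to $G_1$ beyond its connectivity and the nonemptiness of $G_2$. Swapping the roles of $G_1$ and $G_2$ therefore gives the analogous statement.
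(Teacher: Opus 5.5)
Your proof is correct, and it differs from the paper's mainly in how the final contradiction is obtained.

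Both arguments use an extremal choice to produce an interface vertex $v$ of $G_1$ together with a component of $G_1 - v$ containing no vertex adjacent to $G_2$. The paper takes the interface vertex $v^*$ farthest from a fixed interface vertex $v_0$ in the path metric of $G_1$; you instead minimize $|V(C)|$ over pairs $(s,C)$.

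The paper then argues locally, checking the colorings of the $3\times 3$ box around $v^*$ (Figure~\ref{fig:simple-cut-gives-ripe}) against the simplicity of $G$. You instead observe that such a component makes $s$ a cut vertex of $G$ itself. You then prove separately, via horizontal and vertical rays from interior vertices to the loop $\alpha$, that a simple grid graph has no cut vertices.

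Your decomposition gains three things:
\begin{itemize}
\item It makes explicit that the only property of $G$ needed is that $G$ has no cut vertex. Your extremal argument works verbatim in any such graph.
\item It supplies an actual proof of 2-connectivity. The paper only asserts this in Section~\ref{sub:tech:irred} and relies on it elsewhere, e.g.\ in Lemma~\ref{lem:reduce-to-simple}.
\item It is easier to verify than the figure-based box analysis.
\end{itemize}
The paper's route has the advantage of staying local to $v^*$.

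Your separate treatment of the unit square is unnecessary: the boundary-vertex case of your ray argument already covers it.
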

\begin{proof}
    Color the vertices in $G_1$ and $G_2$ red and blue, respectively, and say that a vertex is \emph{on the interface} if it has a differently-colored neighbor. Let $R_I$ be the set of red vertices that are on the interface. Choose an arbitrary $v_0 \in R_I$, which exists since $G$ is connected. If $v_0$ is not a cut vertex of $G_1$ we are done. Otherwise, let $v^*$ be the farthest vertex in $R_I$ from $v_0$, using the path metric through $G_1$. We claim $v^*$ is not cut. Otherwise, removing $v^*$ would induce a red connected component $R^+$ that does not interface with $G_2$. Consider the $3 \times 3$ ``box" of vertices around $v^*$. There is no way to color the vertices of the box without contradicting either the fact that $v^*$ is cut, that $R^+$ does not interface with $G_2$, or that $G$ is simple. One can see this by choosing any $v^+ \in R^+$ that neighbors $v^*$ and attempting to color the remaining vertices in the box, see Figure~\ref{fig:simple-cut-gives-ripe}.
\end{proof}

\begin{figure}[h]
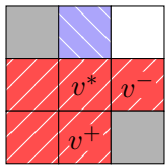

    \centering
    \begin{tikzgrid}[rows=3, cols=3]
        \colorCell{2}{2}{myred}
        \colorCell{2}{3}{myred}
        \colorCell{3}{2}{myred}
        \colorCell{3}{3}{mygray}
        \colorCell{3}{1}{myred}
        \colorCell{2}{1}{myred}
        \colorCell{1}{1}{mygray}
        \colorCell{1}{2}{myblue}

        \patternCell{1}{2}{sparse north west lines}{white}
        \patternCell{2}{2}{sparse north east lines}{white}
        \patternCell{2}{3}{sparse north east lines}{white}
        \patternCell{3}{2}{sparse north east lines}{white}
        \patternCell{3}{1}{sparse north east lines}{white}
        \patternCell{2}{1}{sparse north east lines}{white}

        \labelCell{2}{2}{$v^*$}
        \labelCell{2}{3}{$v^-$}
        \labelCell{3}{2}{$v^+$}
   
    \end{tikzgrid}
    \drawGrid[rows=3, cols=3, show indices=false, cell size=0.7]
    \caption{The configuration (modulo symmetry) were $v^*$ to be a cut vertex in the proof of Claim~\ref{clm:exists-ripe}. Vertices are represented by square cells, with northeastern strokes denoting red vertices and northwestern strokes denoting blue vertices. The vertex $v^-$ is closer to $v_0$ than $v^*$ is, via the path metric through $G_1$. Then there exists another red connected component $R^+$ (induced by deleting $v^*$) which is farther from $v_0$ than $v^*$ is, and therefore cannot interface with any blue vertices. To avoid contradictions, one must color three vertices of $R^+$, in this case on the bottom-left, forcing the top neighbor of $v^*$ to be blue. But then there is no coloring of the top-left and bottom-right vertices that avoids contradiction.}
    \label{fig:simple-cut-gives-ripe}
\end{figure}

\begin{lemma}\label{lem:reduce-to-simple}
$O(n)$ BUD steps suffice to transform $T$ to a configuration in which $G_1$ and $G_2$ are both simply connected; that is, both colors intersect $\partial G$ (neither color fully surrounds the other).    
\end{lemma}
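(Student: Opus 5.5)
Suppose, say, that $G_2$ (blue) does not meet $\partial G$. Then $G_1$ surrounds $G_2$. The plan is to move blue vertices outward one at a time, in exchange for red vertices, until blue reaches the boundary. Each exchange keeps both colour classes connected and is realized with $O(1)$ BUD steps through Lemma~\ref{L:k2}, applied with $|V(H_1)|=|V(H_2)|=1$.

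For a single exchange we need a blue vertex $u$ and a red vertex $v$ such that:
\begin{enumerate}[label=(\roman*)]
\item $u$ is not a cut vertex of $G_2$, and $v$ is not a cut vertex of $G_1$;
\item $u$ has a red neighbour other than $v$, to serve as its anchor;
\item $v$ has a blue neighbour other than $u$, to serve as its anchor.
\end{enumerate}
Given such a pair, Claim~\ref{clm:not-cut-means-leaf} makes $u$ a leaf of $T_2$ and $v$ a leaf of $T_1$ in $O(1)$ BUD steps, since the maximum degree is $4$. Lemma~\ref{L:k2} then swaps their colours in $1$ or $2$ further steps. If the anchors must be $u$ and $v$ themselves, the construction degenerates, and we handle that case by choosing a slightly different pair.

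To make progress, measure the potential $\Phi=\min_{x\in G_2} d(x,\partial G)$, where $d$ is graph distance in $G$, or alternatively the number of red vertices strictly inside the outer interface of blue. We pick $v$ to be a red vertex adjacent to blue that lies on a shortest path from blue to $\partial G$, so that $v$ is one step closer to the boundary. We pick $u$ as provided by Claim~\ref{clm:exists-ripe} applied with the roles swapped: a non-cut blue vertex on the interface, chosen far from $v$, for instance farthest from $v$ in the path metric of $G_2$, by the same argument as that claim. After the swap, blue contains $v$, and the shortest-path witness advances. Repeating along a fixed shortest path from $G_2$ to $\partial G$, which has length $O(n)$, takes $O(n)$ swaps, each costing $O(1)$ steps, for $O(n)$ steps in total.

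The main obstacle is guaranteeing that the chosen red $v$ is not a cut vertex of $G_1$. This matters because red is annular around blue, so a red vertex on the path between blue and the boundary could disconnect the ring. We argue locally: since $G$ is simple (Lemma~\ref{L:tentacle}) and blue is enclosed, the red region around blue contains a closed loop. Among the red interface vertices on the outward side we choose one whose removal leaves red connected, using the $3\times 3$ box argument of Claim~\ref{clm:exists-ripe}. If the vertex on the shortest path is cut, we slide along the red interface to an adjacent non-cut one, which does not increase $\Phi$ by more than a bounded amount, and we amortize. Holes of red inside blue are handled symmetrically, by swapping roles and absorbing enclosed components; this again costs $O(n)$ total.
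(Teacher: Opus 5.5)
Your plan matches the paper's. The enclosed colour advances one vertex at a time along a path through the surrounding colour toward $\partial G$. Each advance is paid for by handing back a non-cut interface vertex of the enclosed colour elsewhere, and each trade costs $O(1)$ steps via Claim~\ref{clm:not-cut-means-leaf} and Lemma~\ref{L:k2} with singleton $H_i$. (The paper colours the enclosed side red, so its path $u_1,\dots,u_s$ runs through blue.) But the step you flag as the main obstacle is left open, and it is the heart of the lemma.

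At every stage you need the next surrounding-colour vertex on the path not to be a cut vertex of the \emph{current} surrounding-colour subgraph. Your primary rule does not guarantee this. In your colouring, a red vertex adjacent to blue on a shortest path from blue to $\partial G$ can be a cut vertex of red. For example, take a vertex at a reflex corner of $\partial G$ that touches blue and is the only red neighbour of a one-vertex red pocket cupped by a U of blue. Even if the first path vertex is fine, turning earlier path vertices blue can leave red pockets hanging only on the next one.

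Your fallback (slide and amortize) does not repair this, for three reasons:
\begin{itemize}
\item Nothing shows that an adjacent non-cut interface vertex exists. The farthest-point argument of Claim~\ref{clm:exists-ripe} produces \emph{some} non-cut interface vertex, with no control over where it is.
\item Once $v$ is off a shortest path, the trade need not decrease $\Phi$.
\item Removing $u$ can increase $\Phi$, since your rule for choosing $u$ does not prevent it from being the unique minimizer.
\end{itemize}
So $\Phi$ is not monotone. A bounded increase per slide, with no bound on the number of slides, gives nothing to amortize against, and neither termination nor the $O(n)$ bound follows. The remark about holes of red inside blue does not help either. The surrounding colour contains all of $\partial G$, so the only thing to preserve is its connectivity, which is exactly this non-cut condition.

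The missing idea is to choose the whole path in advance so that it can be consumed in order, rather than repairing it locally. The paper calls $P_{u_1}=(u_1,\dots,u_s)$ \emph{bad} if some $u_i$ is a cut vertex of the surrounding colour with $u_1,\dots,u_{i-1}$ deleted. It restricts $u_1$ to surrounding-colour vertices that are non-cut and adjacent to the enclosed colour; this set is nonempty by Claim~\ref{clm:exists-ripe}. It takes such a $u_1$ minimizing $|P_{u_1}|$. It then argues with the $3\times 3$ box around a putative bad $u_i$ that this path is not bad, distinguishing collinear from non-collinear $u_{i-1},u_i,u_{i+1}$ and re-choosing the start when $s=3$. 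After that it performs $s<n$ trades.

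The enclosed-colour vertex $v'\neq v$ that is traded away comes from the corners of a leaf block of that colour's block-cut tree. This is what settles your ``degenerate anchor'' case. Finally, when the enclosed colour is a single vertex, Lemma~\ref{L:k2} cannot apply at all: that side has nothing left for the incoming vertex to attach to. The paper handles this case directly, by making a boundary vertex (never a cut vertex of $G$) a leaf via Claim~\ref{clm:not-cut-means-leaf}.
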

\begin{proof}
    Suppose $T \in B_2^{a,b}(G)$ is such that only one of $G_1, G_2$ intersects $\partial G$. If $\min\{a, b\} = 1$, all we need to do is make any $v \in V(T) \cap \partial G$ a leaf. Since $G$ is a simple grid graph, the boundary is a single-color cycle, so no boundary vertex is a cut vertex. Therefore we can use Claim~\ref{clm:not-cut-means-leaf}, along with the fact that grid graphs have constant degree, to make $O(1)$-many BUD steps and make $v$ a leaf. 
    
    Otherwise, suppose that $G_1$ consists of at least two red vertices and does not intersect the boundary. Given any blue vertex $u_1 \in V(G_2)$, let $P_{u_1} = \{u_1, u_2, \dots, u_s\} \subseteq V(G_2)$ be the vertices of a minimal-length path from $u_1$ to the boundary $\partial G$, so $u_s$ is the only vertex of $P_{u_1}$ in $\partial G$. Consider the set $\mathcal{S}$ of all blue vertices $u$ such that $u$ is not a cut vertex of $G_2$ and is adjacent to a red vertex. $S$ is nonempty by Claim~\ref{clm:exists-ripe}. 

    We say that $u_1 \in S$ induces a \emph{bad path $P_{u_1}$} if there exists a \emph{bad vertex} $u_i \in P_{u_1}$ that is a cut vertex of the blue subgraph $G_2 - \{u_1, \dots, u_{i-1}\}$. We claim there exists a $u_1 \in S$ that does not induce a bad path. Choose any $u_1 \in S$ which minimizes $|P_{u_1}|$. If there exists a bad $u_i \in P_{u_1}$, it must be that $2 \leq i \leq s - 1$, so $s \geq 3$ and all four neighbors of $u_i$ must be blue. Consider the $3 \times 3$ box of vertices around $u_i$, see Figure~\ref{fig:bad-paths}. If $u_{i-1}$ and $u_{i+1}$ are collinear, the minimality of $P_{u_1}$ forces the two neighbors of $u_{i+1}$ (in the box) to be blue, which means that $u_i$ and hence $P_{u_1}$ cannot be bad. If $u_{i-1}$ and $u_{i+1}$ are not collinear and $u_i$ is bad, then there must be a red vertex in the box. If $s > 3$ this contradicts the minimality of $P_{u_1}$, and otherwise we can re-pick $u_1 = u_{i-1}$ to be collinear with $u_{i+1}$, reducing to the first case.

    \begin{figure}[h]
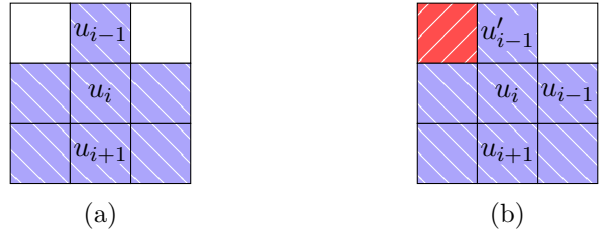

    \centering
        \begin{subfigure}{0.32\linewidth}
            \centering
            \begin{tikzgrid}[rows=3, cols=3]

                \colorCell{1}{2}{myblue}
                \colorCell{2}{1}{myblue}
                \colorCell{2}{2}{myblue}
                \colorCell{2}{3}{myblue}
                \colorCell{3}{1}{myblue}
                \colorCell{3}{2}{myblue}
                \colorCell{3}{3}{myblue}

                \patternCell{1}{2}{sparse north west lines}{white}
                \patternCell{2}{1}{sparse north west lines}{white}
                \patternCell{2}{2}{sparse north west lines}{white}
                \patternCell{2}{3}{sparse north west lines}{white}
                \patternCell{3}{1}{sparse north west lines}{white}
                \patternCell{3}{2}{sparse north west lines}{white}
                \patternCell{3}{3}{sparse north west lines}{white}

                \labelCell{2}{2}{$u_i$}
                \labelCell{3}{2}{$u_{i+1}$}
                \labelCell{1}{2}{$u_{i-1}$}
           
            \end{tikzgrid}
            \drawGrid[rows=3, cols=3, show indices=false, cell size=0.8]
            \caption{}
            \label{fig:bad-paths-a}
        \end{subfigure}
        \begin{subfigure}{0.32\linewidth}
            \centering
            \begin{tikzgrid}[rows=3, cols=3]
                \foreach \i in {1,2,3}{
                    \foreach \j in {1,2,3}{ 
                        \labelCell{\i}{\j}{}
                        \colorCell{\i}{\j}{white}
                    }
                } 

                \colorCell{1}{1}{myred}
                \colorCell{1}{2}{myblue}
                \colorCell{2}{1}{myblue}
                \colorCell{2}{2}{myblue}
                \colorCell{2}{3}{myblue}
                \colorCell{3}{1}{myblue}
                \colorCell{3}{2}{myblue}
                \colorCell{3}{3}{myblue}

                \patternCell{1}{1}{sparse north east lines}{white}

                \patternCell{1}{2}{sparse north west lines}{white}
                \patternCell{2}{1}{sparse north west lines}{white}
                \patternCell{2}{2}{sparse north west lines}{white}
                \patternCell{2}{3}{sparse north west lines}{white}
                \patternCell{3}{1}{sparse north west lines}{white}
                \patternCell{3}{2}{sparse north west lines}{white}
                \patternCell{3}{3}{sparse north west lines}{white}

                \labelCell{2}{2}{$u_i$}
                \labelCell{3}{2}{$u_{i+1}$}
                \labelCell{2}{3}{$u_{i-1}$}
                \labelCell{1}{2}{$u'_{i-1}$}
           
            \end{tikzgrid}
            \drawGrid[rows=3, cols=3, show indices=false, cell size=0.8]
            \caption{}
            \label{fig:bad-paths-b}
        \end{subfigure}
    \caption{Helpful diagrams for Lemma~\ref{lem:reduce-to-simple}. Vertices are represented by square cells, with northwestern strokes denoting blue vertices and northeastern strokes denoting red vertices. In Case (a), $u_i$ is not a bad vertex, since its removal cannot increase the number of connected components of $G_2 - \{u_1, \dots, u_{i-1}\}$. In Case (b), the path $u_{i-1}, u_i, u_{i+1}$ is bad, but the path $u'_{i-1}, u_i, u_{i+1}$ reduces to Case (a) so is not bad.}
    \label{fig:bad-paths}
\end{figure}

    By the argument above, we can choose a $u_1 \in S$ that induces a path $P_{u_1}$ that is not bad. Let $v \in V(G_1)$ be a red neighbor of $u_1$. We claim there exists vertices $v' \neq v, u' \neq u_1$ (red and blue, respectively) such that $v'$ is not a cut vertex of $G_1$ and is adjacent to $u' \in G_2$. Consider a red block $L \subseteq G_1$ which corresponds to a leaf of the block-cut tree of $G_1$. The \emph{top-right corner} of $L$ is the the top vertex in $L$ among all of the rightmost vertices in $L$, and we can define the other corners similarly. Note that each corner will have at least two blue neighbors, one of which will not be $u_1$. Moreover, $L$ has at most one cut vertex. If $L$ has at least two non-cut corners, we can choose $v'$ to be one that is not $v$. Otherwise, there must be two such leaves in the block-cut tree, so we can still choose such a $v'$.
    
    Since both $v'$ and $u_1$  are non-cut vertices, by Claim~\ref{clm:not-cut-means-leaf}, in $O(1)$-many BUD steps we can make $v'$ a leaf of $T_1$, and $u_1$ only connected to $u_2$ in $T_2$. Now set $H_1 = \{v'\}$ and $H_2 = \{u_1\}$. By our choice of $v'$ and $u_1$, we can use Lemma~\ref{L:k2} to take $O(1)$-many BUD steps to color $v'$ blue and $u_1$ red. Recall that $u_2$ is not a cut vertex in the new blue subgraph $G_2 - u_1 + v'$. Therefore we can repeat this process a total of $s < n$ times until $u_s$ is red, so both colors intersect $\partial G$ and we are done.
\end{proof}

Thus, we can henceforth assume simple connectivity, and we will only use BUD moves that preserve simple connectivity.  

As in the proof of Lemma~\ref{lemInternalStep}, we can, in $O(n)$ steps, separately transform $T_1$ and $T_2$ into arbitrary spanning trees of $G_1$ and $G_2$.  More specifically, we will choose trees with the minimum possible number of horizontal edges.  We next explain the construction of such trees, and simultaneously define the \emph{column-component graph} of $T$, denoted $\C(T)$.

For $i\in\{1,2\}\approx \{\text{red},\text{blue}\}$, $T_i$ initially contains all vertices of $G_i$ and all vertical edges of $G_i$ between pairs of these vertices.  The nodes of $\C(T)$ are defined as the connected components of $T_1$ and $T_2$ at this point; that is, they are the maximum same-color vertical paths.  Each node of $\C(T)$ is either red or blue.  For each pair $X,Y$ of same-color nodes of $\C(T)$, we add an edge between them if some vertex $x$ of $X$ is connected by a horizontal edge of $G$ to some vertex $y$ of $Y$.  In this case, we also add a corresponding horizontal edge to $T_1$ or $T_2$, and we choose arbitrarily whenever there are more than one such adjacent $(x,y)$ pairs.   Figure~\ref{F:column_graph} (left) exemplifies two such trees in red and blue.  

In Figure~\ref{F:column_graph} (right), the nodes of $\C(T)$ are illustrated as vertical bars.  For clarity, we will henceforth use the term ``vertices'' for $G$ and ``nodes'' for $\C(T)$.  We will denote nodes with upper case letters and vertices with lower case letters. 

\begin{figure}[bht!]\centering
\includegraphics[width=4in]{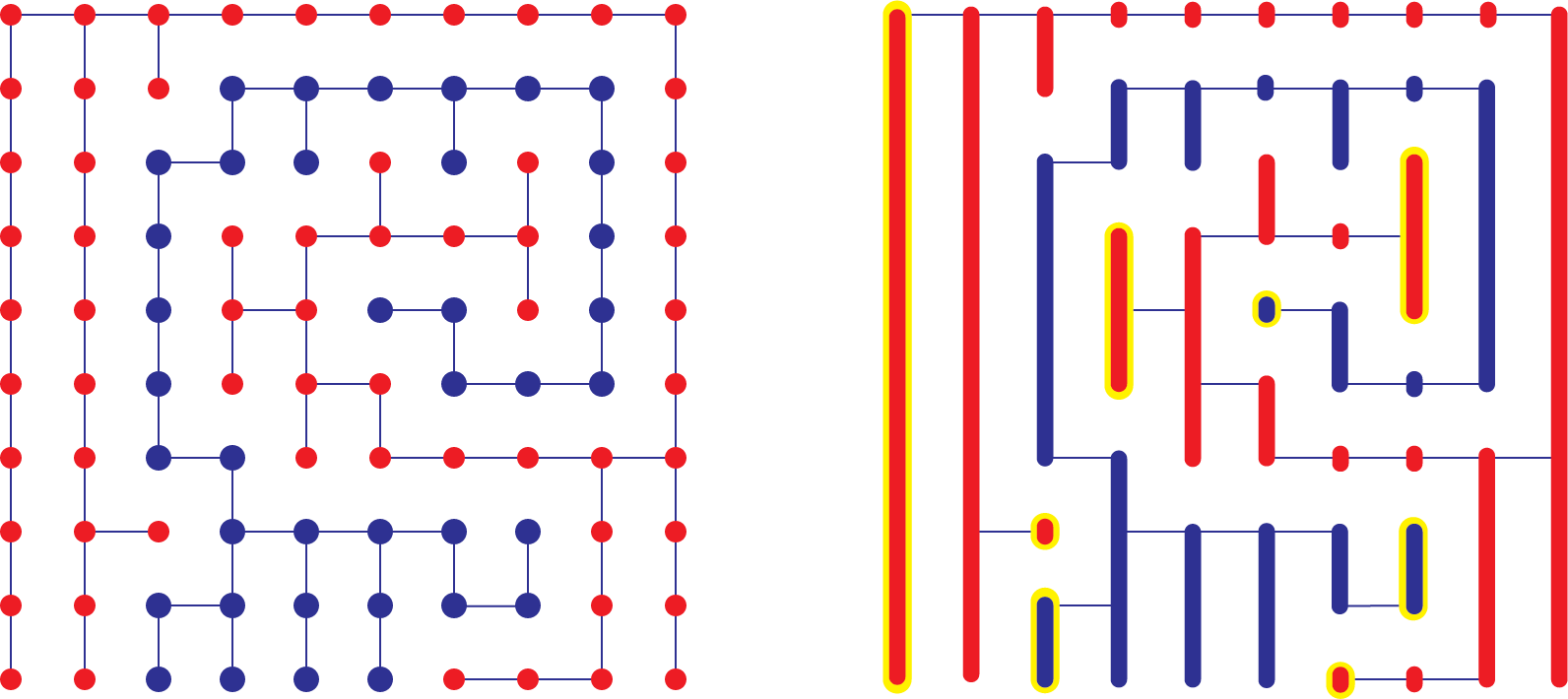}
\caption{Left: The red and blue spanning trees contain a minimal number of horizontal edges.  Right: The nodes of $\C(T)$ are the maximal same-color columns, and the leaves of $\C(T)$ are highlighted in yellow. In grayscale, the darker nodes are blue.}\label{F:column_graph}
\end{figure}

\begin{lemma}
For each $i\in\{1,2\}$, $T_i$ is a spanning tree of $G_i$.  Furthermore, $\C(T)$ is a forest with two components: a red one that we will denote $\C_1(T)$ and a blue one that we will denote $\C_2(T)$.
\end{lemma}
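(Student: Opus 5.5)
We need to show that $T_i$ is a spanning tree of $G_i$ and that $\C(T)$ is a forest with one red and one blue component. The plan is to first pin down the structure of the column nodes, then show the same-color column graph of $G_i$ is itself a tree, and finally translate acyclicity and connectivity of that graph to $T_i$.

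\textbf{Step 1: structure of the nodes.} Each node is a maximal vertical path of same-colored vertices, so the nodes of color $i$ partition $V(G_i)$. Each such node is a path, hence a tree. By construction, $T_i$ is these paths plus one horizontal edge per edge of $\C_i(T)$. Two standard facts then reduce the lemma to properties of $\C_i(T)$:
\begin{itemize}[label={}]
\item Contracting the vertical paths turns $T_i$ into exactly $\C_i(T)$. Hence $T_i$ is connected if and only if $\C_i(T)$ is.
\item Because each node is already a tree and exactly one horizontal edge is added per edge of $\C_i(T)$, $T_i$ is acyclic if and only if $\C_i(T)$ is acyclic.
\end{itemize}
Red and blue nodes are never joined by an edge of $\C(T)$. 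So it suffices to show, for each $i$, that $\C_i(T)$ is a tree; then $\C(T)$ is a forest with exactly the two components $\C_1(T)$ and $\C_2(T)$.

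\textbf{Step 2: connectivity.} $G_i$ is connected because $T_i$ originally spans it. Any path in $G_i$ moves either vertically within a node or horizontally between same-colored nodes. Projecting the path to nodes therefore gives a walk in $\C_i(T)$, so $\C_i(T)$ is connected.

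\textbf{Step 3: acyclicity (the main obstacle).} Here we use simple connectivity, which Lemma~\ref{lem:reduce-to-simple} lets us assume: $G_1$, $G_2$, and $G$ have no holes. Suppose $\C_i(T)$ contains a cycle $X_1 X_2\cdots X_m X_1$ with $m\ge 3$. Realize it in the plane as a closed curve $\gamma$ that runs along the column segments and along one chosen horizontal edge for each consecutive pair. Two cases arise.
\begin{itemize}[label={}]
\item \emph{$\gamma$ is a genuine loop.} Draw it so that it is simple. It encloses a bounded region, and this region contains a lattice point $w$ just above or below some horizontal connector that is not of color $i$. If $w$ were of color $i$, the maximality of the column nodes would force the two connectors to merge, contradicting the cycle structure. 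Since $G$ is simple, $w$ is a vertex of $G$, so $w$ has the other color. Every vertex of the other color enclosed by $\gamma$ is then cut off from $\partial G$ by color-$i$ vertices, so the other color fails to meet $\partial G$. This contradicts simple connectivity.
\item \emph{The cycle backtracks through a pair of columns.} Two columns $X$ and $Y$ can be horizontally adjacent at several heights. If they are adjacent at two heights $y<y'$ but not in between, the vertices between those heights in the two columns form a face containing a vertex of the other color enclosed by color $i$. This gives the same contradiction.
\end{itemize}
In both cases the key fact is that, in a hole-free planar grid configuration, a cycle in the column graph encloses an opposite-colored vertex, and that vertex cannot reach the boundary. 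I expect the careful case handling of how the closed curve is built from column segments and connectors to be the hardest part to write rigorously. I would first try a discrete Jordan-curve argument using the positive-orientation setup from Lemma~\ref{L:tentacle}.
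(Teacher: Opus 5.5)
\textbf{There is a genuine gap.} Your Steps 1 and 2 are fine and match the paper: contracting the vertical runs turns $T_i$ into $\C_i(T)$, and a path in $G_i$ projects to a walk in $\C_i(T)$. The gap is in Step 3, at the point that carries the whole argument.

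In your first case you assert that the region enclosed by $\gamma$ contains a lattice point $w$ ``just above or below some horizontal connector'' that is not of color $i$. Your only justification is that ``maximality would force the two connectors to merge.'' For an arbitrary cycle this does not follow. The lattice point next to a given connector may well be of color $i$, for example in the same run as the connector's endpoint. You never say which two nodes would merge, nor why a non-color-$i$ point must lie inside $\gamma$.

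Your second case is not a real case. A cycle in $\C_i(T)$ visits distinct nodes and uses one chosen horizontal edge per consecutive pair, so $\gamma$ is automatically a simple closed polygon. Moreover, two column nodes that are horizontally adjacent at heights $y<y'$ are contiguous runs, so they contain all heights in between. The strip between two adjacent columns has no lattice point in its interior, so the ``face'' you describe encloses no vertex at all.

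The missing idea is an extremal choice, which is what the paper uses.
\begin{itemize}
\item Let $R$ be a rightmost node of the cycle, lying in column $c$. Horizontal adjacency only joins adjacent columns, and no cycle node lies to the right of $R$. Hence both cycle-neighbors $A$ and $B$ of $R$ lie in column $c-1$.
\item $A$ and $B$ are distinct maximal same-color runs in one column. So there is a point $z$ of $\Z^2$ strictly between them that is not of color $i$: it is either of the other color or not in $G$. This is where maximality is genuinely used.
\item The connectors from $A$ and $B$ into $R$ sit at heights strictly above and strictly below $z$. All of $\gamma$ lies weakly left of column $c$. So the rightward horizontal ray from $z$ meets $\gamma$ exactly once, in the interior of the vertical segment traversed inside $R$.
\item Hence $z$ is enclosed by $\gamma$. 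Your closing argument then finishes the proof: either $G$ has a hole, or the other color is cut off from $\partial G$.
\end{itemize}
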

\begin{proof}
We first argue that $T_1$ is connected.  For this, since $G_1$ is connected, there exists a path $\beta$ in $G_1$ between an arbitrary pair of red vertices.   We can replace $\beta$ with a path in $T_1$ by replacing each horizontal edge $(v_1,v_2)$ of $\beta$ with the natural path that uses the unique horizontal edge in $T_1$ between the nodes of $\C_1(T)$ that contain $v_1$ and $v_2$.   This shows that $T_1$ (and hence also $\C_1(T)$) is connected.

We next argue that $C_1(T)$ is acyclic.  Suppose to the contrary that there was a cycle $\beta:\{0,...,x\}\rightarrow \C_1(T)$.  Let $t_0$ be such that $\beta(t_0)$ is a furthest-right node of the cycle.  Then $\beta(t_0-1)$ and $\beta(t_0+1)$ are both to the left of $\beta(t_0)$.  Since $\beta(t_0-1)\neq\beta(t_0+1)$, there must be a node $z$ of $\Z^2$ between them that is not red, so $z$ is either blue or not in $G$.  The node $z$ is encircled by the cycle, which contradicts simple connectivity.  This shows that $C_1(T)$ (and hence also $T_1$) is acyclic.

In summary,  $T_1$ is a spanning tree of $G_1$ and $C_1(T)$ is connected and acyclic.  Similarly, $T_2$ is a spanning tree of $G_2$ and $C_2(T)$ is connected and acyclic.
\end{proof}

Every tree has at least two leaves, so $\C(T)$ has at least two red leaves and at least two blue leaves.  In Figure~\ref{F:column_graph}, the leaves are highlighted in yellow.  

Let $l=|\partial G|$ and let $\alpha:\{0,1,...,l\}\rightarrow \partial G$ be a parameterization of the boundary of $G$ that is positively oriented (as defined in the proof of Lemma~\ref{L:tentacle}), with $\alpha(0)=\alpha(l)$.  Define $\hat{\alpha}:\{0,1,...,l\}\rightarrow \C(T)$ so that $\hat\alpha(t)$ is the node of $\C(T)$ containing $\alpha(t)$ for every $t\in\{0,1,...,l\}$.  Note that $\hat\alpha$ is not necessarily one-to-one.  A node of $\C(T)$ is called a \emph{boundary node} if it equals $\hat\alpha(t)$ for some $t\in[l]$; that is, it intersects $\partial G$.  

There are exactly two positions along the boundary of $G$ at which the vertex color changes between red and blue.  This is because more than two would contradict the connectivity of the colors, while fewer would contradict the simply connected assumption.  The blue and red boundary vertices of $G$ at each of these two positions will be called \emph{transition} vertices, and the corresponding boundary nodes of $\C(T)$ will be called \emph{transition} nodes. 

A node $V$ of $\C(T)$ is called a \emph{sandwiched} node if the top vertex comprising $V$ has an opposite-color $G$-neighbor directly above it, and the bottom vertex comprising $V$ has an opposite-color $G$-neighbor directly below it.  Note that a sandwiched node could also be a boundary node, and a non-boundary node must be sandwiched.

\begin{lemma}\label{L:interior_reduction}
    If $\C(T)$ has a red sandwiched leaf $L_1$ and a blue sandwiched leaf $L_2$, then $O(1)$ BUD steps suffice to reduce the number of nodes of $\C(T)$.
\end{lemma}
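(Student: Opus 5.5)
Let $L_1$ be a red sandwiched leaf of $\C(T)$, a vertical red segment of $h_1$ vertices, and $L_2$ a blue sandwiched leaf with $h_2$ vertices. I would apply Lemma~\ref{L:k2} to exchange equal-sized vertical segments of $L_1$ and $L_2$, arranged so that at least one of the two leaves vanishes entirely.

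Since $L_1$ is a leaf of $\C_1(T)$, removing it from $T_1$ requires cutting only the single horizontal edge joining it to its neighbor in $\C_1(T)$. More generally, I would take $H_1$ to be an initial vertical subpath of $L_1$ beginning at whichever end (top or bottom) is \emph{away} from that attaching horizontal edge, or, if needed, the whole of $L_1$. Removing one vertical edge of $L_1$ then splits $T_1$ into $H_1$ and a connected remainder $K_1$, as Lemma~\ref{L:k2} requires. I would choose $H_2\subseteq L_2$ the same way.

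Set $m=\min(h_1,h_2)$ and take $|H_1|=|H_2|=m$, so one of the leaves is consumed completely. The main obstacle is the anchor condition. $H_1$ must be adjacent to a blue vertex of $K_2$, and $H_2$ must be adjacent to a red vertex of $K_1$.

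Sandwichedness is what supplies anchors. The top and bottom vertices of $L_1$ each have a blue neighbor directly above or below. If $H_1$ is taken from the end opposite the attaching edge and contains an end vertex of $L_1$, that vertex has a blue vertical neighbor. This neighbor lies outside $L_2$ as long as $L_2$ is not that very node. Since $L_2$ is a blue leaf, the blue neighbor belongs to $K_2$ unless it lies in $H_2$. Symmetrically, $H_2$ has a red vertical neighbor, which lies in $K_1$ unless it is in $H_1$.

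\textbf{Degenerate case.} $L_1$ and $L_2$ might be vertically stacked in the same column, with the blue neighbor of $H_1$ lying inside $H_2$. I would handle this by choosing which end of each leaf to use. A leaf has two ends, and at most one of them is blocked by the attaching edge. If both ends of a leaf are usable, I pick the end not facing the other leaf. If $L_1$ sits directly on $L_2$ and each leaf's only free end faces the other, then I swap roles: I move the top segment of the lower leaf and the bottom segment of the upper leaf. These stay adjacent to the rest of the other leaf, or to the outer sandwiching vertices, which are in $K_i$. This case analysis is where I expect the real work to be, and I would check it against small pictures.

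\textbf{Counting nodes.} After the exchange, at least one leaf has completely changed color. The new red vertices $H_2$ form a vertical segment attached to red vertically or horizontally, and likewise for the new blue vertices $H_1$. I would verify that the total number of column nodes strictly decreases. The eliminated leaf's node disappears. Its vertices merge vertically into the adjacent sandwiching column of the opposite color, because the anchor was chosen vertically. The transferred segment $H_2$ (or $H_1$) either merges with an existing column or becomes one new node, so the net change is at most zero from that side and at least $-1$ from the eliminated leaf.

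To guarantee that the decrease is strict, I would use the anchor adjacency that is vertical: the segment being recolored touches, vertically, the same-color column it joins, so it merges into that column's node. Finally, I would rebuild the trees with minimal horizontal edges via Lemma~\ref{lemInternalStep}. Because the recoloring is local and simple connectivity is preserved, this uses $O(1)$ steps on the affected columns: at most two BUD steps from Lemma~\ref{L:k2}, plus a constant number of internal adjustments.
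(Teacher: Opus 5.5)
Your overall strategy matches the paper's: exchange end segments of the two leaves via Lemma~\ref{L:k2}, using sandwichedness to supply vertical anchors. However, two of your steps fail as written.

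\textbf{The single exchange of size $\min(h_1,h_2)$ is not always available.} Let $a_i$ be the vertex where the leaf's horizontal edge attaches, and let $t_i,b_i$ count the vertices of $L_i$ above and below $a_i$. The only subtrees of $T_i$ that a single edge cut removes and that lie inside $L_i$ are:
\begin{itemize}
\item the segment strictly above $a_i$, or a top sub-segment of it;
\item the segment strictly below $a_i$, or a bottom sub-segment of it;
\item all of $L_i$, by cutting the horizontal edge.
\end{itemize}
When $a_i$ is an interior vertex of the column, the achievable sizes are therefore only $\{1,\dots,t_i\}\cup\{1,\dots,b_i\}\cup\{h_i\}$. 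For example, take $h_1=3$ and let $L_2$ be a five-vertex column whose only blue horizontal adjacency is at its middle vertex. Then no size-$3$ piece of $L_2$ can be split off, and re-selecting the attaching edge does not help. The paper avoids this by working with half-leaves. With $x_i,y_i$ the sizes of the top and bottom portions of $L_i$, it applies Lemma~\ref{L:k2} to the smallest nonzero portion, paired with an equal-size end segment of a nonzero portion of the other leaf. Each such exchange creates no new nodes, because every recolored piece merges vertically with its anchor. After at most four exchanges (at most $8$ BUD steps) one leaf is gone. Your counting argument works once the exchanges are iterated this way, rather than attempted in one shot.

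\textbf{Your stacked-case resolution does not work.} Suppose $L_1$ sits directly on $L_2$ and each leaf's only removable end faces the other. Then the segments you exchange are exactly the bottom of $L_1$ and the top of $L_2$. Each segment's only opposite-colored vertical neighbor lies in the \emph{other} $H$, not in $K_2$ (respectively $K_1$). The rest of the other leaf and the outer sandwiching vertices are not adjacent to these segments at all. So the hypothesis of Lemma~\ref{L:k2} is not established.

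The missing ingredient is a \emph{horizontal} anchor. The paper first re-selects each leaf's horizontal connector to be the one closest to the other leaf, which costs one BUD step each. Then one of two things holds:
\begin{itemize}
\item the facing portion is empty, so only the outward portion, with its outer vertical anchor, is used; or
\item the extreme vertex of the facing portion has no same-colored horizontal neighbor, by the leaf property together with the extremal choice of connector. Lemma~\ref{L:tentacle} then forces an opposite-colored left or right neighbor. That neighbor lies in a different column from the other leaf, hence in $K_2$ (respectively $K_1$).
\end{itemize}
Note also that a horizontally anchored piece need not merge vertically with anything. Your node-counting step, which you justify by the anchors being vertical, therefore needs its own verification in this case.
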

\begin{proof}

First consider the case in which neither of $L_1$ and $L_2$ is directly above the other.  For $i\in\{1,2\}\approx\{\text{red},\text{blue}\}$, let $x_i,y_i$ be the size (number of vertices) of the top and bottom portion of $L_i$ respectively (which means the portions above/below the horizontal edge of $T$ connected to $L_i$).  For example in the initial configuration shown in Figure~\ref{F:Bud_grid_change}, the red leaf has $x_1=1$ top vertices and $y_1=1$ bottom vertices.  The blue leaf has $x_2=2$ top vertices and $y_2=0$ bottom vertices.  Our counting convention here is that the ``middle'' vertex of a leaf (the one adjacent to the horizontal edge) goes uncounted when the leaf has both top and bottom parts, but is included in the count for a leaf that has only a bottom or only a top part.  

Among the non-zero members of $\{x_1,y_1,x_2,y_2\}$, we claim that one or two BUD steps suffice to reduce the smallest to zero.  Therefore, removing half a leaf at a time like this, it takes at most 8 BUD steps to eliminate at least one of the leaves.  This does not necessarily reduce the number of \emph{leaves} of $\C(T)$ (since new ones might be created, as happens in Figure~\ref{F:Bud_grid_change}), but it does reduce the number of \emph{nodes} of $\C(T)$. 

Since the roles of top and bottom are symmetric, and the roles of blue and red are symmetric, we can assume without loss of generality that $x_1=\min\{z > 0: z \in \{x_1,y_1,x_2,y_2\} \}$ and $x_1,x_2\neq 0$, which is the case in Figure~\ref{F:Bud_grid_change}.  That is, we wish to reduce $x_1$ to $0$ (remove the top of the red leaf) while also reducing the $x_2$ (shrinking the top of the blue leaf).  The required (one or two) BUD steps are given by Lemma~\ref{L:k2}.  Figure~\ref{F:Bud_grid_change} shows the effect of applying Lemma~\ref{L:k2} to a specific pair of leaves from Figure~\ref{F:column_graph}.  

\begin{figure}[bht!]\centering
\includegraphics[width=5in]{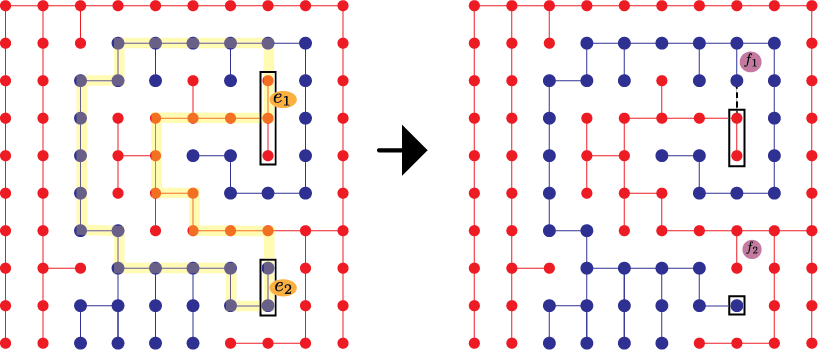}
\caption{Starting from the configuration of Figure~\ref{F:BUD_step}, Lemma~\ref{L:k2} allows us to eliminate the top of the red leaf.  The active leaves are shown with black boxes.  The up-down step's cycle is highlighted yellow.  The edge labels $\{e_1,e_2,f_1,f_2\}$ match the previous figures.}\label{F:Bud_grid_change}
\end{figure}

Next, consider the case in which one of the leaves is directly above the other; let's say that $L_1$ is directly above $L_2$.  We need to ensure that each half-leaf that we will adjust is adjacent to an opposite-color anchor. We begin by moving the horizontal edge connected to each leaf as close as possible to the other leaf.  For example, let's say for specificity that $L_1$ is connected in $\C(T)$ to a red node $N_1$ on its left.  Using a single BUD step, we can replace the horizontal edge of $T$ that connects vertices of $L_1$ and $N_1$ with the \emph{lowest} horizontal edge connecting a pair of vertices from these two nodes.  If $L_1$ no longer has a bottom portion after this change, then we need only remove the top portion of $L_1$, which has a blue anchor directly above it, so we are done.  So assume that $L_1$ still has a bottom portion after this change, and let $a,b$ be the nodes of $\Z^2$ to the left and right of the bottom vertex $v$ of $L_1$.  According to Lemma~\ref{L:tentacle}, $a$ or $b$ is a neighbor of $v$ in $G$, so it must be blue because $L_1$ is a leaf.   This neighbor serves as an anchor for $L_1$.

Similarly, after replacing $L_2$'s horizontal connector with its \emph{highest} possible counterpart, if $L_2$ still has a top part, then its topmost vertex is adjacent (on its left or right) to a red vertex that can serve as its anchor.  We can therefore proceed like in the previous argument. 
\end{proof}

\begin{lemma}\label{L:all_boundary}
If $\C(T)$ has no blue sandwiched leaves, then every blue node of $\C(T)$ is a boundary node.  Furthermore, the analogous statement holds for red. 
\end{lemma}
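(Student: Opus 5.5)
We argue by contrapositive: assume some blue node $V$ of $\C(T)$ is not a boundary node, and produce a blue sandwiched leaf. Recall that a non-boundary node is automatically sandwiched: its top vertex is not on $\partial G$, so its upper $\Z^2$-neighbor lies in $G$, and by maximality of the column it is red; the same holds at the bottom.

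\textbf{Step 1 (set up).} Consider the tree $\C_2(T)$ and remove $V$. Since $\C_2(T)$ has at least two leaves, we may assume $V$ is not a leaf, otherwise we are done immediately. Because $G_2$ is simply connected and intersects $\partial G$, some blue boundary node $B$ exists. Root $\C_2(T)$ at $B$, and look at the subtree $\mathcal{S}$ hanging below $V$ (the union of branches of $\C_2(T)\setminus V$ not containing $B$). If $V$ has degree at least $2$, such a branch exists.

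\textbf{Step 2 (key geometric claim).} Every node in such a branch is non-boundary. We argue this through planarity. The vertical column $V$, together with its red caps above and below, separates the plane locally. Combine this with a red path in $G_1$ joining the red vertex above $V$ to the red vertex below $V$ (which exists since $G_1$ is connected). The result is a closed curve $\gamma$. Since $G_1$ is simply connected and contains no blue, $\gamma$ encloses precisely the blue region on one side of $V$. Blue nodes attached to $V$ by horizontal edges on the enclosed side lie strictly inside $\gamma$, and $\gamma$ consists of vertices of $G$. By the Jordan curve theorem, the interior of $\gamma$ contains no vertex of $\alpha=\partial G$ except possibly points on $\gamma$ itself, and those are red or $V$. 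So every blue node in the enclosed branch is non-boundary.

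The main obstacle is ensuring that some branch at $V$ actually lies on the enclosed side, i.e.\ that the branch containing $B$ is not the only one. The plan is to take $V$ to be a non-boundary blue node that is farthest from $B$ in $\C_2(T)$. If $V$ is a leaf, it is sandwiched and we are done. Otherwise $V$ has a child, and by the farthest choice that child is a boundary node. We then show this is impossible using $\gamma$: choose the red path on the side opposite to $B$'s branch. This is possible because the red caps above and below $V$ can be connected by red paths going around either side only if red surrounds blue, which contradicts simple connectivity. So the red connection goes around exactly one side, and the children on that side are enclosed. The care needed is in checking that the children of $V$ lie on the enclosed side, i.e.\ that $B$ is outside $\gamma$; this holds because $B$ meets $\partial G$.

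\textbf{Step 3 (conclusion).} Iterating, a deepest node in the enclosed branch is a blue leaf that is non-boundary, hence sandwiched. This gives the contrapositive. The red statement follows by swapping colors, using the symmetric hypothesis that both colors meet $\partial G$ (Lemma~\ref{lem:reduce-to-simple}).
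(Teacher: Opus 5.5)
\textbf{There is a genuine gap: the case where $V$'s child attaches on the same side as its parent.}

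Your enclosure argument is correct as far as it goes. Any blue branch attached to $V$ on the interior side of $\gamma$ lies strictly inside $\gamma$ and so contains no boundary node. Because $B$ meets $\partial G$, the branch containing $B$ (that is, the parent of $V$) is attached on the exterior side.

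The gap is in the final step, where you identify ``the children of $V$ lie on the enclosed side'' with ``$B$ is outside $\gamma$.'' These are not the same: $B$ being outside $\gamma$ only locates the parent. Nothing you have said prevents every child $W$ of $V$ from attaching on the \emph{same} side as the parent, for instance both through column nodes in the column immediately to the left of $V$. In that configuration the interior of $\gamma$ lies to the right of $V$ and contains no blue branch at all, so no contradiction arises. Nor can you ``choose the red path on the side opposite to $B$'s branch,'' since, as you yourself observe, red connectivity forces the side. So your Step 2 claim, that every non-$B$ branch is non-boundary, is only established for branches on the side opposite the parent.

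The same-side configuration needs a separate idea, which is the second case of the paper's proof.
\begin{itemize}
\item Since $V$ is not a boundary node, every lattice neighbor of every vertex of $V$ lies in $G$.
\item The parent $P$ and the child $W$ are distinct maximal blue column segments in the column left of $V$, each containing a vertex horizontally adjacent to $V$. So between their two attachment rows there is a vertex of that column that is in $G$ but not blue. This is a red vertex $r$ adjacent to $V$.
\item By your farthest choice, $W$ is a boundary node. Hence there is a blue path from $\partial G$ through $B$'s branch, into $P$, along $V$ between the two attachment rows, and into $W$ back to $\partial G$. After trimming at intermediate boundary vertices, this is a cross-cut of the disk bounded by $\alpha$.
\item Locally at the point where the path turns from $P$ into $V$, the vertex $r$ lies in the sector between the two path edges, while the red cap above $V$ lies in the complementary sector. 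So the cross-cut separates $r$ from that cap, contradicting connectivity of $G_1$.
\end{itemize}
The paper handles your opposite-side case dually: a blue boundary-to-boundary path through $V$ separates the two red caps, which is equivalent to your enclosure argument. With the same-side case added, your proof is complete.
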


\begin{proof}
Suppose, to the contrary, that there is a blue node, $B$, that does not intersect $\partial G$.  There must be red nodes $R_1$ and $R_2$ directly above and below $B$.  Since $B$ is not a leaf, it must have at least two blue neighbors.

First consider the case in which $B$ has blue neighbors on both sides: a blue neighbor $B_1$ on its left and a blue neighbor $B_2$ on its right.  By the simple connectivity assumption, one of these -- let's say $B_1$ --  is connected to $\partial G$ via a blue path in $\C(T)$ that doesn't include $B$.  Deleting $B$ would split $\C_2(T)$ into at least two components; let $C$ denote the component containing $B_2$.  Since $C$ is a tree, it has at least two leaves, so it has at least one leaf, $L$, that's different from $B_2$.  We claim that $L$ does not intersect $\partial G$, so in particular it must be a sandwiched node, contradicting the hypothesis that there are no blue sandwiched leaves.  This is because, if we suppose to the contrary that $L$ intersects $\partial G$, then we would have a blue path $\partial G\rightarrow B_1\rightarrow B\rightarrow B_2\rightarrow L\rightarrow\partial G$ that would separate the $\C_1(T)$ into two components: one containing $R_1$ and the other containing $R_2$, contradicting the connectedness of $C_1(T)$.

Next consider the case in which $B$ only has blue neighbors on one side -- let's say its left.  Let $B_1,B_2$ be two blue neighbors of $B$ on its left (if there are more than two, then choose a consecutive pair, which means that they don't have any other blue nodes between them).  If there were a red node between $B_1$ and $B_2$, then the argument from the previous paragraph would contradict the connectedness of $\C_1(T)$.  Therefore the nodes of $\Z^2$ between $B_1$ and $B_2$ do not belong to $G$, which implies that the bottom node, $b_1$, of $B_1$ and the top node, $b_2$, of $B_2$ both belong to $\partial G$.  But this forces $B$ to be a boundary node, contradicting our assumption that it is not.  
\end{proof}

Recall that, for each adjacency in $\C_2(T)$, the choice of a corresponding horizontal edge in $T_2$ was made arbitrarily.  A BUD step suffices to re-select such a horizontal edge.

\begin{corollary}\label{C:all_boundary}
If $\C(T)$ has no blue sandwiched leaves, then $O(n)$ BUD steps suffice to re-select the horizontal edges of $T_2$ such that they are all boundary edges.  Furthermore, the analogous statement holds for red.
\end{corollary}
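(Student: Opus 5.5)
We argue for blue; the red case is symmetric. By Lemma~\ref{L:all_boundary}, the hypothesis implies every blue node of $\C(T)$ is a boundary node. Consider an edge of $\C_2(T)$ joining blue nodes $X$ and $Y$, which are horizontally adjacent columns. The pairs $(x,y)$ with $x\in X$, $y\in Y$ and $xy\in E(G)$ form a contiguous vertical run of horizontal edges. Contiguity holds because $X$ and $Y$ are vertical paths and the rows shared by the two columns form an interval; if some row in that interval lacked the edge, a vertex of $\Z^2$ missing from $G$ would be trapped between blue columns, contradicting simplicity of $G$. The plan is to show that the topmost (or bottommost) horizontal edge in this run is a boundary edge of $G$. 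Then one BUD step replaces the currently chosen horizontal edge between $X$ and $Y$ by that boundary edge. This step adds the new edge, creating a cycle formed by the two horizontal edges and the vertical segments of $X$ and $Y$, and removes the old edge. The tree $T_2$ stays spanning on $G_2$, so the partition, and hence membership in $B^{a,b}_2(G)$, is unchanged.

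To find the boundary edge, look at the top edge $x_0y_0$ of the run and the vertices $x_0+(0,1)$ and $y_0+(0,1)$ directly above it. At least one of these lies in $G$ and not in $X\cup Y$, or else the run ends because a column ends. If both lie in $G$, then neither is blue-in-$X$ or $Y$ while they are adjacent, and blue connectivity forces a configuration we can rule out. Otherwise one of the two is outside $G$, or both are red.

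\textbf{Case 1: one of them is outside $G$.} Then $x_0$ or $y_0$ is on $\partial G$. By the local structure of positively oriented boundaries used in Lemma~\ref{L:tentacle} (the interior lies on one side of each boundary edge), the horizontal edge $x_0y_0$ has $G$ on only one side, so it is an edge of $\alpha$.

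\textbf{Case 2: both are red, at the top and also at the bottom of the run.} Then the red region above and the red region below must be joined by a red path going around $X\cup Y$. Together with the blue path $\partial G \to X \to Y \to \partial G$, which exists because $X$ and $Y$ are both boundary nodes, this path would separate $\C_1(T)$, exactly as in the proof of Lemma~\ref{L:all_boundary}. Since red is connected, this is a contradiction, so at least one end of the run is of the Case 1 type and yields a boundary edge.

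The expected main obstacle is making this end-of-run case analysis rigorous. The difficulty is showing that the blue boundary path through $X$ and $Y$ genuinely separates the red rows above and below, including the case where $X$ and $Y$ touch $\partial G$ only at their far ends. Jordan-curve reasoning as in Lemma~\ref{L:tentacle} should handle this.

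Finally, each edge of $\C_2(T)$ needs at most one BUD step, and $\C_2(T)$ has fewer than $n$ edges, so $O(n)$ steps suffice in total.
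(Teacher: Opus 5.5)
\textbf{There is a genuine gap in the end-of-run case analysis.} Two parts of your setup are fine. Re-selecting the horizontal edge between adjacent blue nodes $X,Y$ is one BUD step that leaves the partition unchanged. Only the two extreme edges of the $X$--$Y$ run can lie on $\alpha$, since every inner edge of the run has interior unit squares on both sides.

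The problem is proving that one of the two extreme edges really is a boundary edge. Your case split at the top of the run (``one of $x_0+(0,1)$, $y_0+(0,1)$ is outside $G$, or both are red'') leaves out the mixed case: one column continues past the run and the other is capped by red. For example, $x_0+(0,1)$ is red and $y_0+(0,1)\in Y$. Your claim that when both lie in $G$ neither is in $X\cup Y$ is simply false.

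In the mixed case, if the unit square above $x_0y_0$ is interior to $\alpha$, then $x_0y_0$ is not a boundary edge. If the bottom end is of the same type (say $X$ is sandwiched alongside a longer $Y$), neither end of the run gives you a boundary edge. Nothing in your argument rules this configuration out. Your Case~2 separation argument assumes red caps both columns at both ends, and even there the Jordan-curve step you flag as the main obstacle is left unproved.

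There is also a smaller slip in your contiguity claim. $G$ need not be an induced subgraph of $\Z^2$: a width-one exterior slot opening at one end of the run can remove horizontal edges in shared rows with no vertex missing. The run is still contiguous, but your justification for it is wrong.

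\textbf{The missing idea is global, and it removes the need for local cases.} Both colors meet $\partial G$, and $\alpha$ has exactly two color transitions. So the blue vertices of $\partial G$ form a single contiguous arc of $\alpha$. By Lemma~\ref{L:all_boundary}, every blue node contains a vertex of this arc. Along the arc:
\begin{itemize}
\item a vertical step stays inside one column node;
\item a horizontal step is a boundary edge joining two $\C_2(T)$-adjacent nodes.
\end{itemize}
Hence the vertical blue edges together with the horizontal edges of the blue arc form a connected spanning subgraph of $G_2$.

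This is the paper's proof: delete horizontal edges until that subgraph is a tree, then move $T_2$ to it as in Lemma~\ref{lemInternalStep}. In your per-edge language, walk along the blue arc from a vertex of $X$ to a vertex of $Y$. This gives a walk from $X$ to $Y$ in the tree $\C_2(T)$, so it must traverse the edge $XY$, which means it uses a horizontal $\alpha$-edge between $X$ and $Y$. With that in hand, your one swap per edge of $\C_2(T)$, and hence the $O(n)$ bound, go through.
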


\begin{proof}
The blue component of $\partial G$ is connected; that is, $\alpha$ can be re-parameterized so that its restriction to $\{1,2,...,l_2\}$ corresponds to the blue.  Remove all horizontal edges of $T_2$, and then add all edges of the form $(\alpha(t),\alpha(t+1))$ that are horizontal for $1\leq t\leq l_2-1$.  The previous lemma implies that $T_2$ is now connected, so it can be turned into a tree by removing horizontal edges until it is cycle-free. 
\end{proof}

\begin{lemma}\label{L:no_sandwich}
$O(n)$ BUD steps suffice to transform $T$ into a configuration with no red or blue sandwiched leaves.
\end{lemma}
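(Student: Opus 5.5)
The plan is to iterate Lemma~\ref{L:interior_reduction} while both colors still have sandwiched leaves, and to handle separately the case where only one color does. Each application of Lemma~\ref{L:interior_reduction} uses $O(1)$ BUD steps and strictly decreases the number of nodes of $\C(T)$. Since $\C(T)$ has at most $n$ nodes, after at most $n$ applications, that is $O(n)$ BUD steps, we reach a configuration in which at most one color has sandwiched leaves. Each application must preserve simple connectivity and the column-tree structure of $T_1,T_2$. I would check that the swaps in Lemma~\ref{L:k2} applied to half-leaves keep both colors simply connected, and if necessary re-normalize to minimum-horizontal-edge trees using Lemma~\ref{lemInternalStep}. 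This normalization does not change the partition, so it does not increase the node count.

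For the remaining case, suppose without loss of generality that blue has no sandwiched leaves but red has a sandwiched leaf $L_1$. By Lemma~\ref{L:all_boundary}, every blue node is then a boundary node. By Corollary~\ref{C:all_boundary}, $O(n)$ BUD steps re-select the blue horizontal edges to be boundary edges, so every blue leaf contains a boundary vertex.

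The idea is now to manufacture an anchor at a blue leaf so that the half-leaf exchange of Lemma~\ref{L:interior_reduction} can still be run. We need a blue leaf $L_2$ with a portion ($x_2$ or $y_2$) that is adjacent to an opposite-color vertex and whose removal keeps blue connected. Blue boundary leaves near one of the two transition vertices are the natural candidates, since they touch red. I would pick a blue leaf $L_2$ whose top or bottom part has a red neighbor: either vertically, at the transition, or horizontally, via Lemma~\ref{L:tentacle}. I would then pair this part with the smaller half of $L_1$ as in the proof of Lemma~\ref{L:interior_reduction}. Lemma~\ref{L:k2} only needs $H_1$ and $H_2$ to be same-sized subtrees with anchors, not that both leaves be sandwiched. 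So each such exchange removes half of a leaf and, after at most $8$ steps, a node.

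If no blue leaf has a usable anchor, I would argue, as in the proof of Lemma~\ref{L:all_boundary}, that this forces $L_1$ itself to be adjacent to the blue boundary region in a way that lets its top or bottom part be absorbed directly. Alternatively, one can first move a transition vertex by one position using a single-vertex swap from Claim~\ref{clm:exists-ripe}, which creates the anchor.

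Repeating this process, the node count again decreases monotonically, giving $O(n)$ total steps. The main obstacle is this one-sided case. The difficulty is guaranteeing a blue half-leaf that has an anchor, is a subtree whose removal leaves blue connected, and whose exchange does not destroy simple connectivity, for example by pulling blue away from $\partial G$ entirely. I would attempt to prove this by a case analysis near the transition vertices, mirroring the $3\times 3$ box arguments of Claim~\ref{clm:exists-ripe} and Lemma~\ref{lem:reduce-to-simple}.
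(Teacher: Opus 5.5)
Your first phase (iterating Lemma~\ref{L:interior_reduction} while both colors have sandwiched leaves, then using Lemma~\ref{L:all_boundary} and Corollary~\ref{C:all_boundary} for the color that has none) matches the paper. The gap is in the one-sided case, which is where the real work of this lemma lies.

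\textbf{Why node count fails as the progress measure.} You track the number of nodes of $\C(T)$, but this quantity is not monotone under the exchange you propose. In Lemma~\ref{L:interior_reduction}, each recolored half-leaf merges into an existing column of the other color precisely because both leaves are sandwiched. Once blue has no sandwiched leaves, every blue leaf has an end with no $G$-neighbor beyond it, and full stripes have two such ends. A portion recolored at such an end, with a horizontal anchor as you allow, becomes a \emph{new} red column component. So a round that eliminates $L_1$ (a loss of one node) may create one or two new red nodes. Your claim that each round of at most $8$ steps removes a node therefore does not carry over.

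\textbf{The missing blue changer.} You also leave unproven that a usable anchored blue half-leaf exists, i.e.\ a subtree whose removal keeps blue connected and still meeting $\partial G$. The fallbacks do not help:
\begin{itemize}
\item A $k=2$ BUD step preserves the sizes $a,b$, so part of $L_1$ can never be ``absorbed directly'' without an equal-sized blue changer.
\item An ad hoc swap of non-cut vertices from Claim~\ref{clm:exists-ripe} comes with no progress measure and may create new sandwiched leaves of either color.
\end{itemize}

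\textbf{What the paper does instead.} It chooses the blue changer differently and uses a potential that tolerates increases in the node count.
\begin{itemize}
\item It pairs the red sandwiched leaf $R$ with a blue \emph{transition node} $B\ni b$. Here $b=\alpha(t_0)$ is a blue transition vertex whose boundary predecessor $r=\alpha(t_0-1)$ is red, so an anchor is automatic.
\item It tracks $\Phi=$ (number of nodes of $\C(T)$) $+\,3\cdot$(number of blue vertices of $\partial G$).
\item Each $O(1)$-step application of Lemma~\ref{L:k2} either turns $R$ entirely blue, or turns a connected part of $B$ containing $b$ red.
\item The latter may create up to $2$ nodes but removes at least one blue boundary vertex. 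Hence $\Phi$ strictly decreases from its initial value $O(n)$.
\end{itemize}
Carrying this out still requires several further pieces:
\begin{itemize}
\item a case analysis on the directions of $(r,b)$ and $(b,\alpha(t_0+1))$, giving six cases up to rotation;
\item a preliminary internal BUD step when $B$ meets $\partial G$ at a second vertex, using an all-blue boundary path $\beta$ so that no blue sandwiched leaf is created;
\item separate treatment when $R$ lies directly above or below the transition nodes, including the case where both blue transition nodes flank $R$.
\end{itemize}
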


\begin{proof}
Whenever there are sandwiched leaves of both colors, Lemma~\ref{L:interior_reduction} allows us in $O(1)$ steps to reduce the number of nodes of $\C(T)$.  Since $\C(T)$ starts with $O(n)$ nodes, repeatedly applying this process must terminate after $O(n)$ steps in a configuration that has no sandwiched leaves of one color.  Let's say that $\C(T)$ has no blue sandwiched leaves.  Corollary~\ref{C:all_boundary} allows us in $O(n)$ BUD steps to ensure that all horizontal edges of $T_2$ are boundary edges.

It remains to show that $O(n)$ BUD steps suffice to eliminate all red sandwiched leaves.  For this, our measurement of progress will be the following. 
$$\Phi =  \text{the number of nodes of $\C(T)$ plus three times the number of blue vertices of $\partial G$}.$$ 
It will suffice to prove the following: 
$$\text{\emph{If there is a red sandwiched leaf, then $O(1)$ BUD steps suffice to reduce $\Phi$.}}$$  Since the initial value of $\Phi$ is $O(n)$, repeatedly applying this process must terminate after $O(n)$ steps in a configuration that has no red or blue sandwiched leaves.  

For this, let $R$ denote a red sandwiched leaf.  Let $b=\alpha(t_0)\in\partial G$ denote a blue transition vertex, and let $B=\hat\alpha(t_0)$ denote the corresponding blue transition node.  Similar to the proof of Lemma~\ref{L:interior_reduction}, we will apply $O(1)$ applications of Lemma~\ref{L:k2}.  Each of these BUD moves will turn part of $R$ blue (all or part of its top or bottom) and simultaneously turn part of $B$ red.  After these moves, either $R$ will be entirely blue (which decreases the number of nodes of $\C(T)$ and therefore decreases $\Phi$) or a connected portion of $B$ including $b$ will be red (which might increase the number of nodes of $\C(T)$ by up to $2$, but also decreases the number of blue nodes of $\partial G$, yielding a net decrease in $\Phi$). 

We assume for now that $R$ is not directly above or below $B$, which will simplify the proof by alleviating the need to find alternative anchors for them.  Recall that $\alpha$ is positively oriented.  We can assume that $B$ represents the transition from red to blue; that is, $r=\alpha(t_0-1)$ is red; the other case is essentially the same.  We can also assume that $b_2=\alpha(t_0+1)$ is blue, for if it were not, there would be only one blue boundary node and hence only one blue node at all, which would imply that there are no red sandwiched leaves.

We proceed by cases depending on the cardinal directions of the edges $(r,b)$ and $(b,b_2)$.  For example, the case $\leftarrow \uparrow$ means that the edge $(r,b)$ goes left and the edge $(b,b_2)$ goes up.  There are $4^2=16$ ways to choose these two directions, but the choices $\uparrow\downarrow, \downarrow\uparrow, \leftarrow\rightarrow, \rightarrow\leftarrow$ are not possible, so there are $12$ valid possibilities.  Pairs of them are equivalent modulo $180^\circ$ rotation, so we must consider $6$ cases.   

\begin{itemize}
\item (CASE $\leftarrow\uparrow$): Two examples of this case are illustrated in Figure~\ref{F:nonconvex}.  Let $B^-$ denote the bottom portion of $B$; that is, the portion below and including $b$.  

If $B^-$  intersects $\partial G$  only at $b$ (like in the top figure),  then a BUD move can be made to turn $B^-$ red.  To see that the BUD move is valid, notice $B^-$ is a subtree because it does not have any blue $T_2$-neighbors on its right or left (because all horizontal edges of $T_2$ are boundary edges), so the removal of the edge $(b,b_2)$ separates $B^-$ from the rest of $T_2$.  Since $B^-$  intersects $\partial G$  only at $b$, there is a red vertex below $B^-$ to serve as an anchor.

On the other hand, suppose there is a vertex $b'$ of $B^-$ other than $b$ that lies in $\partial G$ (like in the bottom figure).  We will perform a preliminary BUD step to change $C_2(T)$ by removing the vertical edge below $b$.  For this, we must find an edge to add that connects the resulting two components of $\C_2(T)$.  There are two paths in $\partial G$ from $b$ to $b'$; namely, the clockwise and counterclockwise paths.  Since the blue portion of $\partial G$ is connected, one of these two paths must be entirely blue; let's call this path $\beta$. Any missing horizontal edge of this blue path will fit the bill.  After this BUD step on trees, Lemma~\ref{L:k2} provides a BUD step that turns $b$ red (and turns part of $R$ blue).  The existence of the blue boundary segment $\beta$ implies that this BUD step does not create any blue sandwiched leaves.
\begin{figure}[bht!]\centering
\includegraphics[width=5in]{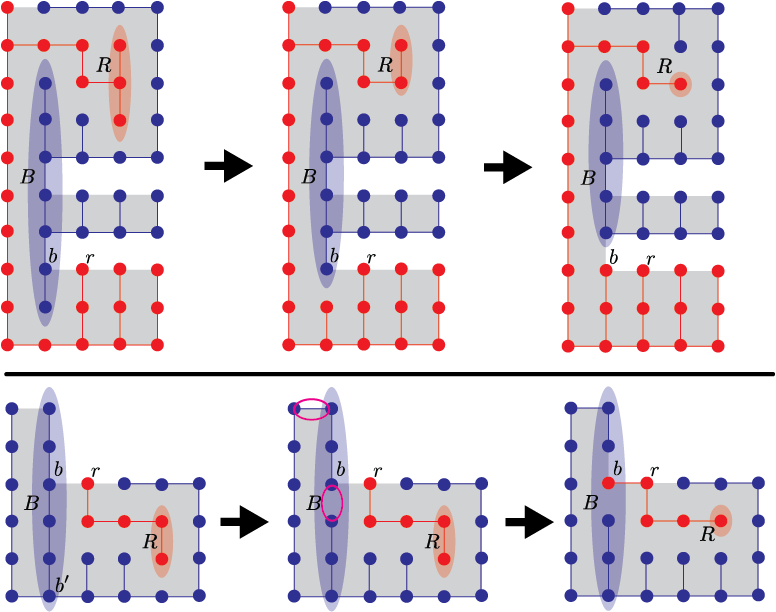}
\caption{A red sandwiched leaf $R$ is turned blue while a blue transition node $B$ is turned red.  In the bottom configuration, a preliminary BUD step is needed to internally change the blue spanning tree.}\label{F:nonconvex}
\end{figure}

\item (CASE $\rightarrow\rightarrow$): Note that $b$ is the bottom vertex of $B$.   If $B$  intersects $\partial G$  only at $b$, then $B$ is a subtree, and there is a red anchor above $B$, so we can turn $B$ red.

If $B$ contains a vertex $b'$ other than $b$ that lies in $\partial G$, then, as in the previous case, we perform a preliminary BUD step to change $\C_2(T)$ by removing the vertical edge above $b$, and then perform a  BUD step that turns $b$ red.

\item (CASE $\uparrow\leftarrow$): Here $B=\{b\}$ and the only $G$-neighbors of $b$ are $r$ and $b_2$.  A single BUD step turns $b$ red.

\item (CASE $\leftarrow\downarrow$): Here the only $G$-neighbors of $b$ are $r$ and $b_2$.  As before, we perform a BUD step to turn $b$ red.

\item (CASE $\uparrow\uparrow$): Here the only $T_2$-neighbor of $b$ is $b_2$.  Note that the node to the left of $b$ might be a blue $G$-neighbor of $b$, but is not a $T_2$-neighbor because every horizontal edge of $T_2$ is a boundary edge.  We can therefore perform a BUD step that turns $b$ red.

\item (CASE $\downarrow\leftarrow$): Note that $b$ is the top vertex of $B$.   If $B$  intersects $\partial G$  only at $b$, then $B$ is a subtree, and there is a red anchor below $B$, so we can turn $B$ red.

If $B$ contains a vertex $b'$ other than $b$ that lies in $\partial G$, then we perform a preliminary BUD step to remove the vertical edge below $b$, and then perform a  BUD step that turns $b$ red.

\end{itemize}

We've been assuming that $R$ is not directly above or below $B$.  If it were, then we could use the other blue transition node in place of $B$.  It therefore remains to consider the case in which switching to the other blue transition node does not work.  

That is, we must now consider the case in which one blue transition node $B_1$ is directly above $R$ and the other $B_2$ is directly below $R$.  Let $R'$ be the unique red $T_1$-neighbor of $R$.  Assume without loss of generality that $R'$ is to the left of $R$.  Let $b_1$ be the transition vertex contained in $B_1$ and let $b_2$ be the transition vertex contained in $B_2$.  There are two paths in $\partial G$ from $b_1$ to $b_2$, namely the clockwise and counterclockwise ones.  One of these paths, which we call $\beta$, is entirely blue.

If $\beta$ follows the positive orientation of the boundary, as in Figure~\ref{F:flanked} (left), then there must be another sandwiched red leaf, so we can use that one instead.  In fact, the other red leaf works because $\beta$ encircles $T_1$ and therefore all red nodes in $\C_1(T)$ are sandwiched.

If $\beta$ follows the negative orientation of the boundary, as in Figure~\ref{F:flanked} (right), the all vertices to the left of $R$ are red.  In this situation, we can turn $R$ blue using an anchor from $B_2$ while turning the bottom vertex of $B_1$ red using the red anchor to its left.
\begin{figure}[bht!]\centering
\includegraphics[width=2.5in]{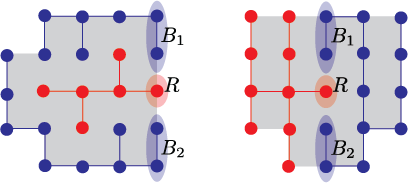}
\caption{Configurations in which both blue transition nodes are above/below the red sandwiched leaf $R$.}\label{F:flanked}
\end{figure}
\end{proof}

\begin{proof}[Proof of Theorem~\ref{thm:irreducible-simple}]
We are given a target member of $B_2^{a,b}(G)$.  Let $T$ be an arbitrary member of $B_2^{a,b}(G)$ which we wish to transform into the target.  It will suffice to transform $T$ so that all vertices have the target color, since we can use the argument from the proof of Lemma~\ref{lemInternalStep} to transform the individual trees $T_1$ and $T_2$ to their corresponding targets in $O(n)$ steps.

By Lemma~\ref{L:no_sandwich}, $O(n)$ BUD steps suffice to transform $T$ into a configuration with no red or blue sandwiched leaves.  At this point, according to Lemma~\ref{L:all_boundary}, all red and blue nodes of $\C(T)$ are boundary nodes.  Furthermore Corollary~\ref{C:all_boundary} allows us, in $O(n)$ BUD steps, to ensure that all horizontal edges of $T$ are boundary edges.  We apply the same process to the target configuration, so it similarly has no sandwiched leaves.

Let $b_1,r_1,r_2,b_2$ be the transition nodes along $\partial G$ in the order in which they are encountered using the positively oriented parameterization (with $b_i$ blue and $r_i$ red for each $i\in\{1,2\}$).  Let $B_1,R_1,R_2,B_2$ be the corresponding transition nodes in $\C(T)$.

As in Figure~\ref{F:rotate}, the color of almost every vertex is determined by the locations of these transition vertices.  More precisely, a \emph{stripe} means a connected segment of a vertical column of $\Z^2$ for which the top and bottom vertices are boundary vertices, but the other vertices are not boundary vertices.  Every non-boundary vertex of $G$ lies in a stripe.  If the top and bottom vertices of a stripe have the same color, then the whole stripe must have that color (otherwise there would be a node of $\C(T)$ that wasn't a boundary node).  For a stripe whose top vertex is red and bottom is blue (a \emph{red-over-blue stripe}), the color changes exactly once along this stripe, although the location of this change is arbitrary.  Similarly, for a blue-over-red stripe, the color changes exactly once at an arbitrary position.  Thus, the color of every vertex is determined by the locations of $r_1$ and $r_2$, except for the positions along the bi-colored stripes at which the color changes.

\begin{figure}[bht!]\centering
\includegraphics[width=3in]{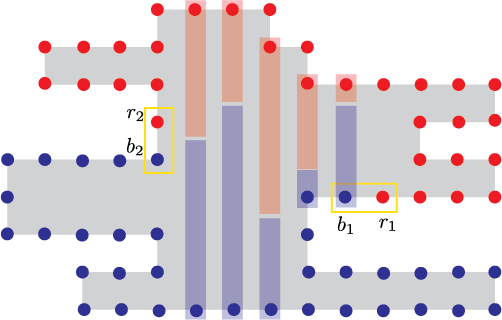}
\caption{All colors are determined by $r_1$ and $r_2$, except for the positions along the bi-colored stripes at which the color switches.}\label{F:rotate}
\end{figure}

In previous proofs, we made $O(1)$ BUD steps to turn all or part of a red set blue and simultaneously turn all or part of a blue set red.  Let's call these sets the \emph{red changer} and the \emph{blue changer}.  For example, in the proof of Lemma~\ref{L:interior_reduction}, the red changer was a red sandwiched leaf, while the blue changer was a blue sandwiched leaf.  Then, in the proof of Lemma~\ref{L:no_sandwich}, the red changer was a red sandwiched leaf, while the blue changer was a blue transition node.  For the following proof, each changer will be a transition node or a \emph{half-stripe} (which means the single-color portion of a bi-color stripe, not including the boundary vertex).

\begin{itemize}
\item STEP 1: ROTATE THE BOUNDARY COUNTERCLOCKWISE.  We use $R_1$ as the red changer and $B_2$ as the blue changer, so that one or both of the color transition points along the boundary rotates counterclockwise.  The proof of Lemma~\ref{L:no_sandwich} shows that it is always possible to make such a move.  We repeat this step until $r_1$ agrees with the target.  This requires $O(n)$ total BUD steps because the size of the boundary is $O(n)$.
\item STEP 2: ALIGN THE OTHER BOUNDARY TRANSITION.  Now $r_1$ is in its target position.  Suppose that $r_2$ needs to rotate counterclockwise to reach its target position.  In this case, we use $B_2$ as the blue changer, and we use any red half-stripe as the red changer.  Similarly, if $r_2$ needs to rotate clockwise to reach its target, then we use $R_2$ as the red changer, and we use any blue half-stripe as the blue changer.  We repeat this step until $r_2$ agrees with the target.  This requires $O(n)$ total BUD steps because there are $O(n)$ vertices that are part of bi-colored stripes, and we only ever change them in one direction.
\item STEP 3:  MATCH THE STRIPES.  Now $r_1$ and $r_2$ are in their target positions, so it only remains to match the color transition points of all bi-colored stripes.  For this, both changers will be half-stripes.  We repeat this step until all bi-colored stripes match their target.  This requires $O(n)$ total BUD steps because there are $O(n)$ vertices that are part of bi-colored stripes, and we only ever change them in one direction.
\end{itemize}
\end{proof}

\subsection{Irreducibility for triominoes on rectangular grids}\label{sub:irreducibility-triominoes}

In this section we prove that BUD is irreducible on rectangular grids when $k = n/3$, that is, when each part of the partition has exactly three vertices. In analogy with the tiling literature, we find it helpful to visualize the {\it dual} of the grid graph, where each vertex is a square and a partition into $k = n/3$ pieces is equivalent to a tiling of this dual with {\it triominoes}, polygons in the plane made of three unit squares connected edge-to-edge.  
Our proof uses a computational case analysis.

\begin{theorem}\label{thmTriomino}
    On an $N$-vertex grid graph, any pair of 3-omino tilings are connected by a series of at most $10N$ BUD moves, and any pair of spanning trees that are splittable into 3-ominoes are also connected by $\frac{61}{3}N$ Up-Down moves by intermediate trees that are also splittable into 3-ominoes.
\end{theorem}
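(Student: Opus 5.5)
We will show that every triomino tiling can be brought to the canonical tiling $\mP^*$ in which every triomino is horizontal. Assume, as the overview allows, that the horizontal dimension is divisible by three; if not, rotate the grid. BUD moves are reversible, and by Lemma~\ref{lemInternalStep} the underlying tree for a fixed partition can be chosen freely. So it suffices to bound, in terms of BUD steps on partitions, the distance from an arbitrary tiling $\mP$ to $\mP^*$. The bound on trees then follows by adding Lemma~\ref{lemInternalStep} overhead. The statement that any two tilings are connected uses the fact that a path to $\mP^*$ can be run backwards.

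The argument is a greedy sweep. Order the cells left-to-right, then top-to-bottom. Let $c$ be the first cell that does not lie in a horizontal triomino placed in its canonical position. Every cell before $c$ already agrees with $\mP^*$, and those canonical triominoes will never be touched again. We then show that $O(1)$ BUD steps on partitions, confined to a bounded window around $c$, produce a tiling in which the canonical triomino containing $c$ is present. Each triomino has a unique spanning tree, so the Corollary to Lemma~\ref{lem:counting-characterization} lets us certify a single step mechanically. Two local tilings of the same region differ by a set of disjoint BUD steps exactly when the number of pieces in one but not the other equals the number of tree edges in one but not the other.

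The main obstacle is the local lemma. The claim is that, whatever the tiling looks like near $c$, a short sequence of such moves places the canonical triomino at $c$. We would prove it by a finite computer enumeration. We enumerate every partial tiling consisting of the $13$ triominoes nearest $c$ that is consistent with the cells before $c$ being canonical, and with the grid boundary when $c$ lies within a bounded distance of the right, left or bottom edge. For each such partial tiling we run a breadth-first search over re-tilings of the covered region, with edges given by the counting criterion. The search must find a path in which every intermediate tiling keeps the earlier canonical cells fixed and ends with $c$ in its canonical horizontal triomino.

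The window size $13$ is the smallest for which this search succeeds. Smaller windows admit locked local patterns like those of~\cite{LockedPolyominoTilings}. There, recombination of only two pieces fails, but a longer cycle through several triominoes, which BUD permits, succeeds. Boundary cases, such as $c$ near the bottom rows where fewer than $13$ tiles fit, are handled by the same search restricted to tilings of the truncated region. Since the enumeration only checks a finite window, a move found there must also be valid in the whole grid. This holds because each step changes only pieces inside the window, and pieces outside are unaffected, so every intermediate partition stays a valid triomino tiling of the grid. This is the step whose correctness rests on the computation and on correctly encoding the boundary conditions, so it is where we expect most of the effort.

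For the counting, each application of the local lemma fixes at least one new canonical triomino and uses at most a constant number $s$ of BUD steps on partitions. The enumeration should give $s \le 30$, which yields $30 \cdot N/3 = 10N$ partition steps. To convert to trees, each partition step requires adjusting the inter-district spanning tree $T_0$ before and after the move. The internal trees are unique for triominoes, so only the $k-1 = N/3 - 1$ edges of $T_0$ are relevant, and the proof of Lemma~\ref{lemPartitionCharacterization} shows only the edges on the cycle must be repositioned. Charging this per step is what produces the constant $\frac{61}{3}N$. We would pin down the exact accounting, roughly one preparatory step per step plus a final $T_0$ alignment of at most $N/3$ moves, to match $10N + 10N + \frac{1}{3}N$.
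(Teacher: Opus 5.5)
Your overall strategy is the paper's: a row-major greedy sweep toward the all-horizontal tiling, a computer check over partial tilings of the $13$ triominoes nearest $c$ (including boundary cases and the reduction of far-boundary cases to a finite computation), and reversal of the path to connect two tilings. The gap is in the bookkeeping that should produce the explicit constants, and those constants are the quantitative content of the statement.

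You take $30$ to be the number of \emph{partition} steps per tile. Then one tiling reaches $\mathcal{P}^*$ in $30\cdot N/3=10N$ steps. Since you join two tilings by going to $\mathcal{P}^*$ and back, your argument gives $20N$, not $10N$. Your tree count is not derived either; it is chosen ``to match'' $10N+10N+\frac13 N$.

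The mechanism you propose for it, one preparatory step per partition step, is also wrong. By the construction in Lemma~\ref{lemPartitionCharacterization}, a partition step along an $\ell$-cycle needs the inter-district tree to already contain $f_1,\dots,f_{\ell-1}$. Installing them can cost up to $\ell-1$ exchanges, so the overhead per step is not constant.

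The accounting that works, and that the paper uses, reads the computed worst case $30$ as the number of tree edges \emph{added} per tile.
\begin{itemize}
\item \textbf{Partition moves.} Triominoes have unique spanning trees, so by the Corollary following Lemma~\ref{lem:counting-characterization}, a partition step along an $\ell$-cycle adds exactly $\ell\ge 2$ edges. Hence at most $15$ partition steps are used per tile. That gives $5N$ to reach $\mathcal{P}^*$ and $10N$ for the round trip.
\item \textbf{Tree moves.} Such a step is realized by at most $\ell$ Up-Down moves. First, at most $\ell-1$ basis exchanges in $G/\mathcal{P}$ install $f_1,\dots,f_{\ell-1}$; this is always possible because these edges form a path, hence a forest, in the quotient, so each created cycle contains a removable edge outside them. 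Then add $f_\ell$ and remove $e_\ell$.
\item \textbf{Total.} The tree-move count per tile is therefore at most the number of edges added, i.e.\ at most $30$. This gives $10N$ one way and $20N=\frac{60}{3}N$ for the round trip. At most $N/3-1$ further moves align the inter-district tree at the end, for $\frac{61}{3}N$.
\end{itemize}
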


\begin{proof}
    We assume without loss of generality that the horizontal dimension of the grid is divisible by 3. (If there are any 3-omino tilings at all, one dimension must be divisible since 3 is prime.) It suffices to show that any tiling is connected in $O(N)$ BUD moves to the canonical tiling where all tiles are horizontal. We order the cells by row top to bottom, and left to right within each row. Given any tiling $P$ other than the all-horizontal tiling, let $c$ be the first cell in this ordering that is not in a horizontal tile. For convenience, we  adopt a Cartesian coordinate system with the boundary of the grid at at integer coordinates such that cell $c$ is above and to the right of coordinate $(10, 10)$; see Figure~\ref{figTriominoProgram}). It suffices to show that we can make $c$ part of a horizontal tile in $O(1)$ BUD moves.

    \ipncm{.5}{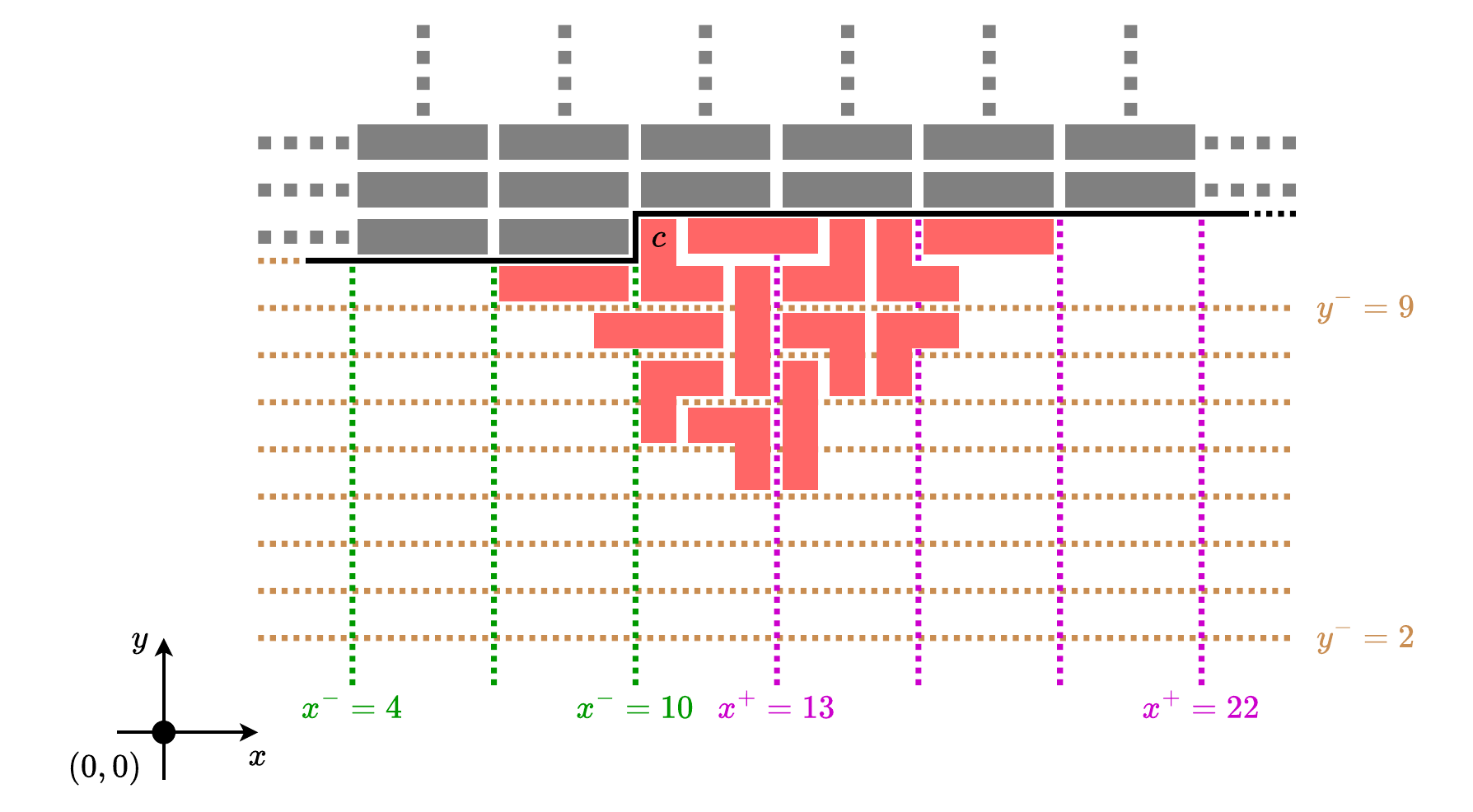}{\label{figTriominoProgram}Illustration of the cases in each step of Theorem~\ref{thmTriomino}. In each case, we select one of the 3 vertical lines on the left side as a left boundary, one of the 4 vertical lines on the right side as a right boundary, and one of the 8 horizontal lines as a bottom boundary.}

    Let $x^-$ be the $x$-coordinate of the left boundary, let $x^+$ be the $x$-coordinate of the right boundary, and let $y^-$ be the $y$-coordinate of the bottom boundary. We break into disjoint cases, depending on the choices of
    \begin{align*}
        x^- \in \{(\leq 4), 7, 10\}, && x^+ \in \{13, 16, 19, (\geq 22)\}, && y^- \in \{(\leq 2), 3, 4, 5, 6, 7, 8, 9\},
    \end{align*}
    where, for instance, $(\leq 4)$ denotes any case where the true value of $x^-$ is at most 4 in the chosen coordinate system. These encapsulate all possible cases, since the $x$-dimension is assumed to be divisible by 3 and $y^- = 10$ would imply that the tiling is already all-horizontal. In total, there are 3 cases for $x^-$, 4 cases for $x^+$, and 8 choices for $y^-$, so there are 96 cases overall.
    
    In each case, we computationally verify that, no matter what the local neighborhood around $c$ looks like, we can cover $c$ with a horizontal cell in $O(1)$ BUD moves. To accomplish this, we iteratively enumerate partial tilings of $c$ and all later cells in the ordering with $s$ cells that are as close to the cell two cells to the right of $c$ as possible, for $s = 1, 2, 3, \dots$, until we find that we can prove \emph{any} such tiling can be transformed into one with a horizontal tile covering $c$. Figure~\ref{figTriominoProgram} shows one such partial tiling, with $s = 13$. Checking the BUD connectivity condition is expensive, so we employ memoization, keeping track of sets of cells on which the BUD metagraph of all tilings contains a horizontal cell covering $c$ in every connected component.
    
    Note that the maximum number of tiles that ever needed to be placed was $s = 13$. Thus, we are able to handle the ``infinite'' cases where some values of $x^-$, $x^+$ or $y^-$ are arbitrarily low/high by recording the most extreme cells that were ever considered in the algorithm. For instance, in the most general case, where $x^- \leq 4$, $x^+ \geq 22$, and $y^- \leq 2$, we only needed to consider $x$ coordinates in the range $[4, 21]$ and $y$ coordinates above 2. Thus, we may actually consider the finite problem where $x^- = 4$, $x^+ = 22$, and $y^- = 2$, as the computation is equivalent. So in fact, all cases are finite, as long as we check that the maximum bounds of cells considered did not exceed those that were the farthest away. And in no instance where the bounds farther away than in this case.

    To get concrete upper bounds, we also recorded the largest number edges that needed to be added/removed in any of the cases, which was 30. This must have involved at least 15 BUD moves per tile being made horizontal, of which there are $\frac{N}{3}$. Thus, it is possible to make all tiles horizontal in at most $\frac{15N}{3} = 5N$ BUD moves, so it is possible to join an arbitrary pair in $10N$ moves. Counting tree moves rather than partition moves, the number of new edges added is precisely the worst-case number of tree moves needed to reconfigure the induced tiling appropriately, so it takes 30 moves per tile rather than 15. Hence, it takes $\frac{60N}{3}$ to transition from one tiling to any other. Finally, an additional $\frac{N}{3} - 1 \leq \frac{N}{3}$ external moves may be needed at the very end to reconfigure the tree while leaving the induced tiling the same, for a total of $\frac{61N}{3}$.
\end{proof}

We do not know whether it is possible to connect 3-omino tilings with 3-way ReCom moves.\footnote{It is known that there are no locked configurations, as noted by \cite{LockedPolyominoTilings}. But this does not necessarily imply full connectivity of 3-way ReCom.} In the example from Figure~\ref{figTriominoProgram}, there is a single BUD move through the lower part of the tiling making c covered by a horizontal cell, but there are no valid 3-way ReCom moves. Thus, if 3-way ReCom moves are sufficient, we are only going to see it if we expand to consider partial tilings with $s > 13$ tiles.

\subsection{Negative results: locked configurations for BUD}\label{sub:negative-results}

\begin{figure}[h]
    \centering
    \begin{subfigure}[b]{0.25\textwidth}
        \centering        \includegraphics[scale=.3,page=2]{Media/locked.pdf}
        \caption{}
        \label{fig:locked-simple}
    \end{subfigure}
    ~
    \begin{subfigure}[b]{0.7\textwidth}
        \centering        \includegraphics[scale=.3,page=1]{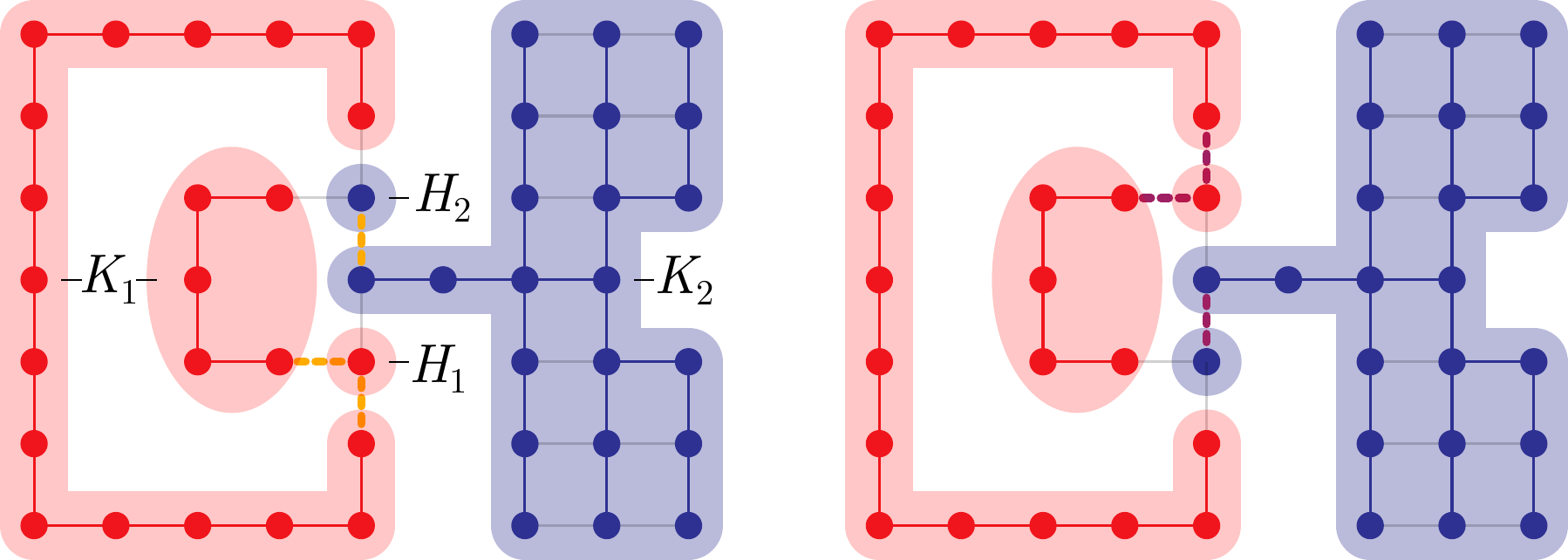}
        \caption{}        
        \label{fig:locked-grid}
    \end{subfigure}
    \caption{Examples in which the BUD walk is not irreducible.}
    \label{fig:locked}
\end{figure}

\begin{lemma}
    \label{lem:locked}
    For any $\varepsilon\ge 0$, there are graphs $G$ such that the BUD walk is not irreducible on $B^\varepsilon_2(G)$, even when $G$ is an induced grid graph.
\end{lemma}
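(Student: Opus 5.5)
The plan is to exhibit an explicit graph $G$ and an explicit tree $T\in B^\varepsilon_2(G)$ from which no BUD step changes the induced partition, while $B^\varepsilon_2(G)$ contains a tree inducing a different partition. Then the communicating class of $T$ is not everything, and the walk is not irreducible. We can use the trick in Lemma~\ref{lemInternalStep} to take any tree inducing the same partition, so the key object is a \emph{locked partition} $\mP=(P_1,P_2)$ with $|P_1|=|P_2|$: no BUD step on partitions, in the sense of Lemma~\ref{lemPartitionCharacterization} (extended to tolerance $\varepsilon$), leaves $\mP$.

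First I would handle the simple (non-grid) example of Figure~\ref{fig:locked-simple}. Take unit weights and sizes chosen so that the only admissible district sizes are $n/2$ exactly. This needs only $\varepsilon<2$, since sizes are integers; for larger $\varepsilon$ we rescale by blowing up each vertex into a large clique or grid blob, or equivalently scale weights.

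For $k=2$, a partition-changing step must satisfy the following. The added edge $f$ joins $P_1$ and $P_2$, and the removed edge $e$ lies on the cycle inside $P_1$ or $P_2$. The new sides are then $P_1\setminus H_1\cup H_2$ and $P_2\setminus H_2\cup H_1$, where $H_i$ is a connected piece with connected complement in $P_i$, each $H_i$ is adjacent to the other side's remainder, and $\bigl||H_1|-|H_2|\bigr|\le \varepsilon/2$. The design goal is to make every interface vertex a cut vertex of its part, or to make every "peelable" piece adjacent to the other side have a size that no peelable piece on the opposite side matches within tolerance.

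Concretely, I would build $G$ from two districts that interface only through a few long thin corridors. For example, each part is a large blob attached to the interface by paths whose peelable subsets have sizes in a set $A_1$ for red and $A_2$ for blue, with $A_1$ and $A_2$ separated by more than $\varepsilon/2$, for instance one set $\equiv 0$ and the other $\equiv$ half the modulus, with large spacing. The cycle through the two holes in Figure~\ref{fig:locked-grid} is how one realizes this in an induced grid graph. The holes prevent the interface from having short alternative connections, and the cut vertices force any peeled piece to be a whole "arm".

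Then I would verify, by enumerating the finitely many choices of $(H_1,H_2)$ allowed by the structural constraints, that no pair is compatible. I would also check that a second balanced partition exists, for example by swapping two arms of equal size, so that $B^\varepsilon_2(G)$ genuinely contains trees that cannot be reached from $T$.

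The main obstacle is the grid realization: induced grid graphs have many incidental adjacencies, and Lemma~\ref{L:tentacle}-type widening creates extra peelable pieces. I would first design the abstract locked gadget and then embed each arm as a width-one corridor, separated by holes and cut vertices so that no new adjacencies arise. Finally, I would scale all sizes by a factor exceeding $\varepsilon$ to cover arbitrary tolerance.
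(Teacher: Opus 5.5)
Your reduction is sound: at $\varepsilon=0$, a partition $\mP$ from which no BUD step on partitions (Lemma~\ref{lemPartitionCharacterization}) exits, together with some other balanced partition, shows the walk is not irreducible. But the lemma is an existence statement whose whole content is the example, and you never give one. Pointing to the figures, ``blobs joined by thin corridors,'' and ``enumerate the finitely many $(H_1,H_2)$'' describe a search, not a graph, and you leave the grid realization as an open obstacle.

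Your two mechanisms also conflict. Size separation (peelable pieces on the two sides have incompatible sizes) rules out exactly the swap you then propose as the second partition. If the second partition comes from ``swapping two arms of equal size'' and the arms are peelable (connected, with connected complement), then by Lemma~\ref{lemPartitionCharacterization} that swap \emph{is} a single BUD step, so $\mP$ is not locked.

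The missing idea is the paper's. Arrange that the only two balanced partitions differ by exchanging single vertices $v_1\in P_1$ and $v_2\in P_2$ that have a common neighbor. Make $v_1$ a cut vertex of $P_1$, so $K_1=P_1\setminus\{v_1\}$ splits into two pieces $\tau_1,\tau_2$. Then condition (2a) fails for the only candidate swap, and nothing needs to be enumerated.

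Your passage to $\varepsilon>0$ is also unsupported. Blowing vertices up into cliques or grid blobs is not equivalent to rescaling. It creates new balanced partitions (blobs can be split between the districts) and new peelable sub-pieces, either of which can unlock the configuration. Scaling all weights uniformly by $c>\varepsilon/2$ does reduce $B^\varepsilon_2(G)$ to the exact case, but only if non-unit populations are allowed.

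The paper keeps unit weights and lengthens only the two induced paths $\tau_1,\tau_2$. Then the district containing the common neighbor must absorb a number of $K_1$-vertices that exceeds $\varepsilon$ but is less than $|\tau_1|$ and $|\tau_2|$. This forces every balanced partition to put exactly one of $v_1,v_2$ in that district. Changing which one would require removing from $P_1$ a piece whose removal disconnects it.

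Note also that for $\varepsilon>0$ a tree may admit several splits, and Lemma~\ref{lemPartitionCharacterization} is proved only for $\varepsilon=0$. So your ``extended to tolerance $\varepsilon$'' step needs an invariant of this kind, not a single locked partition.
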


\begin{proof}
    Figure~\ref{fig:locked-simple} shows an example of such a graph $G$ for $\varepsilon =0$.
    $B^0_2(G)$ contains only two partitions $\mathcal{P}=(P_1,P_2)$ and $\tilde{\mathcal{P}}=(\tilde{P}_1,\tilde{P}_2)$ shown in the left and right of Figure~\ref{fig:locked-simple} respectively.
    Let $P_1$ and $\tilde{P}_1$ be the districts shown in red and  $P_2$ and $\tilde{P}_2$ be the districts shown in blue.
    $\tilde{\mathcal{P}}$ is obtained from $\mathcal{P}$ by swapping the subgraphs $H_1$ and $H_2$ shown in the figure.
    Note that $K_1 =P_1\setminus H_1$ is not connected and thus the two partitions are not connected by a BUD step by Lemma~\ref{lemPartitionCharacterization}.
    The argument holds for the graph in Figure~\ref{fig:locked-grid}, which is an induced grid graph.
    
    We can generalize this counterexample for any $\varepsilon >0$ 
    by increasing the size of the two components of $K_1$ while maintaining $K_1$ two induced paths $\tau_1$ and $\tau_2$.
    Let $v_1$ and $v_2$ be the vertices in $H_1$ and $H_2$.
    For simplicity, we refer to $P_2$ as the district in a partition that contains the common neighbor of $v_1$ and $v_2$, and $P_1$ as the other district.
    In a balanced partition, $P_2$ must now include some vertices of $K_1$.
    As long as this number is greater than $\varepsilon$, and smaller than the sizes of $\tau_1$ and $\tau_2$, in every partition $P_2$ must include exactly one vertex in $\{v_1,v_2\}$.
    Thus, in any move between a partition where $P_2$ includes $v_1$ and a partition where $P_2$ includes $v_2$, deleting the subgraph of $P_1$ that is swapped in the move disconnects $P_1$.
    Thus, no such BUD step exists and $B^\varepsilon_2(G)$ has two components under the BUD walk.
\end{proof}

\section{Computing approximately balanced partitions of trees}\label{sec:complexity}

Implementing each step of the BUD walk requires checking whether a given tree is splittable into balanced subtrees. If exact balance is desired, it is a simple exercise to prove that, if such a partition exists, it is unique (Lemma~\ref{L:unique_cuts}). Moreover, finding such a partition (or determining that none exists) can be accomplished in polynomial time by a simple recursive algorithm that roots the tree at an arbitrary node. However, if parts are only required to be balanced to within some small error term (as is the case in political redistricting), it is not immediately clear how to extend this approach.

Formally, we define the \emph{Tree Approximate Partition Problem (TAPP)} as follows. The input is a vertex-weighted tree $T$, a positive integer $k$, and a value $\varepsilon$. We assume that vertex weights, $k$, and $\varepsilon$ are encoded in binary as rational numbers, and that the weights sum to one (this can easily be accomplished by replacing each weight $p(v)$ with $p(v)/p(G)$, for instance). The objective is to determine whether there exists an $\varepsilon$-balanced $k$-partition of $T$; recall this means that the total weight of each district must be in the range $[\frac1k - \frac{\varepsilon}{2}, \frac1k + \frac{\varepsilon}{2}]$.

If weights are (scaled-down versions of) polynomially-bounded integers, there is a straightforward dynamic programming solution to this problem. However, this is very inefficient in practice, for population weights can number in the millions. Thus, we seek to find a truly polynomial-time algorithm, rather than pseudo-polynomial-time.

Ito, Uno, Zhou, and Nishizeki \cite{Ito2008Partitioning} consider a variant of this problem where, instead of a single parameter $\varepsilon$, there are arbitrary lower and upper bounds on the size of each part, $l$ and $u$. In Section~\ref{subApproximateBalanceDp} we describe an essentially equivalent algorithm in the context where the population bounds are specified by $k$ and $\varepsilon$ as described above. Indeed, their setting can be thought of as a generalization of ours in which the sum of populations is not necessarily 1, but our runtime analysis does not depend on this assumption anyway. Ito et al.\txt{}~\cite{Ito2008Partitioning} give a runtime bound of $O(k^4 n)$. In Section~\ref{subRuntimeAnalysis} we give a tighter analysis, show that TAPP is decidable in time $O(k^3 n)$. In the special case where $\varepsilon = O(\frac{1}{k^2})$, which roughly corresponds to the practical regime where the population tolerance is stringent enough that it's impossible to construct an $\varepsilon$-balanced $(k - 1)$-partition or $(k + 1)$-partition, we obtain a runtime of $O(kn)$. Finally, in Section~\ref{sub:hardness}, we show that despite the tractability of the decision problem, the associated counting problem is \#P-complete. This means there is no obvious uniform sampling algorithm to output a random partition from a valid tree, an obstacle that we overcome using a heuristic approach in Section~\ref{sec:practical}.

\subsection{Efficiently deciding existence}\label{subApproximateBalanceDp}

We denote the set of all subtrees of $T$ by $h(T)$. For any $T' \in h(T)$, we denote the total weight of $T'$ by $\abs{T'}$. Thus $\abs{T} = 1$ by assumption. We arbitrarily choose a root to orient $T$. For any vertex $v \in T$, we denote by $T(v) \in h(T)$ the subtree of $T$ rooted at $v$, including $v$ itself. We denote a partition of any subtree $T' \in h(T)$ as a map $P: T' \to h(T')$ assigning each vertex to its subtree.

For any set $S \subseteq \rr$ and $\varepsilon > 0$, we define the \emph{$\varepsilon$-closure} of $S$ to be the set $S$ along with all intervals between consecutive values of $S$ that are at most $\varepsilon$ apart from each other,
$$C_\varepsilon(S) := \{y \in \rr \suchthat \txt{there exist $x, z \in S$ such that } x \leq y \leq z \leq x + \varepsilon\}.$$
If $C_\varepsilon(S) = S$, we say that $S$ is \emph{$\varepsilon$-closed}. It is straightforward to see that the $\varepsilon$-closure of any set is $\varepsilon$-closed.  

We'll use ``$+$'' to denote the Minkowski sum of sets $A,B\subset\rr$; that is,  $A+B = \{a+b\mid a\in A, b\in B\}$.  If $A_1,...,A_d\subset\R$, then $C_\varepsilon(A_1)+\cdots+C_\varepsilon(A_d)$ is not necessarily $\varepsilon$-closed;  However, the following is true:
\begin{lemma}\label{L:pickleball}
     If $A_1,...,A_d\subset\R$, then
     $ C_\varepsilon\left(C_\varepsilon(A_1)+\cdots+C_\varepsilon(A_d)\right)
     =C_\varepsilon(A_1+\cdots+A_d)$.
\end{lemma}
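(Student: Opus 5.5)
The plan is to prove the two inclusions separately, using three elementary properties of $C_\varepsilon$. First, it is \emph{extensive}: $S\subseteq C_\varepsilon(S)$ (take $x=y=z$). Second, it is \emph{monotone}: $S\subseteq S'$ implies $C_\varepsilon(S)\subseteq C_\varepsilon(S')$, which is immediate from the definition. Third, it is \emph{idempotent}: $C_\varepsilon(C_\varepsilon(S))=C_\varepsilon(S)$, which is the remark just before the lemma that the $\varepsilon$-closure of any set is $\varepsilon$-closed. Write $A:=A_1+\cdots+A_d$ and $S:=C_\varepsilon(A_1)+\cdots+C_\varepsilon(A_d)$. If some $A_i$ is empty, both sides are empty, so assume all $A_i$ are nonempty.

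For $C_\varepsilon(A)\subseteq C_\varepsilon(S)$, I would use extensivity in each summand: $A_i\subseteq C_\varepsilon(A_i)$ gives $A\subseteq S$. Monotonicity then yields $C_\varepsilon(A)\subseteq C_\varepsilon(S)$.

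For the reverse inclusion, the key claim is $S\subseteq C_\varepsilon(A)$. Given that, monotonicity and idempotence give $C_\varepsilon(S)\subseteq C_\varepsilon(C_\varepsilon(A))=C_\varepsilon(A)$. To prove the claim, take $y=y_1+\cdots+y_d\in S$ with $y_i\in C_\varepsilon(A_i)$. Choose $x_i,z_i\in A_i$ with $x_i\le y_i\le z_i\le x_i+\varepsilon$. The naive bracket $\sum x_i\le y\le\sum z_i$ has width up to $d\varepsilon$, which is too wide. This is the main obstacle, and I would handle it with a chain of interpolating points. For $j=0,\dots,d$, set
\[
s_j := z_1+\cdots+z_j+x_{j+1}+\cdots+x_d .
\]
Each $s_j$ lies in $A$. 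The sequence is nondecreasing, since $s_j-s_{j-1}=z_j-x_j\in[0,\varepsilon]$. Moreover $s_0\le y\le s_d$. Hence $y$ lies in some interval $[s_{j-1},s_j]$ whose endpoints are in $A$ and satisfy $s_j\le s_{j-1}+\varepsilon$. By the definition of $C_\varepsilon$ (with $x=s_{j-1}$ and $z=s_j$), this gives $y\in C_\varepsilon(A)$.

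Combining the two inclusions gives the equality. The only nontrivial step is this telescoping-chain argument. It works because the definition of $C_\varepsilon$ needs only a single pair of points of $A$ within $\varepsilon$ bracketing $y$, rather than consecutive elements of $A$.
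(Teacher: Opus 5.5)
Your proof is correct and follows the same route as the paper. The easy inclusion comes from $A_1+\cdots+A_d\subseteq C_\varepsilon(A_1)+\cdots+C_\varepsilon(A_d)$, and the hard one from showing $C_\varepsilon(A_1)+\cdots+C_\varepsilon(A_d)\subseteq C_\varepsilon(A_1+\cdots+A_d)$ and then using that the right-hand side is $\varepsilon$-closed. Your telescoping chain $s_0\le\cdots\le s_d$ is a monotone path through the paper's $2^d$ corner sums $\sum p_i$ with $p_i\in\{a_i^-,a_i^+\}$, and it makes the paper's somewhat informal ``$\varepsilon$-net'' step explicit.
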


\begin{proof}
The inclusion ``$\supseteq$'' is obvious because $C_\varepsilon(A_1)+\cdots+C_\varepsilon(A_d)\supseteq A_1+\cdots+A_d$.  For the other inclusion, it will suffice to prove that $$C_\varepsilon(A_1)+\cdots+C_\varepsilon(A_d)\subseteq C_\varepsilon(A_1+\cdots+A_d);$$  
this will suffice because, if $C_\varepsilon(A_1)+\cdots+C_\varepsilon(A_d)$ is contained in an $\varepsilon$-closed set, then its $\varepsilon$-closure must also be contained in that set.

For this, let $y\in C_\varepsilon(A_1)+\cdots+C_\varepsilon(A_d)$, which means we can write $y=y_1+\cdots+y_d$ such that for each $i\in[d]$ we have $a_i^-\leq y_i\leq a_i^+$ for some $a_i^-,a_i^+\in A_i$ with $a_i^+-a_i^-\leq\varepsilon$.  This means that the set 
$$\{p_1+\cdots +p_d\mid p_i\in\{a_i^-,a_i^+\}\text{ for all }i\in[d]\}$$ is an $\varepsilon$-net of values in $A_1+\cdots+A_d$ on the interval $[a_1^-+\cdots+ a_d^-,a_1^++\cdots+a_d^+]$ in which $y$ is contained, so $y\in C_\varepsilon(A_1+\cdots+A_d)$.
\end{proof}

For a fixed upper limit $M>0$, consider the ``cropping'' operator $\Psi$ defined as \begin{equation}\label{E:define_crop}\Psi(A)=A\cap(-\infty,M].\end{equation}
In general $C_\varepsilon\circ\Psi\neq \Psi\circ C_\varepsilon$; to see why, notice for example that $(C_\varepsilon\circ\Psi)(A)$ only contains $M$ when $M\in A$, whereas $(\Psi\circ C_\varepsilon)(A)$ contains $M$ more generally whenever $A$ contains a pair of $\varepsilon$-close points above and below $M$.  However, the two sets only differ near $M$.  To express this more precisely, we define the following.

\begin{definition}\label{D:sim}
If $S_1,S_2\subset\R$ are closed sets, define $S_1\sim S_2$ to mean that 
\begin{enumerate}
    \item $S_1\cap(-\infty,M-\varepsilon]=S_2\cap(-\infty,M-\varepsilon]$, and 
    \item $\inf(S_1\cap[M-\varepsilon,\infty)) = \inf(S_2\cap[M-\varepsilon,\infty))$ (or both of these sets are empty).
\end{enumerate}

\end{definition}

\begin{lemma}\label{L:commute}
   If $A\subset\R$ is a closed set, then $(\Psi\circ C_\varepsilon)(A)\sim (C_\varepsilon\circ\Psi)(A)$. 
\end{lemma}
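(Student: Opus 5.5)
The plan is to verify the two conditions of Definition~\ref{D:sim} directly for $S_1 := (\Psi\circ C_\varepsilon)(A) = C_\varepsilon(A)\cap(-\infty,M]$ and $S_2 := (C_\varepsilon\circ\Psi)(A) = C_\varepsilon(A\cap(-\infty,M])$. The argument works pointwise from the definition of the $\varepsilon$-closure, using the witnesses $x,z$ with $x\le y\le z\le x+\varepsilon$. I would first note the general inclusion $S_2\subseteq S_1$. Indeed, $C_\varepsilon$ is monotone, so $S_2\subseteq C_\varepsilon(A)$. Moreover, every point of $S_2$ lies below some witness $z\le M$, so $S_2\subseteq(-\infty,M]$.

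\textbf{Condition 1.} By the inclusion above, it remains to show that every $y\in S_1$ with $y\le M-\varepsilon$ lies in $S_2$. Pick witnesses $x,z\in A$ with $x\le y\le z\le x+\varepsilon$. Then
\[
x\le y\le z\le x+\varepsilon\le y+\varepsilon\le M,
\]
so both witnesses lie in $A\cap(-\infty,M]$. Hence $y\in S_2$.

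\textbf{Condition 2.} Since $S_2\subseteq S_1$, we immediately get $\inf(S_1\cap[M-\varepsilon,\infty))\le\inf(S_2\cap[M-\varepsilon,\infty))$. Also, if the $S_1$-part is empty then so is the $S_2$-part. For the reverse inequality, I will show the following: for every $y\in S_1$ with $y\ge M-\varepsilon$, there is some $y'\in S_2$ with $M-\varepsilon\le y'\le y$. This also shows that the $S_2$-part is nonempty whenever the $S_1$-part is.

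Take witnesses $x\le y\le z\le x+\varepsilon$ in $A$. If $z\le M$, then $y\in S_2$ itself, and we take $y'=y$. Otherwise $z>M$. In this case I claim $x\ge M-\varepsilon$. Indeed, if $x<M-\varepsilon$ then $z\le x+\varepsilon<M$, a contradiction. So $x\in A$ satisfies
\[
M-\varepsilon\le x\le y\le M,
\]
and therefore $x\in A\cap(-\infty,M]\subseteq S_2$. We take $y'=x$.

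The only step needing any thought is this case $z>M$, where $y$ itself may fail to lie in $S_2$. This is exactly the phenomenon noted before Definition~\ref{D:sim}. It is resolved by observing that the lower witness $x$ must then be within $\varepsilon$ of $M$, so it serves as the replacement point $y'$. Closedness of $A$ plays no essential role beyond ensuring that the sets compared in Definition~\ref{D:sim} are closed, as that definition requires.
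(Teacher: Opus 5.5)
Your proof is correct. The underlying idea matches the paper's, but you organize it differently, so a brief comparison is worthwhile.

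The paper argues globally. It sets $x=\sup\{a\in A\mid a\le M\}$ and $y=\inf\{a\in A\mid a\ge M\}$ and notes that the two sets coincide unless $y-x\le\varepsilon$. In that case it asserts the identity $(\Psi\circ C_\varepsilon)(A)=(C_\varepsilon\circ\Psi)(A)\cup(x,M]$ with $M-x<\varepsilon$. Both clauses of Definition~\ref{D:sim} then follow: the extra piece lies in $(M-\varepsilon,M]$, and its left endpoint $x$ belongs to both sets.

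You instead first establish $S_2\subseteq S_1$ and then check each clause separately by working with witnesses. Your key observation is that when the upper witness exceeds $M$, the lower witness is forced into $[M-\varepsilon,M]$ and can stand in for $y$. That is exactly the information the paper's identity encodes.

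What each approach buys:
\begin{itemize}
\item The paper's version gives a sharper structural statement: the two sets differ by at most one half-open interval abutting $M$. However, it states that identity without justification, and it needs $A$ closed so that the supremum and infimum are attained. It also skips the case $A\cap(-\infty,M]=\emptyset$, where both sets are empty.
\item Your argument covers all edge cases uniformly and works directly with infima, so attainment never arises. It also correctly shows that closedness of $A$ matters only for the compared sets to be admissible in Definition~\ref{D:sim}.
\end{itemize}
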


\begin{proof}
Let $x = \sup\{a\in A\mid a\leq M\}$ and let $y=\inf\{a\in A \mid a\geq M\}$.
If $y$ is undefined (because there are no such points) or $y-x>\varepsilon$, then $(\Psi\circ C_\varepsilon)(A)= (C_\varepsilon\circ\Psi)(A)$.  On the other hand, if $y-x\leq\varepsilon$, then $(\Psi\circ C_\varepsilon)(A)= (C_\varepsilon\circ\Psi)(A)\cup(x,M]$.  Note that $y>M$ (because otherwise we'd have $x=y=M$), so $M-x<y-x=\varepsilon$.  In all cases, $(\Psi\circ C_\varepsilon)(A)\sim (C_\varepsilon\circ\Psi)(A)$.
\end{proof}
We leave to the reader the proof of the following Lemma, which requires a straightforward argument similar to the last proof. 
\begin{lemma}\label{L:smalltool}
If $A,B\subset\R$ are closed sets with $A\sim B$, then $C_\varepsilon(A)\sim C_\varepsilon(B)$, and $\Psi(A)\sim\Psi(B)$, and $(\Psi\circ C_\varepsilon)(A)\sim(\Psi\circ C_\varepsilon)(B)$.
\end{lemma}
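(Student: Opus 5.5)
The plan is to verify the three claims directly from Definition~\ref{D:sim}, and to get the third by composing the first two. Throughout, let $A\sim B$ be closed sets and write $L := A\cap(-\infty,M-\varepsilon] = B\cap(-\infty,M-\varepsilon]$. Let $m$ denote the common value $\inf(A\cap[M-\varepsilon,\infty))=\inf(B\cap[M-\varepsilon,\infty))$, or note that both sets are empty. Since $A$ and $B$ are closed, $m\in A\cap B$ whenever it is defined. We will also check that all sets produced remain closed. This holds because $C_\varepsilon$ of a closed set is closed and $\Psi$ intersects with a closed half-line.

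For $\Psi$ the check is immediate. First, $\Psi(A)\cap(-\infty,M-\varepsilon]=L$ because $M-\varepsilon<M$. Second, $\Psi(A)\cap[M-\varepsilon,\infty)=A\cap[M-\varepsilon,M]$, whose infimum is $m$ if $m\le M$ and which is empty otherwise. Both outcomes depend only on $m$, so they are the same for $B$.

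For $C_\varepsilon$, the first step is to show that for $y\le M-\varepsilon$, membership $y\in C_\varepsilon(A)$ depends only on $L$ and $m$. Take witnesses $x\le y\le z\le x+\varepsilon$ with $x,z\in A$. Then $x\le y\le M-\varepsilon$, so $x\in L$. If $z\le M-\varepsilon$, then $z\in L$ as well. Otherwise $m$ exists, and $y\le M-\varepsilon\le m\le z\le x+\varepsilon$, so $z$ may be replaced by $m$. Hence $y\in C_\varepsilon(A)$ if and only if there is $x\in L$ with either $y\le z\le x+\varepsilon$ for some $z\in L$, or $y\le m\le x+\varepsilon$. This criterion is symmetric in $A$ and $B$, which gives condition (1).

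The second step handles the infimum in condition (2), and I expect it to be the only slightly delicate point. If $M-\varepsilon\in C_\varepsilon(A)$, then by the previous step also $M-\varepsilon\in C_\varepsilon(B)$, and both infima equal $M-\varepsilon$. Otherwise, we claim $C_\varepsilon(A)$ contains no point of $[M-\varepsilon,m)$. Suppose $y$ were such a point, with witnesses $x\le y\le z\le x+\varepsilon$. Then $z\ge y\ge M-\varepsilon$, so $z\ge m>y$. If $x\ge M-\varepsilon$, then $x\ge m>y$, which is impossible. So $x<M-\varepsilon\le z\le x+\varepsilon$, and this puts $M-\varepsilon\in C_\varepsilon(A)$, a contradiction. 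Since $m\in A\subseteq C_\varepsilon(A)$ whenever it exists, the infimum is exactly $m$ in this case, and the intersection is empty when $m$ does not exist. The same holds for $B$.

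Finally, the third claim follows by applying the $C_\varepsilon$ result to $A\sim B$ and then the $\Psi$ result to $C_\varepsilon(A)\sim C_\varepsilon(B)$.
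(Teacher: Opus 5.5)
The paper gives no proof of this lemma (it is left to the reader as a routine check in the spirit of Lemma~\ref{L:commute}), and your direct verification against Definition~\ref{D:sim} is correct and is the intended kind of argument: you use that $m\in A\cap B$ by closedness, and that $C_\varepsilon$ and $\Psi$ preserve closedness, so the third claim follows by composing the first two. One small slip: your stated membership criterion for $y\le M-\varepsilon$ must keep the requirement $x\le y$ (as your witnesses do), since without it the ``if'' direction fails (taking $z=x\in L$, any $y$ lying below a point of $L$ would qualify).
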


\begin{lemma}\label{L:sums}
If $A_1,...,A_d, B_1,...,B_d\subset\R$ are closed sets, and $A_i\sim B_i$ for every $i\in[d]$, then
$$A_1+\cdots+A_d\sim B_1+\cdots+ B_d.$$
\end{lemma}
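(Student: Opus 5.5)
The plan is to prove the case $d=2$ directly and then induct on $d$. I will assume throughout what holds in the intended application: all sets are subsets of $[0,\infty)$, since they record partial populations. I believe this assumption is genuinely needed. Take $A_1=B_1=\{-10\}$, $A_2=\{M,M+100\}$ and $B_2=\{M\}$. Then $A_i\sim B_i$ for both $i$. However, the parts of the sums lying above $M-\varepsilon$ have different infima ($M+90$ versus none), so the two sums are not $\sim$-equivalent.

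For the induction, I apply the $d=2$ case to $A_1+\cdots+A_{d-1}$ and $B_1+\cdots+B_{d-1}$ (which are $\sim$ by the inductive hypothesis) together with $A_d\sim B_d$. This is legitimate because Minkowski sums of closed subsets of $[0,\infty)$ are again closed subsets of $[0,\infty)$. Closedness holds because a convergent sequence of sums has bounded, hence convergent-subsequence, summands.

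\emph{Condition 1 for $d=2$.} Let $x=a+b\in A_1+A_2$ with $x\le M-\varepsilon$. Since $a,b\ge 0$, both $a\le M-\varepsilon$ and $b\le M-\varepsilon$. Condition 1 of $A_i\sim B_i$ then gives $a\in B_1$ and $b\in B_2$, so $x\in B_1+B_2$. The reverse inclusion follows by symmetry.

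\emph{Condition 2 for $d=2$.} Let $\alpha_i=\inf(A_i\cap[M-\varepsilon,\infty))=\inf(B_i\cap[M-\varepsilon,\infty))$ whenever these sets are nonempty. Closedness implies $\alpha_i\in A_i\cap B_i$. Now take any $a+b\in A_1+A_2$ with $a+b\ge M-\varepsilon$, and replace each summand as follows.
\begin{itemize}
\item If $a\ge M-\varepsilon$, replace $a$ by $\alpha_1\le a$; otherwise keep $a$, which lies in $B_1$ by condition 1.
\item Treat $b$ in the same way, using $\alpha_2$ and $B_2$.
\end{itemize}
The resulting sum $a'+b'$ lies in $B_1+B_2$ and satisfies $a'+b'\le a+b$. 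It also still satisfies $a'+b'\ge M-\varepsilon$. Indeed, if some summand was replaced, then $a'+b'\ge\alpha_i\ge M-\varepsilon$ because the other summand is nonnegative. If neither was replaced, then $a'+b'=a+b$.

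Hence every element of $(A_1+A_2)\cap[M-\varepsilon,\infty)$ dominates some element of $(B_1+B_2)\cap[M-\varepsilon,\infty)$. In particular, the second set is nonempty whenever the first is. By symmetry the infima coincide, and both sets are empty together.

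The main obstacle is the role of nonnegativity. It is exactly what forces the summands of a small sum to be small, which is used in condition 1, and what keeps the replaced sums above $M-\varepsilon$, which is used in condition 2. I would therefore state this hypothesis explicitly when presenting the proof.
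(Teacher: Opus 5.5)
Your proof is correct and is essentially the paper's argument. Condition 1 holds because a sum $\le M-\varepsilon$ forces every summand to be $\le M-\varepsilon$, and condition 2 holds by replacing each summand lying in $[M-\varepsilon,\infty)$ with the common infimum $\alpha_i\in A_i\cap B_i$. The paper applies this replacement to a decomposition of the infimum of the sum, while you apply it to an arbitrary element; your induction on $d$ is unnecessary but harmless.

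You are also right that nonnegativity is genuinely needed. The paper's proof uses it tacitly, both in ``only terms \ldots\ that are $\leq M-\varepsilon$ can contribute to sums that are $\leq M-\varepsilon$'' and when it concludes that a summand $y_i\ge M-\varepsilon$ must equal the infimum. The hypothesis holds in the application, and your counterexample is valid provided $\varepsilon<10$.
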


\begin{proof} Set $A=A_1+\cdots+A_d$ and $B=B_1+\cdots+B_d$.  It is clear that $A\cap(-\infty,M-\varepsilon] = B\cap(-\infty,M-\varepsilon]$ because only terms of the individual sets $A_i$ and $B_i$ that are $\leq M-\varepsilon$ can contribute to sums that are $\leq M-\varepsilon$.  

Next, suppose that $A$ intersects $[M-\varepsilon,\infty),$ and set $y=\inf(A\cap[M-\varepsilon,\infty))$.  We can write $y=y_1+\cdots+y_d$, where $y_i\in A_i$ for all $i\in[d]$.  For each $i\in[d]$, if $y_i\leq M-\varepsilon$, then $y_i\in B_i$.   On the other hand, if $y_i\geq M-\varepsilon$, then the assumption that $y$ in the infimum implies that $y_i=\inf(A_i\cap[M-\varepsilon,\infty))=\inf(B_i\cap[M-\varepsilon,\infty))$, so $y_i\in B_i$.  It follows that $y\in B$.  In summary, if $A$ intersects $[M-\varepsilon,\infty)$, then so does $B$ and  $\inf(A\cap[M-\varepsilon,\infty))\in B$.  The roles of $A$ and $B$ are symmetric, so the reverse is also true, which completes the proof.  
\end{proof}

Lemma~\ref{L:pickleball} can be generalized using Lemmas~\ref{L:commute},~\ref{L:smalltool} and~\ref{L:sums}.

\begin{lemma}\label{L:Gracie}
If $A_1,...,A_d, \tilde{A}_1,...,\tilde{A}_d\subset\R$ are closed sets, and $\tilde{A}_i\sim C_\varepsilon(A_i)$ for every $i\in[d]$, then
$$(C_\varepsilon\circ\Psi)\left(\tilde{A}_1+\cdots+\tilde{A}_d\right)\sim (C_\varepsilon\circ\Psi)\left(A_1+\cdots+ A_d\right).$$
\end{lemma}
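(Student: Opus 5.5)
The plan is to chain the three tools with Lemma~\ref{L:pickleball}, passing through the intermediate set $C_\varepsilon(A_1)+\cdots+C_\varepsilon(A_d)$. Throughout I will use that $\sim$ is an equivalence relation; this is immediate from Definition~\ref{D:sim}, since both conditions are equalities. I will also assume closedness of every set involved, which holds because Minkowski sums of the finite unions of closed intervals arising in the algorithm are closed.

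First, Lemma~\ref{L:sums} applied to $\tilde A_i\sim C_\varepsilon(A_i)$ gives
$$\tilde A_1+\cdots+\tilde A_d\sim C_\varepsilon(A_1)+\cdots+C_\varepsilon(A_d).$$
Applying the last clause of Lemma~\ref{L:smalltool} yields
$$(\Psi\circ C_\varepsilon)\left(\textstyle\sum_i\tilde A_i\right)\sim(\Psi\circ C_\varepsilon)\left(\textstyle\sum_i C_\varepsilon(A_i)\right).$$
By Lemma~\ref{L:pickleball}, the right side equals $(\Psi\circ C_\varepsilon)(\sum_i A_i)$. So far we have
$$(\Psi\circ C_\varepsilon)\left(\textstyle\sum_i\tilde A_i\right)\sim(\Psi\circ C_\varepsilon)\left(\textstyle\sum_i A_i\right).$$

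Second, I convert $\Psi\circ C_\varepsilon$ into $C_\varepsilon\circ\Psi$ on each side using Lemma~\ref{L:commute}, which gives $(\Psi\circ C_\varepsilon)(X)\sim(C_\varepsilon\circ\Psi)(X)$ for $X=\sum_i\tilde A_i$ and for $X=\sum_i A_i$. Transitivity and symmetry of $\sim$ then yield the claimed relation
$$(C_\varepsilon\circ\Psi)\left(\textstyle\sum_i\tilde A_i\right)\sim(C_\varepsilon\circ\Psi)\left(\textstyle\sum_i A_i\right).$$

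The only step I expect to need care is justifying that $\sim$ is transitive and that the hypotheses of the cited lemmas (closedness) are met. Transitivity is trivial because both defining conditions are equalities. Closedness could fail for arbitrary sums of closed unbounded sets. If that becomes an issue, I would restrict to sets bounded below, as is the case for populations, where sums of closed sets bounded below are closed. The remainder is purely formal composition of the earlier lemmas.
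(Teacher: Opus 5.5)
Your proposal is correct and follows the paper's own chain: Lemma~\ref{L:commute}, then Lemma~\ref{L:sums} combined with the last clause of Lemma~\ref{L:smalltool}, then Lemma~\ref{L:pickleball}, then Lemma~\ref{L:commute} again. Your explicit remarks that $\sim$ is an equivalence relation and that the Minkowski sums must be closed fill in points the paper leaves implicit. In the application all sets are also closed subsets of $[0,\infty)$, and that nonnegativity is what the proof of Lemma~\ref{L:sums} tacitly needs.
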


\begin{proof}
\begin{align*}
 (C_\varepsilon\circ\Psi)(\tilde{A}_1+\cdots+\tilde{A}_d)
 & \sim  (\Psi\circ C_\varepsilon)(\tilde{A}_1+\cdots+\tilde{A}_d) \\
 & \sim (\Psi\circ C_\varepsilon)(C_\varepsilon(A_1)+\cdots+C_\varepsilon(A_d)) \\
 & =  (\Psi\circ C_\varepsilon)\left(A_1+\cdots+ A_d\right)
 \sim (C_\varepsilon\circ\Psi)\left(A_1+\cdots+ A_d\right).
\end{align*}
\end{proof}

\begin{lemma}\label{L:Boston}
If $A_1,...,A_d, B_1,...,B_d\subset\R$ are closed sets, and $(\Psi\circ C_\varepsilon)(A_i)\sim (\Psi\circ C_\varepsilon)(B_i)$ for every $i\in[d]$, then
\begin{equation}\label{E:Boston}
(\Psi\circ C_\varepsilon)\left(A_1\cup\cdots\cup A_d\right)
\sim (\Psi\circ C_\varepsilon)\left(B_1\cup\cdots\cup B_d\right).
\end{equation}
\end{lemma}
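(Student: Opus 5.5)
The plan is to reduce both sides of \eqref{E:Boston} to an expression built only from the sets $(\Psi\circ C_\varepsilon)(A_i)$ (respectively $(\Psi\circ C_\varepsilon)(B_i)$), using Lemma~\ref{L:commute} and Lemma~\ref{L:smalltool}. The hypothesis then transfers through a union and a final closure. Throughout, I use that $\sim$ is an equivalence relation on closed sets, which is immediate from Definition~\ref{D:sim} since both conditions are equalities. I also use that finite unions of closed sets are closed, and that $C_\varepsilon$ and $\Psi$ take closed sets to closed sets.

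\emph{Step 1: absorb inner closures.} Show that $C_\varepsilon(A_1\cup\cdots\cup A_d)=C_\varepsilon\big(C_\varepsilon(A_1)\cup\cdots\cup C_\varepsilon(A_d)\big)$.
\begin{itemize}
\item For ``$\subseteq$'', use monotonicity of $C_\varepsilon$ together with $A_i\subseteq C_\varepsilon(A_i)$.
\item For ``$\supseteq$'', each $C_\varepsilon(A_i)\subseteq C_\varepsilon(\bigcup_j A_j)$ by monotonicity. The right-hand set is $\varepsilon$-closed (as noted after the definition of $\varepsilon$-closure), so the closure of a subset of it stays inside it.
\end{itemize}

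\emph{Step 2: commute $\Psi$ past the outer closure.} Set $X=\bigcup_i C_\varepsilon(A_i)$. Lemma~\ref{L:commute} gives $(\Psi\circ C_\varepsilon)(X)\sim (C_\varepsilon\circ\Psi)(X)$. Since $\Psi$ is intersection with $(-\infty,M]$, it distributes over unions, so $\Psi(X)=\bigcup_i (\Psi\circ C_\varepsilon)(A_i)$. Combining with Step 1,
\[(\Psi\circ C_\varepsilon)\Big(\bigcup_i A_i\Big)\sim C_\varepsilon\Big(\bigcup_i (\Psi\circ C_\varepsilon)(A_i)\Big),\]
and the identical statement holds with $B_i$ in place of $A_i$.

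\emph{Step 3: unions respect $\sim$.} Show that if $S_i\sim S_i'$ for each $i$, then $\bigcup_i S_i\sim\bigcup_i S_i'$.
\begin{itemize}
\item Condition 1 of Definition~\ref{D:sim} holds because intersection with $(-\infty,M-\varepsilon]$ distributes over unions.
\item Condition 2 holds because the infimum of $\big(\bigcup_i S_i\big)\cap[M-\varepsilon,\infty)$ is the minimum of the individual infima over the nonempty pieces. The pieces that are empty for the $S_i$ are exactly those empty for the $S_i'$, so the two minima agree.
\end{itemize}
Applying this with $S_i=(\Psi\circ C_\varepsilon)(A_i)$ and $S_i'=(\Psi\circ C_\varepsilon)(B_i)$ uses exactly the hypothesis. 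Then Lemma~\ref{L:smalltool} lets us apply $C_\varepsilon$ to both unions while preserving $\sim$. Chaining with Step 2 on both sides by transitivity yields \eqref{E:Boston}.

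The main obstacle I expect is bookkeeping rather than ideas. One point is making sure every set fed into Lemmas~\ref{L:commute} and~\ref{L:smalltool} is closed. Another is handling the ``both empty'' clause in Step 3's infimum argument. The set-equality in Step 1 depends on $\varepsilon$-closures being $\varepsilon$-closed, which the paper asserts, so I will cite that rather than reprove it.
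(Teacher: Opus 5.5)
Your proof is correct, but it takes a different route from the paper's.

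\textbf{The paper's route.} The paper proves this lemma by direct point-chasing. It takes $y$ in the left side and writes $x\le y\le z$ with $x\in A_i$, $z\in A_j$, $z-x\le\varepsilon$. It then uses the hypothesis to replace $x$ and $z$ (or the infimum $w$ of $C_\varepsilon(A_j)\cap[M-\varepsilon,\infty)$) by an $\varepsilon$-net of points of $B_i\cup B_j$ surrounding $y$. This is done separately for each of the two conditions in Definition~\ref{D:sim}.

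\textbf{Your route.} You instead run the same kind of $\sim$-chain that the paper uses for Lemma~\ref{L:Gracie}, with Minkowski sums replaced by unions. The chain consists of:
\begin{enumerate}
\item closure absorption over unions, the union analogue of Lemma~\ref{L:pickleball};
\item Lemma~\ref{L:commute} applied to the closed set $X$;
\item compatibility of $\sim$ with finite unions, the trivial analogue of Lemma~\ref{L:sums};
\item Lemma~\ref{L:smalltool}, to reapply $C_\varepsilon$.
\end{enumerate}

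\textbf{What your route buys.} It is shorter and modular, and it makes the parallel between Lemmas~\ref{L:Gracie} and~\ref{L:Boston} explicit. It also avoids a boundary case in the infimum part of the paper's argument. There the paper asserts $x<M-\varepsilon$, but when $x\ge M-\varepsilon$ one only gets $x=y$, not a contradiction; for example, take $d=1$ and $A_1=\{M-\varepsilon/2\}$. This is easily patched: $y$ is then itself the infimum of $(\Psi\circ C_\varepsilon)(A_i)\cap[M-\varepsilon,\infty)$, so it lies in $(\Psi\circ C_\varepsilon)(B_i)$.

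\textbf{What it costs.} The real work moves into Lemma~\ref{L:smalltool}, whose proof the paper leaves to the reader. The claim $C_\varepsilon(A)\sim C_\varepsilon(B)$ is where the $\varepsilon$-net reasoning reappears. Set $w=\inf(A\cap[M-\varepsilon,\infty))$, which lies in both $A$ and $B$ by closedness. Then any right endpoint $z>M-\varepsilon$ can be replaced by $w$. Moreover, $\inf(C_\varepsilon(A)\cap[M-\varepsilon,\infty))$ equals $M-\varepsilon$ or $w$, according to whether some $x\in A$ with $x<M-\varepsilon$ satisfies $w-x\le\varepsilon$. That claim does hold, so your argument is complete once it is granted. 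You should also include the one-line limit argument that $C_\varepsilon(A)$ is closed when $A$ is closed, which you currently only assert.
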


\begin{proof}
Denote the left and right sides of Equation~\ref{E:Boston} as $L$ and $R$ respectively.  We will first prove that $$L\cap(-\infty,M-\varepsilon]= R\cap(-\infty,M-\varepsilon].$$
We will only prove the inclusion ``$\subseteq$'', since the other is identical with the roles of the $A$s and $B$s swapped.  For this, let   $y\in L\cap(-\infty,M-\varepsilon]$.  This means that $x\leq y\leq z$ for some $x,z\in A_1\cup\cdots\cup A_d$ with $z-x\leq\varepsilon$.  Thus, $x\in A_i$ and $z\in A_j$ for some $i,j\in[d]$.   Since $x\leq y\leq M-\varepsilon$ and $x\in A_i\subseteq C_\varepsilon(A_i)$, we know that $x\in C_\varepsilon(B_i)$, so $x^-\leq x\leq x^+$ for some $x^-,x^+\in B_i$ with $x^+-x^-\leq\varepsilon$.   

CASE 1: If $z\leq M-\varepsilon$, then since $z\in A_j\subseteq C_\varepsilon(A_j)$, we have $z\in C_\varepsilon(B_j)$, so $z^-\leq z\leq z^+$ for some $z^-,z^+\in B_j$ with $z^+-z^-\leq\varepsilon$.  In this case, $\{x^-,x^+,z^-,z^+\}$ is an $\varepsilon$-net of points of $B_i\cup B_j$ that surrounds $y$, so $y\in R$.

CASE 2: If $z>M-\varepsilon$, then $z\geq w$, where  $w=\inf(C_\varepsilon(A_j)\cap[M-\varepsilon,\infty))=\inf(C_\varepsilon(B_j)\cap[M-\varepsilon,\infty))$.  As before, we have that $w^-\leq w\leq w^+$ for some $w^-,w^+\in B_j$ with $w^+-w^-\leq\varepsilon$.  In this case, $\{x^-,x^+,w^-,w^+\}$ is an $\varepsilon$-net of points of $B_i\cup B_j$ that surrounds $y$, so $y\in R$.

It remains to prove that $$\inf(L\cap[M-\varepsilon,\infty)) = \inf(R\cap[M-\varepsilon,\infty))$$
or that both sets are empty.  For this, assume that $L$ intersects $[M-\varepsilon,\infty)$ and let $y=\inf(L\cap[M-\varepsilon,\infty))$.  Since $y\in L$, we have that $x\leq y\leq z$ for some $x,z\in A_1\cup\cdots\cup A_d$ with $z-x\leq\varepsilon$.  Thus, $x\in A_i$ and $z\in A_j$ for some $i,j\in[d]$.  We know that $x<M-\varepsilon$ (because otherwise it would contradict the assumption that $y$ is the infimum), so $x\in C_\epsilon(B_i)$, which means that $x^-\leq x\leq x^+$ for some $x^-,x^+\in B_i$ with $x^+-x^-\leq\varepsilon$.   Set $w=\inf(C_\varepsilon(A_j)\cap[M-\varepsilon,\infty))=\inf(C_\varepsilon(B_j)\cap[M-\varepsilon,\infty))$, so $w\leq z$.  We know that $y\leq w$ because otherwise we would contradict the assumption that $y$ is the infimum.  In summary, $y\leq w\leq z$.   As before, we have that $w^-\leq w\leq w^+$ for some $w^-,w^+\in B_j$ with $w^+-w^-\leq\varepsilon$.  Therefore, $\{x^-,x^+,w^-,w^+\}$ is an $\varepsilon$-net of points of $B_i\cup B_j$ that surrounds $y$, so $y\in R$.  

In summary, we proved that if $L$ intersects $[M-\varepsilon,\infty)$, then $\inf(L\cap[M-\varepsilon,\infty))\in R$. 
The vice-versa statement (with $L$ and $R$ exchanged) is proven symmetrically, which completes the proof.

\end{proof}

We next set up our application of $\varepsilon$-closures to the TAPP problem.  For this, we call a set of vertices \emph{valid} if its total weight lies in $[1/k - \varepsilon/2, 1/k + \varepsilon/2]$.  For any vertex $v$ and any $l\in\{0,...,k\}$, we consider all ways to partition $T(v)$ into $l$ valid pieces, plus a possible ``surplus'' piece; that is, the piece containing $v$ is allowed to be one of the $l$ valid districts (surplus $=0$), or to be underpopulated, in which case we think of this under-population as a ``surplus'' to which more upstream nodes will be added to form a valid piece.  We define $S_l(v)$ as the set of possible such surpluses, and $f_l(v)$ as its $\varepsilon$-closure.
\begin{align*}
    S_l(v) & :=  X\cup  \{x\in[0,1/k + \varepsilon/2] \suchthat  \txt{there is a partition $P$ of $T(v)$ into $l+1$ pieces}\\
    & \hspace{.6in}\txt{where $\abs{P(v)} = x$ and the other $l$ pieces are valid}\},\\
    & \text{ where } X = \begin{cases} \{0\} & \text{if there is a partition of $T(v)$ into $l$ valid pieces,} \\
    \emptyset &\text{otherwise.}\end{cases} \\
    f_l(v) & := C_\varepsilon(S_l(v)).
\end{align*}

The advantage of using $\varepsilon$-closures is that, even though $|S_l(v)|$ might be exponential in the number of vertices, the following is true:
\begin{lemma}\label{L:few_components}
For every vertex $v$ and every $l\in\{1,...,k\}$, $f_l(v)$ has at most $l$ connected components.     
\end{lemma}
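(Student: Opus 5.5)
The plan is to show that every element of $S_l(v)$ lies in a single short interval, of length $l\varepsilon$, whose position is determined by $|T(v)|$. A set inside such an interval can only have boundedly many gaps larger than $\varepsilon$, and these gaps are exactly what separate the components of the $\varepsilon$-closure.

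\emph{Step 1: locate all surpluses.} Take $x\in S_l(v)$ with $x>0$. There is a partition of $T(v)$ into a piece of weight $x$ together with $l$ valid pieces. Each valid piece has weight in $[1/k-\varepsilon/2,\,1/k+\varepsilon/2]$, so
\[
x \in \bigl[\,|T(v)| - l/k - l\varepsilon/2,\; |T(v)| - l/k + l\varepsilon/2\,\bigr] =: I_l(v).
\]
Now consider the element $0\in X$. It arises from a partition of $T(v)$ into exactly $l$ valid pieces, which forces $|T(v)|\in[l/k-l\varepsilon/2,\,l/k+l\varepsilon/2]$. Hence $0\in I_l(v)$ as well. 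So $S_l(v)\subseteq I_l(v)$, an interval of length $l\varepsilon$.

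\emph{Step 2: count components.} $T(v)$ has finitely many partitions, so $S_l(v)$ is a finite set. Write its elements in increasing order as $s_1<\dots<s_N$. By definition, $C_\varepsilon(S_l(v))$ is the union of the points $s_j$ together with the segments $[s_j,s_{j+1}]$ for which $s_{j+1}-s_j\le\varepsilon$. Its connected components are therefore separated precisely at the consecutive gaps exceeding $\varepsilon$. Suppose there are $m$ components. Then there are $m-1$ gaps, each larger than $\varepsilon$, so
\[
(m-1)\varepsilon < s_N - s_1 \le l\varepsilon .
\]
For $\varepsilon>0$ this gives $m-1<l$, i.e.\ $m\le l$. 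If $S_l(v)$ is empty, the claim is trivial.

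\emph{Edge case $\varepsilon=0$.} Here the closure is $S_l(v)$ itself, and the division argument above is unavailable. Instead, $I_l(v)$ degenerates to the single point $|T(v)|-l/k$, so $f_l(v)$ has at most one component, which is at most $l$ since $l\ge1$.

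The only delicate points are checking that the element $0$ contributed by $X$ also lies in $I_l(v)$, and the strictness of the gap inequality. Both are handled above, so I expect no real obstacle.
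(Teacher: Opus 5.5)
Your proof is correct and follows the paper's own argument: both confine $S_l(v)$ to the interval $\bigl[|T(v)|-l/k-l\varepsilon/2,\;|T(v)|-l/k+l\varepsilon/2\bigr]$ of length $l\varepsilon$ and then bound the number of components of its $\varepsilon$-closure. You also write out the gap-counting step that the paper only calls straightforward, and you check that the element $0$ coming from $X$ lies in the same interval, which the paper leaves implicit.
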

\begin{proof}
The total weight of the union of $l$ valid districts must lie in the interval
$\left[l\cdot(1/k-\epsilon/2), l\cdot(1/k+\epsilon/2)\right],$ from which it follows that
$$ S_l(v)\subseteq \left[|T(v)|-l\cdot(1/k+\epsilon/2), |T(v)|-l\cdot(1/k-\epsilon/2)\right],$$
which is an interval of length $l\cdot \varepsilon$.  It is straightforward to show that the $\varepsilon$-closure of a subset of an interval of length $l\cdot\varepsilon$ has at most $l$ components. 
\end{proof}
Regarding the $l=0$ case, observe that $f_0(v)$ has at most $1$ connected component because $S_0(v)$ is empty or equals the singleton set $\{|T(v)|\}$.  

As noted, the following theorem is implied by Ito et al.\txt{}~\cite{Ito2008Partitioning}. In Section~\ref{subRuntimeAnalysis} we will give a tighter analysis of the algorithm's runtime, but for now, we will describe the algorithm and simply remark that it is polynomial-time.
\begin{theorem}\label{thmApproximateBalanceAlgorithm}
    There is a polynomial-time algorithm to solve TAPP.
\end{theorem}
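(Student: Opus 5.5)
The plan is a leaves-to-root dynamic program. At each vertex $v$ and for each $l\in\{0,\dots,k\}$ we compute a representation of $f_l(v)$, or of a set $\sim$-equivalent to it. Here $\Psi$ and $\sim$ are taken with cropping threshold $M = 1/k+\varepsilon/2$. We store each such set as a sorted list of closed intervals. By Lemma~\ref{L:few_components} every list has at most $l\le k$ intervals, so the representations have polynomial size. The answer to TAPP is ``yes'' exactly when $0\in S_k(r)$ for the root $r$, i.e.\ when $T$ splits into $k$ valid pieces with no leftover surplus. The first step is to show that this can be read off from the computed representation.

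\textbf{Leaf case.} If $v$ is a leaf, then $S_0(v)=\{p(v)\}$, intersected with $[0,M]$. Also $S_1(v)=\{0\}$ if $p(v)$ is valid and is empty otherwise. For $l\ge 2$, $S_l(v)$ is empty.

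\textbf{Recurrence.} Let $v$ have children $c_1,\dots,c_d$. Each child $c_i$ either donates its surplus $x_i$ to the piece containing $v$, or has its subtree cut off completely, which means it contributes surplus $0$ with all of its $l_i$ pieces valid. Since $0\in S_{l_i}(c_i)$ in exactly that case, a single formula covers both options. The set of possible raw sums is
\[
U_l(v)=\bigcup_{l_1+\cdots+l_d=l'}\bigl(\{p(v)\}+S_{l_1}(c_1)+\cdots+S_{l_d}(c_d)\bigr),
\]
and then
\[
S_l(v)=\Psi\bigl(U_l(v)\bigr)\text{ with }l'=l,\quad\text{together with }\{0\}\text{ whenever }U_l(v)\text{ with }l'=l-1\text{ meets the valid window}.
\]
The second clause handles the case where the piece containing $v$ is itself closed off as a valid district.

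\textbf{Efficiency.} We do not enumerate the $(l_1,\dots,l_d)$ tuples. Instead we fold the children in one at a time. After absorbing each child we keep a table indexed by the running count $l'\le k$, and at every stage we replace the stored set by $C_\varepsilon\circ\Psi$ of it. Each fold is a Minkowski sum of two unions of at most $k$ intervals, followed by an $\varepsilon$-closure. This keeps the work polynomial, roughly $O(k^4)$ per edge and $O(k^4 n)$ overall.

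\textbf{Correctness and the main obstacle.} The key point is that applying closure and cropping at every intermediate step does not change the answer. We prove the invariant that the stored set at each stage is $\sim$ to $C_\varepsilon\circ\Psi$ of the true raw set. The proof is by induction over the folding steps, using three results. Lemma~\ref{L:Gracie} covers the Minkowski sums, where the inputs are only known up to $\sim$ and closure. Lemma~\ref{L:Boston} covers the unions over different splits of $l'$. Lemmas~\ref{L:commute} and~\ref{L:smalltool} let us reorder $\Psi$ and $C_\varepsilon$.

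The delicate part is checking that $\sim$-equivalence preserves exactly the information the algorithm needs. The sets must agree below $M-\varepsilon$, and they must have the same smallest element in $[M-\varepsilon,\infty)$. The membership test ``does the set meet the valid window $[1/k-\varepsilon/2,\,1/k+\varepsilon/2]$'' must depend only on this data. Note that the window is exactly $[M-\varepsilon,M]$. The test holds if and only if the infimum of the set's part in $[M-\varepsilon,\infty)$ exists and is at most $M$, and that infimum is invariant under $\sim$.

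It then remains to show that the test gives the right answer even though the stored sets are $\varepsilon$-closures rather than the true surplus sets. If a closure meets the window, then some actual element of the underlying set also lies in the window. The reason is that the window has length $\varepsilon$, and a closure interval can only bridge a gap of at most $\varepsilon$. The same argument shows that the final test $0\in S_k(r)$ is decided correctly. I expect this boundary bookkeeping at $M$ to be the main obstacle; the rest is routine induction.
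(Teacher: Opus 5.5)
Your proposal is correct and takes essentially the same route as the paper. Both use a leaves-to-root dynamic program on $\varepsilon$-closures of surplus sets cropped at $M=1/k+\varepsilon/2$, fold in the children one at a time with a table indexed by the running count, and prove $\hat f_l(v)\sim f_l(v)$ by induction via Lemmas~\ref{L:Gracie} and~\ref{L:Boston}; your check that meeting the window $[M-\varepsilon,M]$ survives both $\sim$ and $\varepsilon$-closure is exactly the paper's argument that $\hat X=X$, with the answer read off as $0\in\hat f_k(r)$.
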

\begin{proof}
Set $M= 1/k+\varepsilon/2$, define ``$\Psi$'' as the operator that crops at $M$ (Equation~\ref{E:define_crop}), and define ``$\sim$'' as in Definition~\ref{D:sim}.  Define $S=[1/k - \varepsilon/2, 1/k+ \varepsilon/2]=[M-\varepsilon,M]$, which is the range of populations for a valid district.

Working from the leaves towards the root, we recursively approximate $f_l(v)$ for each vertex $v$ and each $l\in\{0,...,k-1\}$ as the value $\hat{f}_l(v)$ defined as follows.  If $v$ is a leaf with weight $p$, then define $\hat{f}_0(v)=f_0(v)\in
    \{\{p\},\emptyset\}$ depending on whether $p\leq M$. Similarly, define $\hat{f}_1(v)=f_1(v)\in\{\{0\},\emptyset\}$ depending on whether $p\in S$, and define $\hat{f}_l(v)=f_l(v)=\emptyset$ for all $l>1$.  

Next consider a fixed vertex $v$ that is not a leaf, let $p$ denote its weight, and let $\{u_1,...,u_d\}$ denote its children.    We define $\hat{f}_l(v)$  for each $l\in\{0,...,k-1\}$ via the following recursive formula:
\begin{equation}\label{E:tofu}
\hat{f}_l(v)=(C_\varepsilon\circ\Psi)\left(\hat{X}\cup \left\{ p+\sum x_i \,\big|\, x_i\in \hat{f}_{l_i}(u_i)\text{ and } \sum l_i = l\right\}\right),
\end{equation}
where $\hat{X}=\emptyset$ or $\hat{X}=\{0\}$, depending on whether 
$S$ intersects with the following set:
\begin{equation}\label{E:nap}
\hat{Z} = \left\{ p+\sum x_i \,\big|\, x_i\in \hat{f}_{l_i}(u_i)\text{ and } \sum l_i = l-1\right\}.
\end{equation}
Here, all sums are over $i\in\{1,...,d\}$.

Note that the exact value of $f_l(v)$ is obtained by replacing ``$\hat{f}_{l_i}(u_i)$'' with ``$S_{l_i}(u_i)$'' in Equations~\ref{E:tofu} and~\ref{E:nap}; that is,
\begin{equation}\label{E:faketofu}
f_l(v)=(C_\varepsilon\circ\Psi)\left(X\cup \left\{ p+\sum x_i \,\big|\, x_i\in S_{l_i}(u_i)\text{ and } \sum l_i = l\right\}\right),
\end{equation}
where $X=\emptyset$ or $X=\{0\}$, depending on whether the following set overlaps with 
$S$:
\begin{equation}
Z = \left\{ p+\sum x_i \,\big|\, x_i\in S_{l_i}(u_i)\text{ and } \sum l_i = l-1\right\}.
\end{equation}

Note that $\hat{f}_l(v)$ does not necessarily equal $f_l(v)$.  To understand why, notice for example that $f_l(v)$ would only contain $M$ in the unlikely event there are children surplus values that sum exactly to $M$, whereas $\hat{f}_l(v)$ conatins $M$ under more general circumstances.   Nevertheless, we claim that our approximation is accurate in the sense that for every vertex $v$ and every $l\in\{0,...,k-1\}$, 
$$\hat{f}_l(v)\sim f_l(v).$$ 
We will prove this claim by induction on the path distance from $v$ to the nearest leaf.  The formula is true in the base case because if $v$ is a leaf, then $\hat{f}_l(v)=f_l(v)$ for every $l$.  Next, take a non-leaf vertex $v$ and assume that the formula is true for all children of $v$.  Let $l\in\{0,...,k-1\}$.  We must prove that $\hat{f}_l(v)\sim f_l(v)$.  

We begin by showing $0\in\hat{f}_l(v)$ if and only if $0\in f_l(v)$; in other words, $\hat{X}=X$.  For this, consider a single ``index vector'' $I=(l_1,...,l_d)$ with $\sum l_i=l-1$, and define $\hat{Z}_I$ (respectively $Z_I$) as the corresponding subset of $\hat{Z}$ (respectively $Z$); that is,
\begin{equation}\label{E:defz}
\hat{Z}_I = \left\{ p+\sum x_i \,\big|\, x_i\in \hat{f}_{I(i)}(u_i) \right\},\,\,
Z_I = \left\{ p+\sum x_i \,\big|\, x_i\in S_{I(i)}(u_i) \right\}.
\end{equation}
Thus $\hat{Z}=\cup \hat{Z}_I$ and $Z=\cup Z_I$, where the unions are over all index vectors $I$ that sum to $l-1$.  We can determine whether $Z$ (respectively $\hat{Z}$) intersects with $S$ by checking separately for each such $I$ whether $Z_I$ (respectively $\hat{Z}_I$) intersects $S$.  Note that $Z_I$ (respectively $\hat{Z}_I$)  with $S$ if and only if $(C_\varepsilon\circ\Psi)(Z_I)$ (respectively $(C_\varepsilon\circ\Psi)(\hat{Z}_I)$) intersects $S$.  
By Lemma~\ref{L:Gracie}, $(C_\varepsilon\circ\Psi)(Z_I)\sim (C_\varepsilon\circ\Psi)(\hat{Z}_I)$, so in particular,  $(C_\varepsilon\circ\Psi)(Z_I)$ intersects with $S$ if and only if $(C_\varepsilon\circ\Psi)(\hat{Z}_I)$ intersects with $S$.  In summary, $\hat{X}=X$, so $0\in\hat{f}_l(v)$ if and only if $0\in f_l(v)$.

Next, to compare Equations~\ref{E:tofu} and~\ref{E:faketofu}, let $\mathcal{I}$ denote the set of index vectors $I=(l_1,...,l_d)$ with $\sum l_i=l$ (rather than summing to $l-1$, as before).  For $I\in \mathcal{I}$, define $\hat{Z}_I, Z_I$ as in Equation~\ref{E:defz} so that 
$$\hat{f}_l(v) = (C_\varepsilon\circ\Psi)(\hat{X}\cup(\cup_{I\in\mathcal{I}}\hat{Z}_I )), \text{ and }f_l(v) = (C_\varepsilon\circ\Psi)\left(X\cup\left(\cup_{I\in\mathcal{I}}Z_I \right)\right).$$ 
According to Lemma~\ref{L:Gracie}, $(C_\varepsilon\circ \Psi)(\hat{Z}_I)\sim (C_\epsilon\circ\Psi)(Z_I)$ for each $I\in\mathcal{I}$.  According to Lemma~\ref{L:Boston}, this in turn implies that  $\hat{f}_l(v)\sim f_l(v)$.

When we reach the root $v_0$, we need only compute $\hat{f}_{k}(v_0)$ via the same recursive formula and observe that there exists a valid partition of  $T$ if and only if $0\in \hat{f}_{k}(v_0)$.

To show that the algorithm has polynomial time, we must we must consider the time required to compute the Minkowski sums in Equations~\ref{E:tofu} and~\ref{E:nap}.  Both equations require us to compute an expression of the form
\begin{equation}
F_{j_0} = (C_\varepsilon\circ\Psi)\left(\left\{ p+\sum x_i \,\big|\, x_i\in \hat{f}_{l_i}(u_i)\text{ and } \sum l_i = j_0\right\}\right),
\end{equation}
where $j_0=l$ for Equation~\ref{E:tofu}, or $j_0=l-1$ for Equation~\ref{E:nap}.  According to Lemma~\ref{L:few_components} , each set $\hat{f}_{l_i}(u_i)$ has at most $l_i\leq k$ connected components, and it is clear from construction that these components are disjoint intervals.   In general, if $A_1,\dots, A_n$ each equals a union of disjoint intervals, then the Minkowski sum $A_1+\cdots+A_n$ equals the union of all intervals of the form $[x_1+\cdots+x_n,y_1+\cdots y_n]$ where $[x_i,y_i]$ is one of the intervals of $A_i$ for each $i\in [n]$.  The graphs of interest for applications typically have bounded degree, so we expect this process to be the fastest way to compute $F_{j_0}$ in practice.  But in theory, the number of steps of this computation is exponential in the degree, so to complete the proof, we must describe an alternative method of doing the computation one child vertex at a time, with a polynomial number of steps required for each child vertex.  

For this, for each $j\in\{0,...,j_0\}$, we define $F^i_j$ to be the computation of $F_j$ that only factors in the child vertices $u_1,...,u_i$, and we iteratively obtain $\{F_j^{i}\mid j\in\{1,...,j_0\}\}$ from $\{F_j^{i-1}\mid j\in\{1,...,j_0\}\}$.  For this, first initialize $F^0_0=\{p\}$ and $F^0_j=\emptyset$ for each $j\in\{1,...,j_0\}$.  Next, iteratively compute $F^{i}$ from $F^{i-1}$ by factoring in the $i^\text{th}$ child vertex as follows.  Start by setting $F_j^i=\emptyset$ for each $j\in\{0,...,j_0\}$.  Then for each $a,b\in\{1,\dots,j_0\}$ with $a+b\leq j_0$, we set 
\begin{equation}\label{E:george}
F^{i}_{a+b}=(C_\varepsilon\circ\Psi)(F^{i}_{a+b}\cup(F^{i-1}_a+\hat{f}_{b}(u_i))).
\end{equation}
Note that ``$F^i_{a+b}$'' on the right side of Equation~\ref{E:george} might be non-empty due to previously processed values $a',b'$ with $a'+b'=a+b$.  According to the lemmas above, the extra occurrences of the $(C_\varepsilon\circ\Psi)$ at intermediate steps of this algorithm do not change the final result.

\end{proof}
The algorithm in the previous proof can be simplified if we add the hypothesis that 
\begin{equation}\varepsilon < \frac{2}{k(k + 1)},\label{E:smallep}\end{equation} which ensures that it's not possible to partition the graph into more or less than $k$ valid districts. (As we will see in Section~\ref{subRuntimeAnalysis}, this also allows implies a faster running time.) This hypothesis allows us to simply focus on the task of making districts of the correct sizes without worrying about how many districts we have made so far.  In particular, we don't need to separately track $\hat{f}_l(v)$ for each $l\in[k]$, but we can instead just track their union.  

That is, we need only define
$$S(v) = \cup S_l(v) \text{ and }f(v)=C_\varepsilon(S(v)).$$
We can approximate $f(v)$ as the value $\hat{f}(v)$ defined recursively as
$$\hat{Z}=\left\{p+\sum x_i\mid x_i\in\hat{f}(u_i)\right\},\text{ and }
\hat{f}(v) = (C_\varepsilon\circ\Psi)(\hat{X}\cup\hat{Z}),
$$
where $\hat{X}=\emptyset$ or $\hat{X}=\{0\}$, depending on whether $\hat{Z}$ intersects with $S$.  This simplified structure obviates the need for Lemma~\ref{L:Boston}.

\subsection{Runtime analysis}\label{subRuntimeAnalysis}

The algorithm presented in Section~\ref{subApproximateBalanceDp} performs a sequence of Minkowski sums over unions of intervals. We can obtain a straightforward bound of $O(k^4 n)$ by noting that there are $n - 1$ merges, each involving $k$ choices for the number $l$ of completed parts below each of the two vertices being merged, and at most $k$ components per choice of $l$ by Lemma~\ref{L:few_components}. This is the bound obtained by~\cite{Ito2008Partitioning}. However, this analysis ignores the fact that costly merges can only occur near the root of the tree, where the number of completed parts can be high. Leveraging this observation, we are able to obtain improved runtime bounds.

\begin{theorem}\label{T:deadline}
    Our algorithm for TAPP runs in time $O(k^3 n)$ in general, and $O(kn)$ in the case where $\varepsilon = O(1/k^2)$.
\end{theorem}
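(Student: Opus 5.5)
We plan to amortize the cost of the merges in Equation~\ref{E:george}. The key observation is that the number of completed parts that can occur in a subtree is limited by its weight. We therefore need a refinement of Lemma~\ref{L:few_components} that is sensitive to the subtree. For a vertex $v$, only the indices $l$ with $l(1/k-\varepsilon/2)\le |T(v)|$ give nonempty $f_l(v)$. Moreover, a part can only be completed at a vertex that has some weight below it. So $f_l(v)=\emptyset$ whenever $l$ exceeds a quantity $m(v)$, where we take $m(v)$ to be at most $\min\{k,|T(v)|\}$, with $|T(v)|$ here counting vertices. This uses that the districts are vertex-disjoint subtrees, so $T(v)$ contains at most $|T(v)|$ completed parts.

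This is the intended Lemma~\ref{L:few_components_2}: $f_l(v)$ is empty for $l>m(v):=\min\{k,\,n_v\}$, where $n_v$ is the number of vertices in $T(v)$. Combined with Lemma~\ref{L:few_components}, this gives at most $l\le m(v)$ intervals in each $f_l(v)$. The proof is immediate: $l$ completed parts need at least $l$ distinct vertices.

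Next we bound the cost of processing child $u_i$ at $v$. Let $A$ denote the accumulated subtree, consisting of $v$ together with $T(u_1),\dots,T(u_{i-1})$, and let $B=T(u_i)$. The sum in Equation~\ref{E:george} ranges over pairs $(a,b)$ with $a\le m(A)$ and $b\le m(B)$. Each Minkowski sum of unions of $a$ and $b$ sorted intervals, followed by $C_\varepsilon\circ\Psi$, costs $O(a+b)$ if we merge sorted lists. The intervals of the result lie in a window of length $(a+b)\varepsilon$ by the argument of Lemma~\ref{L:few_components}, so $O(a\cdot b)$ is certainly enough. Hence one merge costs at most
\[
\sum_{a\le m(A)}\sum_{b\le m(B)} O(ab) = O\bigl(m(A)^2 m(B)^2\bigr),
\]
or $O\bigl(m(A)m(B)(m(A)+m(B))\bigr)$ with the linear merge.

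We then sum over all merges using the classical ``tree knapsack'' amortization. The standard fact is that $\sum_{\text{merges}} \min\{k,n_A\}\cdot\min\{k,n_B\}=O(nk)$. The charging argument behind it pairs vertices across the merge. Pairs in which both sides have size at most $k$ are charged to vertex pairs within distance $k$ in the DFS order, which gives $O(nk)$. A merge in which one side is large is charged $k$ per vertex of the small side, and each vertex is absorbed into a large side at most once while small, which again gives $O(nk)$. Merges in which both sides are large number $O(n/k)$, each costing $k^2$.

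With the linear-merge cost $m(A)m(B)(m(A)+m(B))\le k\cdot m(A)m(B)$, the total is $O(k\cdot nk)$. This gives $O(k^2 n)$, which is better than the claimed bound and so suspicious. To be safe, we will settle for the quadratic per-pair bound with one factor of $k$ dropped. That is, we bound the cost by $O(k\cdot m(A)m(B))\cdot O(k)$ for maintaining the closure, which yields the stated $O(k^3n)$. Verifying which per-operation cost is honest, accounting for the repeated $(C_\varepsilon\circ\Psi)$ unions into $F^i_{a+b}$, is the main obstacle.

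For the case $\varepsilon=O(1/k^2)$, the hypothesis~\eqref{E:smallep}, or a constant-factor version of it, gives that for each subtree the number of completed parts $l$ is determined up to at most two values. This is because $l(1/k\pm\varepsilon/2)$ must bracket $|T(v)|-x$ with $x\in[0,M]$. So each merge involves $O(1)$ index pairs, and each of these involves $O(1)$ intervals. The $O(1)$ interval count uses Lemma~\ref{L:few_components} with the window length $l\varepsilon=O(1/k)$ compared against the range $M$. Summing over the $n-1$ merges with this naive count would give $O(n)$, but the intervals need not be $O(1)$ in number, only at most $k$. Hence each merge costs $O(k)$ by a linear sorted merge, for $O(kn)$ in total.
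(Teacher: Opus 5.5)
Your argument for the case $\varepsilon=O(1/k^2)$ has a genuine gap. It rests on the claim that each merge costs $O(k)$ ``by a linear sorted merge,'' with no amortization. The Minkowski sum of a union of $a$ intervals and a union of $b$ intervals is the union of up to $ab$ candidate intervals. Computing it means merging $a$ shifted copies of a sorted list, not two lists. The window-length fact that $C_\varepsilon$ of the result has at most $a+b$ components bounds the output size; it does not show the result can be computed in $O(a+b)$ time. You reject exactly this cost as suspicious in the general case and then rely on it here.

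With the honest product cost (which is what the paper charges) and up to $k$ intervals per side, as you concede, a merge with $O(1)$ index pairs costs $O(k^2)$. Summing naively over $n-1$ merges then gives only $O(k^2n)$. The missing idea is to amortize in this regime too, using a component bound that depends on the subtree.

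The paper's Lemma~\ref{L:few_components_2} is not the emptiness statement you guessed. It is a weight-sensitive bound: $f_l(v)$ has at most $1+k|T(v)|$ components. A merge then costs $O\bigl(B^2(x_1+\tfrac12)(x_2+\tfrac12)\bigr)$, where $x_i=\min\{p_i,n_i\}$ and $B$ is the number of possible counts of completed districts. An induction with potential $c_2\bigl(c_1B^2(n-\tfrac12)(p+1)+1\bigr)$ gives $O(B^2kn)$ at the root, so $B=k$ and $B=O(1)$ yield both bounds at once.

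Your own tools would also repair the gap. With $O(1)$ index pairs and at most $l\le m(v)$ components per index, a merge costs $O(m(A)m(B))$. Your tree-knapsack estimate then turns this into $O(kn)$.

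For the general case, the obstacle you leave open is resolved in favor of the quadratic cost you already wrote down. Indeed $\sum_{a,b}ab\le m(A)^2m(B)^2\le k^2m(A)m(B)$, and the classical bound $\sum\min\{k,n_A\}\min\{k,n_B\}=O(nk)$ gives $O(k^3n)$. This is a valid route, but it differs from the paper's:
\begin{itemize}
\item You truncate by vertex count and import a known charging lemma.
\item The paper truncates by weight ($\min\{p,n\}$) and proves the amortization by a self-contained induction.
\end{itemize}
The paper's bookkeeping also keeps the number $B$ of index values separate from the per-index component count. Your $m(A)^2m(B)^2$ bound merges the two, and keeping them separate is exactly what lets the $O(kn)$ case follow from the same computation.
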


The proof will require the following modification of Lemma~\ref{L:few_components}, in which we think of $|T(v)|\cdot k$ as representing the number of ``districts worth of weight'' in $T(v)$. 
\begin{lemma}\label{L:few_components_2}
For every vertex $v$ and every $l\in\{1,...,k\}$, $f_l(v)$ has at most $1+|T(v)|\cdot k$ connected components.     
\end{lemma}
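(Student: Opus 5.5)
The plan is to bound the number of components of $f_l(v)$ by locating $S_l(v)$ inside a short interval and then counting the gaps of size more than $\varepsilon$ that can occur there. As in the proof of Lemma~\ref{L:few_components}, I start from the definition of $S_l(v)$. Every nonzero element $x\in S_l(v)$ is the weight of the piece containing $v$ in a partition of $T(v)$ whose other $l$ pieces are valid. Hence $x = |T(v)| - (\text{weight of } l \text{ valid pieces})$, and also $x\in[0,1/k+\varepsilon/2]$. The element $0$, if present, is one more isolated point that contributes at most one component. The count then has two parts.

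\textbf{Step 1 (interval containment).} Since $0\le x$, the $l$ valid pieces have total weight at most $|T(v)|$. Each valid piece has weight at least $1/k-\varepsilon/2$, so $l(1/k-\varepsilon/2)\le |T(v)|$. It follows that $l$ itself is bounded in terms of $|T(v)|$ whenever $\varepsilon$ is not too large relative to $1/k$. The nonzero part of $S_l(v)$ still lies in an interval of length $l\varepsilon$, as in Lemma~\ref{L:few_components}. It also lies in $[0,\min(|T(v)|,\,1/k+\varepsilon/2)]$.

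\textbf{Step 2 (counting components).} The components of an $\varepsilon$-closure of a set inside an interval $J$ are separated by gaps of length greater than $\varepsilon$. So the number of components is at most $1+|J|/\varepsilon$ and at most $l$. I want to combine the two bounds into $1+|T(v)|k$. If $l\le |T(v)|k$, the bound $l$ from Lemma~\ref{L:few_components}, plus one for the point $0$, already gives at most $1+|T(v)|k$. Otherwise $l>|T(v)|k$, and Step 1 gives $l(1/k-\varepsilon/2)\le|T(v)|<l/k$. This forces $\varepsilon$ to be positive and the valid pieces to have weight below $1/k$.

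\textbf{Main obstacle.} The hard step is the case $l>|T(v)|k$. There I would refine the count as follows. The possible surpluses are $|T(v)|-W$, where $W$ ranges over achievable totals of $l$ valid pieces. Such $W$ lie in $[l(1/k-\varepsilon/2),\,|T(v)|]$, an interval of length at most $|T(v)|-l/k+l\varepsilon/2<l\varepsilon/2$. The number of $\varepsilon$-separated clusters in it is then at most about $l/2$, and I would try to bootstrap this to $|T(v)|k$. A cleaner fallback is induction on the tree. The components of $f_l(v)$ arise from sums over children as in Equation~\ref{E:tofu}, and Minkowski sums followed by $C_\varepsilon$ increase the gap count at most additively. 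Combined with $\sum_i |T(u_i)|\le |T(v)|$, this should give the bound $1+|T(v)|k$ via the count $\#\text{comp}(A+B)\le \#\text{comp}(A)+\#\text{comp}(B)-1$ for $\varepsilon$-closed sets, up to closure. I expect that inequality, which is false in general for Minkowski sums, to be the real difficulty. If it fails, I would fall back on bounding by the interval length directly.
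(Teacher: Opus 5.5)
There is a genuine gap, and it sits in the case you flag yourself, $l > |T(v)|k$. You correctly find that the surpluses lie in an interval of length $|T(v)| - l/k + l\varepsilon/2$. You then bound this only by $l\varepsilon/2$, which gives roughly $1 + l/2$ components, and that cannot be bootstrapped to $1+p$, where $p := |T(v)|k$. The only constraint linking $l$ and $p$ is $l(1/k - \varepsilon/2) \le |T(v)|$, i.e.\ $l(1 - k\varepsilon/2) \le p$. So when $\varepsilon$ is close to $2/k$, $l$ can be many times larger than $p$.

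Your fallback induction does not work either. The inequality it needs is false: for $\varepsilon<1$, $\{0,10\}+\{0,1\}=\{0,1,10,11\}$ has four components, not three. In general, component counts can multiply under the Minkowski sums in Equation~\ref{E:tofu}, which is why a global interval-length argument is needed rather than a local one.

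The missing idea is a case split on $\varepsilon$ versus $1/k$, which bounds the interval length by $p\varepsilon$ directly.
\begin{itemize}
\item If $\varepsilon \ge 1/k$, use what you already noted in Step~1: the surpluses lie in $[0,|T(v)|]$, whose length is $|T(v)| = p/k \le p\varepsilon$.
\item If $\varepsilon \le 1/k$, use $l > p$ to get $l/k - |T(v)| = (l-p)/k \ge (l-p)\varepsilon$. Hence the length is at most $l\varepsilon/2 - (l-p)\varepsilon = (p - l/2)\varepsilon \le p\varepsilon$.
\end{itemize}
Your own gap count (at most $1 + |J|/\varepsilon$ components for an $\varepsilon$-closed subset of an interval $J$) then gives the bound $1+p$. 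This is exactly the paper's argument.

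One smaller correction: in this case, do not count $0$ as an extra isolated point. It comes from a partition of $T(v)$ into $l$ valid pieces, i.e.\ $W = |T(v)|$ in your notation. So it already lies in the same interval $[0,\,|T(v)| - l(1/k-\varepsilon/2)]$. Adding a separate $+1$ for it would overshoot $1+p$ whenever $p$ is not an integer.
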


\begin{proof}
Set $w=|T(v)|$ and set $p=w\cdot k$.  We can think of $p$ as the number of districts worth of weight in $T(v)$.  If $p\geq l$, then the result follows immediately from Lemma~\ref{L:few_components}, so we will assume that $p< l$; that is, $w< \frac lk$.
As in the proof of Lemma~\ref{L:few_components}, the total weight of the union of $l$ valid districts must lie in the interval
$$I=\left[\max\{0, l\cdot(1/k-\varepsilon/2)\}, l\cdot(1/k+\varepsilon/2)\right].$$  By hypothesis, $w$ is less than $\frac l k \in I$.  If $w\leq\min(I)$, then the surplus set $S_l(v)$ is empty; otherwise $S_l(v)$ lies in 
$I'=[0,w -\min(I)]$, which is an interval of length
$$\text{length}(I') = w-\min(I) = w-\max\{0,l\cdot(1/k-\varepsilon/2)\}.$$
It will suffice to prove that $\text{length}(I')\leq p\cdot\varepsilon$, because the $\varepsilon$-closure of a subset of an interval of length $p\cdot \varepsilon$ can have at most $\lceil p \rceil \leq 1+p$ components, where $\lceil p\rceil$ means $p$ rounded up to the nearest integer. 
\begin{itemize}
    \item CASE 1: Assume $\varepsilon\leq \frac 1k$.  In this case, we have
    $$\text{length}(I') = \frac pk - \frac lk + \frac{l\varepsilon}{2} 
    = \frac{l\varepsilon}{2}-\frac{l-p}{k} 
    \leq  \frac{l\varepsilon}{2}-\varepsilon(l-p)
    = \varepsilon\cdot\left(p-\frac l2\right)\leq p\varepsilon,$$
    as desired.
    
    \item CASE 2: Assume $\varepsilon\geq\frac 1k$.  In this case, we have $\text{length}(I')\leq w=\frac pk \leq p\varepsilon$, as desired.
\end{itemize}

\end{proof}

\begin{proof}[Proof of Theorem~\ref{T:deadline}]
    Let $B \geq 2$ be a bound such that, for any subtree $T(v)$, the number of distinct values $d$ for which $T(v)$ contains contain $d$ complete districts is $\leq B$.  For example, $B=k$ works in general, while $B=2$ works under the assumption of Equation~\ref{E:smallep}. 
    
    Consider a merge operation as shown in Figure~\ref{figMergeStep}, merging a child node $v_2$ with $n_2$ vertices of collective weight $p_2/k$ (e.g., ``$p_2$ districts worth of weight'') into a parent node $v_1$ with $n_1$ vertices of collective weight $p_1/k$. Lemma~\ref{L:few_components_2} implies that this operation takes time $O(B(p_1 + 1) \cdot B(p_2 + 1))$. (The $+1$ are necessary because $p_1$ and $p_2$ can be arbitrarily small.) The same bound obviously holds if we replace $p_1$ and $p_2$ with $n_1$ and $n_2$. Thus, letting $x_i := \min\{p_i, n_i\}$, we know that there is some constant $c_1 \geq 1$ such that the runtime of a merge operation is bounded by $c_1(B(x_1 + \tfrac12))(B(x_2 + \tfrac12))$.
    
    \ipncm{.6}{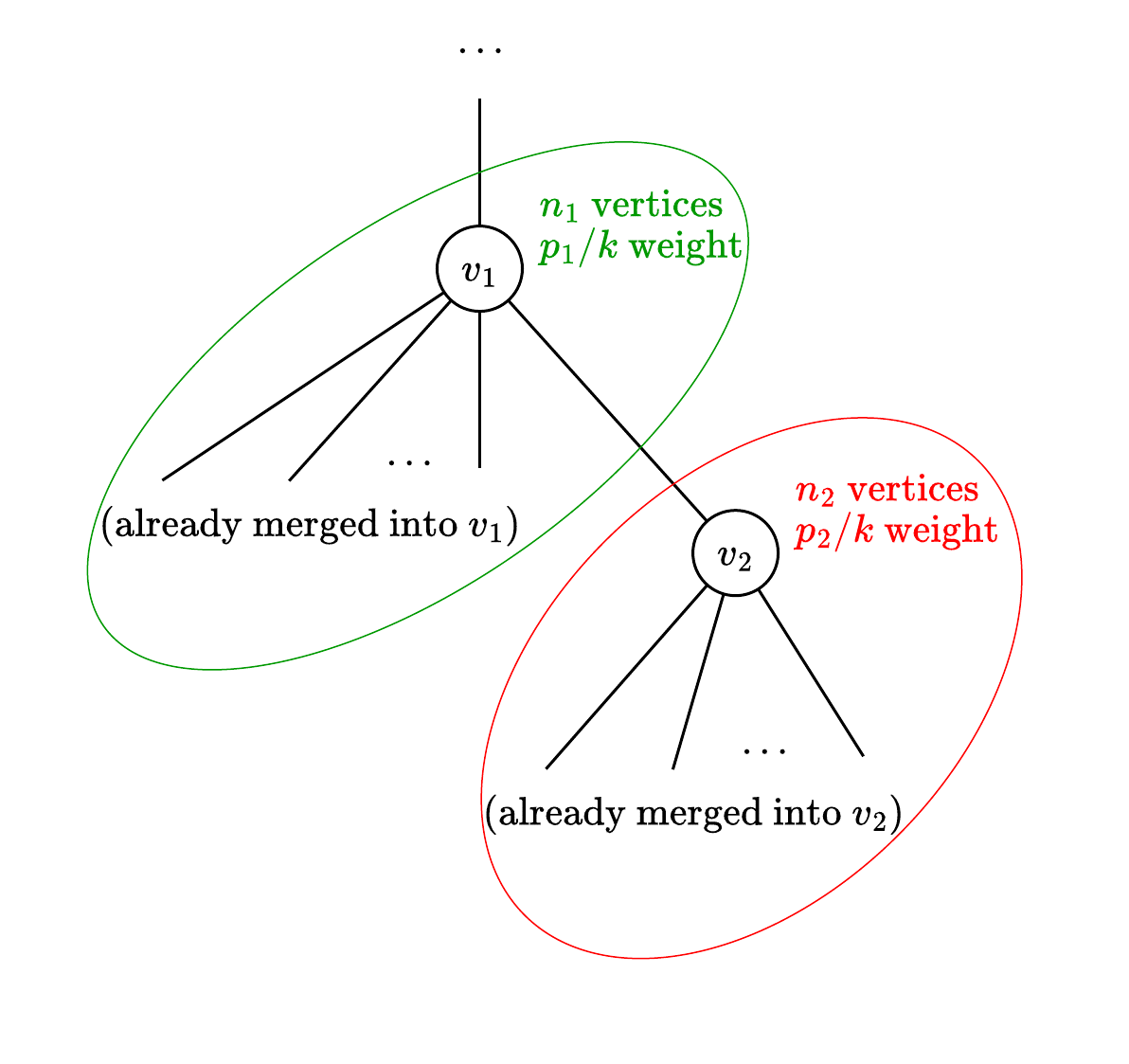}{\label{figMergeStep}An arbitrary merging step in the polynomial-time algorithm for TAPP. Note that the $n_1$ and $p_1$ bounds for $v_1$ do not include the subtree rooted at $v_2$; i.e., they count the subtree that has already been merged into $v_1$.}

    We claim that the total runtime spent within a recursive call to a subtree containing $n$ vertices and $p/k$ weight is at most $c_2(c_1B^2(n - \frac12)(p + 1) + 1)$, where $c_2$ is the time it takes to process a leaf. We proceed by induction on $n$; the base case ($n = 1$) follows from the definition of $c_2$. For the inductive case, consider the subtree depicted in Figure~\ref{figMergeStep}, rooted at $v_1$. Note that
    \begin{align}
        \frac12 x_1 x_2 + \frac12x_1 &= x_1 \left(\frac{x_2 + 1}{2}\right) \leq p_1\left(\frac{n_2 + 1}{2}\right) \leq p_1n_2\label{equMerge1},\\
        \frac12 x_1 x_2 + \frac12x_2 &= x_2 \left(\frac{x_1 + 1}{2}\right) \leq p_2\left(\frac{n_1 + 1}{2}\right) \leq p_2n_1\label{equMerge2}.
    \end{align}
    
    Applying the inductive hypothesis to the two smaller subtrees, we may bound the total running time by
    \begin{align*}
        T(v_1) &\leq c_2 \left( c_1B^2(n_1 - \tfrac{1}{2})(p_1 + 1) + 1 + c_1B^2(n_2 - \tfrac{1}{2})(p_2 + 1) + 1\right) + c_1(B(x_1 + \tfrac12))(B(x_2 + \tfrac12)) \\
        &\leq c_2 \left( c_1B^2(n_1 - \tfrac{1}{2})(p_1 + 1) + 1 + c_1B^2(n_2 - \tfrac{1}{2})(p_2 + 1) + 1 + c_1 B^2 (x_1 + \tfrac{1}{2})(x_2 + \tfrac{1}{2}) \right) \\
        &= c_2 \left( c_1 B^2 \left[ n_1 p_1 + n_1 - \tfrac{1}{2}p_1 - \tfrac{1}{2} + n_2 p_2 + n_2 - \tfrac{1}{2}p_2 - \tfrac{1}{2} + x_1 x_2 + \tfrac{1}{2}x_1 + \tfrac{1}{2}x_2 + \tfrac{1}{4} \right] + 2 \right) \\
        &\leq c_2 \left( c_1 B^2 \left[ n_1 p_1 + n_2 p_2 + n - 1 - \tfrac{1}{2}p + (p_1 n_2 + p_2 n_1) + \tfrac{1}{4} \right] + 2 \right) \stext{by \eqref{equMerge1} and \eqref{equMerge2}}\\
        &= c_2 \left( c_1 B^2 \left[(n_1 + n_2)(p_1 + p_2) + n - \tfrac{1}{2}p - \tfrac{3}{4} \right] + 2 \right) \\
        &= c_2 \left( c_1 B^2 \left[ np + n - \tfrac{1}{2}p - \tfrac{3}{4} \right] + 2 \right) \\
        &= c_2 \left( c_1 B^2 \left[ (n - \tfrac{1}{2})(p + 1) - \tfrac{1}{4} \right] + 2 \right) \\
        &= c_2 \left( c_1 B^2 (n - \tfrac{1}{2})(p + 1) - \tfrac{1}{4} c_1 B^2 + 2 \right) \\
        &\leq c_2 \left( c_1 B^2 (n - \tfrac{1}{2})(p + 1) + 1 \right)
    \end{align*}
    where in the final line, we have used the bounds $B \geq 2$ and $c_1 \geq 1$.
       
    By induction, the bound holds for all subtrees. Applying it to the root node, this becomes $O(B^2 k n)$. Since $B \leq k$, this is always $O(k^3 n)$. When $\varepsilon = O(\frac{1}{k^2})$, we have $B = O(1)$, so this becomes $O(kn)$.
\end{proof}

\subsection{Hardness of uniform sampling via conditional probabilities}\label{sub:hardness}

Theorem~\ref{thmApproximateBalanceAlgorithm} gives us a way to efficiently implement the BUD walk on the space of approximately splittable trees. However, neither the statement of the theorem nor its proof give any way to reconstruct the space of underlying partitions that may arise from splitting such trees. For a given tree, is it possible to obtain a uniform sample from the set of all possible $\eps$-balanced $k$-partitions? Our next result suggests that this problem may be considerably more challenging.

To determine a $k$-partition, we need to select a set of $k - 1$ split edges. The canonical way to sample a uniformly random set of valid edges is to sample one edge at a time, each with the correct conditional probability. To do this, we can pick an arbitrary edge $e$ and randomly pick one of two options:
\begin{enumerate}[label=(\alph*)]
    \item\label{itmCut} Make $e$ a cut edge, removing it from the tree.
    \item\label{itmContract} Determine that $e$ will not be a cut edge, contracting it.
\end{enumerate}
In either case, we recurse on the remaining subtree(s) until all trees consist of a single valid part. Thus, if we can choose between \ref{itmCut} and \ref{itmContract} with the correct probabilities, then we can get a uniform sample. The converse trivially holds as well. Unfortunately, computing the correct probabilities of \ref{itmCut} versus \ref{itmContract} turns out to be intractable.

\begin{theorem}\label{thmApproximateBalanceExactSamplingHardness}
    Given an instance $(T, k, \varepsilon)$ of TAPP and a specific edge $e \in E(T)$, it is \#P-complete to determine the probability that $e$ is split in a uniformly random $\varepsilon$-balanced $k$-partition of $T$.
\end{theorem}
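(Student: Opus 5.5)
The plan is to show membership in \#P (in the appropriate sense) and then prove hardness by a parsimonious-style reduction from \#Knapsack: given positive integers $w_1,\dots,w_m$ and a capacity $C$, count the subsets $A\subseteq[m]$ with $\sum_{i\in A}w_i\le C$. This counting problem is \#P-complete.

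For membership, I will write the desired probability as a ratio $N_e/N$. Here $N$ is the number of $\varepsilon$-balanced $k$-partitions of $T$, and $N_e$ is the number of those in which $e$ is cut. Both are \#P functions: a witness is a set of $k-1$ edges, and checking validity is a polynomial-time sum of rational weights. So the problem reduces to computing two \#P quantities, which is the sense in which I would state membership.

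For hardness, I will build a tree whose valid partitions correspond bijectively to knapsack solutions, plus one extra partition.

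\textbf{Item gadgets.} Take a center vertex $c$ of weight $\gamma$. For each item $i$, attach a path $c - u_i - b_i$, where $u_i$ has small weight $x_i=\eta w_i$ and $b_i$ has weight $\beta$ close to $1/k$. I choose $\beta$, $\eta$ and $\varepsilon$ so that three things hold:
\begin{itemize}
\item both $\{b_i\}$ and $\{u_i,b_i\}$ are valid districts;
\item $\{u_i\}$ alone is never valid, so $u_i$ is absorbed either into $b_i$'s district or into the center district;
\item $b_i$ can never be in the center's district, since its weight $\beta$ plus $\gamma$ is too heavy.
\end{itemize}
The pieces must match up exactly: with $k=m+1$ parts, every partition cuts exactly one edge on each item path. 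The center district then has weight $\gamma+\sum_{i\in A}x_i$, where $A$ is the set of items whose $u_i$ joins the center. I pick $\gamma$ at the lower end of the valid range, so the only binding constraint is the upper one, $\sum_{i\in A} x_i\le \varepsilon$. Setting $\eta$ so that $\eta C\le\varepsilon<\eta(C+1)$ makes this equivalent to $\sum_{i\in A}w_i\le C$.

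Next I normalize the total weight to $1$. This is where the parameters must be solved consistently: $m\beta+\sum_i x_i+\gamma=1$ must hold together with all the validity windows. I will take $\beta=1/k-\varepsilon/2$ plus a tiny slack, and check that the windows of width $\varepsilon$ accommodate $x_i\le\varepsilon$. All numbers are rationals of polynomial bit length, so the reduction is polynomial. At this point the number of valid partitions is exactly $s$, the number of knapsack solutions.

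\textbf{Marked-edge gadget.} I then modify the tree so that it contains one distinguished edge $e$ that is cut in all $s$ of these partitions, and exactly one additional valid partition exists in which $e$ is not cut. My first attempt: attach to $c$ an extra pendant structure $c-e-z$, and rebalance so that the standard partitions keep $z$ separated by $e$. The single exceptional partition would be one where $z$ joins some district while the item paths are forced into a unique configuration, for example every $u_i$ with its $b_i$ and the center taking $z$. Given such a gadget, the oracle answer is $s/(s+1)$, from which $s$ is recovered. Alternatively, I could query the probabilities of the edges $(c,u_i)$ and use self-reducibility, but the single-query $s/(s+1)$ form is cleaner.

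The main obstacle is this last gadget. I must guarantee that adding $z$ creates exactly one new partition and does not spoil the bijection with knapsack solutions, while all tolerance windows stay simultaneously satisfiable. If a single pendant vertex does not work, I would instead make $z$ large, of weight about $1/k$, and set $k=m+2$. Then the exceptional partition would merge $z$ with a specially designed dummy item whose lighter weight forces a unique configuration, and I would tune the weights so that only that one configuration fits.
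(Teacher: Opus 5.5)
You have the right item gadget, the same as the paper's: a center of weight $1/k-\varepsilon/2$, paths $c-u_i-b_i$, and capacity encoded as $\sum_{i\in A}x_i\le\varepsilon$. But the step that produces the answer $s/(s+1)$ is never constructed, and your first attempt cannot work. Suppose $z$ is a single pendant vertex attached to $c$ by $e$.
\begin{itemize}
\item Every partition cutting $e$ uses $m+2$ districts: $\{z\}$, the center's district, and one district per item path.
\item Every partition keeping $e$ uses only $m+1$ districts: the center's district (now containing $z$) and one per item path, since $u_i$ cannot stand alone and $b_i$ cannot join the center.
\end{itemize}
So whichever value of $k$ you pick, one of the two kinds of partition does not exist, and the probability is $0$ or $1$. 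Your fallback (a heavy $z$ ``merged with a specially designed dummy item'') is not specified precisely enough to check. This gadget is the entire content of the ``$+1$'' in $s/(s+1)$.

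The missing idea is a light buffer vertex, so that the gadget contributes exactly one district whether or not $e$ is cut. Attach a path $t_1 - t_2 - c$ with $e=t_2c$, $w(t_1)=1/k-\varepsilon/2$, $w(t_2)=\varepsilon$, $k=m+2$, and $\varepsilon<2/(3k)$.
\begin{itemize}
\item If $e$ is cut, $t_2$ is too light to stand alone, so $\{t_1,t_2\}$ is a district of weight exactly $1/k+\varepsilon/2$. The rest is your item gadget, giving $s$ partitions.
\item If $e$ is kept, the center's district already weighs $1/k+\varepsilon/2$. So every $u_i$ goes with $b_i$ and $t_1$ is alone, giving exactly one partition.
\end{itemize}
This is the paper's construction.

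Alternatively, the self-reducibility route you set aside already works with only your item gadget. Let $I_j$ be the instance restricted to items $1,\dots,j$, and let $q_j$ be the oracle's probability that $cu_j$ is cut in the tree built from $I_j$. Then $q_j=s(I_{j-1})/s(I_j)$, so $s(I_m)=1/(q_1\cdots q_m)$. This is a polynomial-time Turing reduction with $m$ queries, which suffices for \#P-hardness.

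Finally, your normalization check is too optimistic; $x_i\le\varepsilon$ does not suffice. Since every $\beta_i\ge 1/k-\varepsilon/2$, summing to $1$ forces $\sum_i x_i\le k\varepsilon/2$. This fails, for example, when all $w_i$ are close to $C$. You need a preprocessing step such as the paper's reduction to Knapsack instances with every $w_i\le C/2$.
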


\begin{proof}
    This problem is in \#P because we may compute the probability by taking ratio of the number of $\varepsilon$-balanced $(k - 1)$ partitions that split $e$ and the total number of $\varepsilon$-balanced $(k - 1)$ partitions. To show hardness, we reduce from the following 0/1 Knapsack counting problem. The input is a list of item sizes $w_1, w_2, \dots, w_n$ and a capacity $C$, all positive integers. The goal is to determine how many subsets of items have total weight at most $C$. We additionally assume that all weights are at most $\frac{C}{2}$; this problem is still \#P-complete.\footnote{It appears to be a folk theorem that the 0/1 Knapsack counting problem is \#P-complete; see for instance Morris and Sinclair~\cite{MorrisSinclair}. To see that the general problem reduces to the variant where no weight is more than half the capacity $C$, consider an arbitrary Knapsack instance with total weight $B \geq C$. If we add two more items with sizes $B$ and increase the capacity from $C$ to $2B + C$, then the number of solutions to the new instance is equal to $3 \cdot 2^n$ plus the number of solutions to the original instance.} Given an instance $w_1, w_2, \dots, w_n, C$, we let $k := n + 2$ and pick any sufficiently small $\varepsilon < \frac1k$. For each $i \in [n]$, define
    \[x_i := w_i \cdot \frac{\varepsilon}{C} \leq \frac{\varepsilon}{2}.\]
    Finally, we define the tree $T$ as in Figure~\ref{figSharpPReduction}, with the special edge $e$ denoted in bold.

    \ipncm{.6}{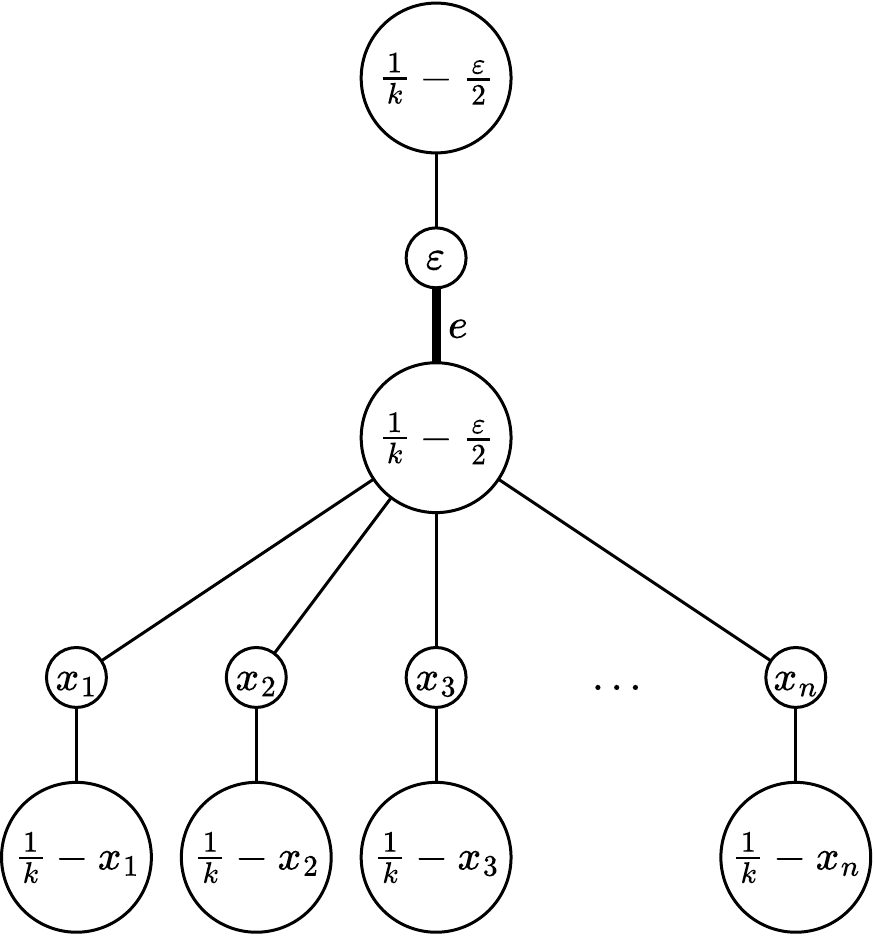}{\label{figSharpPReduction}The reduction from the 0/1 Knapsack counting problem to the problem of computing the probability a given edge $e$ is split in a uniformly random TAPP solution.}

    We claim that, if the number of knapsack solutions is $s$, then the probability $e$ is split in a uniformly random $\varepsilon$-balanced $k$-partition of $T$ is precisely $\frac{s}{s + 1}$. To see this, first observe that there is only ever one way to \emph{not} split $e$. Since preserving $e$ makes that district have maximum size $\frac{1}{k} + \frac{\varepsilon}{2}$, all of the $x_i$ nodes have to be matched with the larger parts below, with the top node in its own part. There are no other ways to get an $\varepsilon$-balanced $k$-partition. Now consider the partitions that split $e$. This entails the top two vertices getting merged together into one complete part. Each of the $x_i$ nodes can independently be matched above or below, provided that the weight is not too large above. Specifically, a set $\{x_i \suchthat i \in S\}$ for $S \subseteq [n]$ can be matched above if and only if
    \[\sum_{i \in S} x_i \leq \varepsilon \iff \sum_{i \in S} w_i \leq C.\]
    Thus, the number of valid partitions where $e$ is split is precisely the number of Knapsack solutions $s$. Thus, $e$ is split with probability $\frac{s}{s + 1}$.
\end{proof}

\section{BUD as a practical algorithm}\label{sec:practical}
\label{sec:bud_as_a_practical_algorithm}
BUD can be run as an independent algorithm with an invariant measure that is uniform on $\eps$-balanced $k$-splittable trees. When we have exact balance (i.e., $\eps = 0$), each such tree has a unique set of $k-1$ \emph{marked edges} that can be removed to induce a balanced forest. Although the measure is uniform on splittable trees, a given partition (without the connective information) may be turned into such a tree by constructing a spanning tree on the graph in which we quotient out the partitions, $G/\mP$. Thus, the probability mass associated with each balanced forest will be proportional to $\tau(G/\mP)$, where we  use $\tau(H)$ to denote the number of spanning trees on a (multi-)graph $H$.

As a first step, we implement and test the tree-level BUD algorithm on a $4\times 4$ square lattice with $k=4$ partitions and confirm that (i) all possible partitions are visited and (ii) the cut-edge edge distribution converges to the exact answer. We derive the exact answer by enumerating all possible partitions with exact balance, counting the number of associated spanning forests that each has, and then computing the number of trees on the quotiented district multigraph so that we uniformly sample splittable trees (see \eqref{eq:splittreemeasure} below; we test on the case where $\lambda \propto 1$). \footnote{See the test cases in BUD\_tree.jl in our code repository to reproduce this result.\url{https://github.com/gjh6/BUD.jl.git}.}

However, in practical redistricting applications, one often wishes to sample from a modified measure on partitions rather than trees. To do so, we use ideas from the Marked-Edge Walk (MEW \cite{mcwhorter2025MEW}) to lift the state space to marked trees. In particular, after proposing an $\eps$-balanced $k$-splittable tree, we mark a collection of $k-1$ edges in the tree that, if removed, would induce an $\eps$-balanced forest. This modification to BUD operates on splittable trees with marked edges, which uniquely describe a balanced forest (removing the marked edges), a balanced partition (by using the trees to define the partitions but then erasing the specific trees), and also a splittable tree (by unmarking the edges).

Thus we think of three sequentially lifted state spaces: (i) the partition $\mP$ on which we will define a measure, (ii) a forest $F$ with trees corresponding to the partitions (iii) and the marked balanced tree $T_M$ with marked edge set $M$. By assigning equal weight to all forests associated with a partition, we can induce measure on a forest $F$ that fixes a partition $\mP(F)$; by assigning equal weight to all possible marked edge combinations, we can induce a measure on a corresponding marked tree $T_M$ that induces a forest $F(T_M)$ and partition $\mP(F(T_M))$. 

Thus the probability measure on partitions, $\pi(\mP)$, defines a measure on forests as 
\begin{align}
\nu(F) = \pi(\mP(F))/\tau(F), \quad \text{ where }
\tau(F) = \prod_{T_i \in F} \tau(\mP(T_i)),
\label{eq:forestmeasure}
\end{align}
where $T_i$ indexes over the trees in $F$, $\mP(T_i)$ is the subgraph on the partition induced by the tree $T_i$, and $\tau(\mP(T_i))$ is the number of spanning trees in the subgraph.  Further lifting the measure to the splittable tree, the resulting measure becomes
\begin{align}
\label{eq:splittreemeasure}
\lambda(T_M) = \frac{\pi(\mP(T_M))}{\tau(F) \tau(G/\mP)},
\end{align}
where we have used $\mP(T_M) = \mP(F(T_M))$, and $Q(\mP) \equiv G/\mP$ represents the %
multigraph in which the nodes are the partitions and the edges between nodes are any edges in the original graph that span partitions. As a note, we can set $\pi(\mP(T_M))=\tau(F) \tau(G/\mP)$, which results in sampling uniformly over splittable trees. This is the measure that BUD is naturally in detailed balance with.\footnote{There is a simple extension of these ideas to weighted trees that we do not discuss in the current work.}

\subsection{Proposing marked edges}
\label{sub:proposing_marked_edges}
For exactly-balanced splittable trees ($\eps = 0$), we can use the BUD procedure as-is, since each such tree has a unique set of marked edges. In the approximate-balance regime, however, after proposing a single splittable tree there may be several ways of marking (or splitting) the tree into a $k$-partition. 
The measure over partitions is achieved by summing over all possible marked trees that correspond to the partition space. 

In developing a Markov Chain to sample from this space, BUD provides a natural mechanism for proposing a new splittable tree, but it does not come with a proposal mechanism to mark the edges. If we wish to target the measure $\pi$ on partitions using an algorithm like Metropolis-Hastings, we need a way of proposing marked edges that allows for computing the probability of selecting each set of marked edges. As we've seen above, even if we could uniformly select from all possible marked edge sets, we wouldn't be able to recover the probability due to the challenge of computing the size of the sample space.
Furthermore, if we would like to maintain any state-space connections made by BUD (e.g. not eliminate any valid transitions), the marked-edge proposal algorithm must also have a non-zero probability of returning any possible set of marked edges for a given splittable tree.

We achieve this via the algorithm $\Alg$. The idea is to use a fixed ordering $\sigma$ of $V(T)$ to select a leaf $v$ of $T$, randomly choose an edge $e_1$ to cut that breaks $v$ into its own district, and then recurse on the remaining $(k-1)$-splittable tree. The output is the tree $T_M$ with marked edge set $M = \{e_1, \dots, e_{k-1}\}$. The use of a fixed ordering ensures that --- conditioned on outputting $M$ --- our approach will deterministically select $e_1, e_2, \dots, e_{k-1}$ in that order, which allows us to tractably compute the probability of selecting $M$, given the ordering $\sigma$. However, the approach described above may deterministically avoid certain marked trees. To get around this, $\Alg$ will (with some tunable probability $p$) contract the branch containing $v$ instead of splitting it off, before recursing.

We proceed to formally define $\Alg$ in Definition~\ref{def:select-marked-tree} and prove in Lemma~\ref{lem:alg-has-properties} that it has the properties we need. We note that our results require a modest bound on the population tolerance $\eps$ around the ideal district size. In the following, we use the parameter $\ell$ to count the number of districts we aim to create, saving $k$ to control the ideal district size. Our results hold (and should be interpreted) for $\ell = k$.

Throughout the following, we use $\deg(T) := \max_{v \in T} \deg(v)$ to denote the maximum degree of tree $T$. %

\begin{definition}%
    Fix $\eps > 0$, an integer $k \geq 2$, and any tree $T$ with vertex weights $w(v) \geq 0$, normalized so that the sum of the vertex weights is $1$. For any subtree $T' \subseteq T$, the weight of $T'$ is $w(T') = \sum_{v \in V(T')} w(v)$. We say that $T'$ is a \emph{district} if its weight is within $\pm \eps/2$ of the ideal district weight $1/k$, and say that $T'$ is \emph{$\ell$-splittable} if it contains a set of $\ell - 1$ edges that, when cut, induce a forest where each tree is a district. We call such a set of edges a \emph{marked edge set} of $T'$.
\end{definition}

\begin{definition}
    For $\eps > 0$ and $k \geq 2$, let $T$ be an $\ell$-splittable tree where every leaf is a district. For any tree $T$ with  $\deg(T) \geq 3$ and any leaf $v \in V(T)$, the \emph{branch} $\branch_v$ is the path from $v$ to the closest vertex $u \in V(T)$ with $\deg(u) \geq 3$. When $\deg(T) < 3$, we define $\branch_v = T$. An edge $e$ is a \emph{viable edge of $v$} if deleting $e$ would induce a forest with subtrees $T_v$ and $T'$, where $T_v$ is a district containing $v$ and $T'$ is $(k-1)$-splittable. We let $\viable_v$ denote the set of all viable edges of $v$.
\end{definition}

\begin{definition}[$\Alg$]\label{def:select-marked-tree}
    Fix $\eps > 0$ and $k \geq 2$, let $T$ be an $\ell$-splittable tree. For any ordering $\sigma$ of $V(T)$ and parameter $p \in (0,1)$, the algorithm $\Alg(T, \ell, \sigma, p)$ proceeds as follows.
    \begin{enumerate}
        \item Let $M = \emptyset$.
        \item If $\ell = 1$, return $M$.
        \item Contract the leaves of $T$ until each leaf is a district. (Note that a leaf that is too light to be a district will be collapsed into its neighboring node and either no longer be a leaf or become a heavier leaf.) \label{step:contract-leaves}
        \item Let $v \in V(T)$ be the first leaf in the ordering $\sigma$ (ignoring contracted vertices).
        \item If $\deg(T) \geq 3$ \emph{and} contracting $\branch_v$ still leads to a balanced tree, contract $\branch_v$ with probability $1$ if $\viable_v=\emptyset$, and probability $p$ otherwise. Let $T'$ be the new tree and recurse on $\Alg(T', \ell, \sigma', p)$ by returning to Step (2). (Note $\sigma'$ is the ordering $\sigma$ after deleting vertices not in $T'$.) \label{step:contract-branch}
        \item Otherwise, select an edge $e$ from $\viable_v$ uniformly at random, add it to $M$, and let $T'$ be the tree induced by removing $e$ such that $v \notin V(T')$. Recurse on $\Alg(T', \ell - 1, \sigma', p)$ by returning to Step (2). \label{step:sample-viable}
    \end{enumerate}
\end{definition}

\begin{lemma}[Correctness]
    For any $\eps > 0$, $k \geq 2$, let $T$ be an $\ell$-splittable tree, $\sigma$ be an ordering of $V(T)$, and $p \in (0, 1)$. Then $M \gets \Alg(T, \ell, \sigma, p)$ is a set of marked edges for $T$.
\end{lemma}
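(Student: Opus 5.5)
We argue by induction on the number of recursive calls, using the invariant that at each call to Step (2) the current tree $T$, with the current parameter $\ell$, is $\ell$-splittable. Here $T$ is a contraction of a subtree of the original tree, with contracted vertices carrying the summed weight. We also want the edges collected so far, together with any marked edge set of the current $T$, to form a marked edge set of the original tree. Once the invariant holds, the base case is immediate. When $\ell=1$, the current tree is $1$-splittable, so it is itself a district, and by the second part of the invariant the accumulated $M$ is a marked edge set of the original tree.

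The steps that preserve the invariant are as follows.

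\textbf{Step (3).} Contracting a leaf that is lighter than a district loses no splittings. In any splitting, a leaf lighter than $1/k-\eps/2$ cannot form its own part, so its pendant edge is never a cut edge. Contracting that edge therefore gives a bijection between marked edge sets before and after. Each contraction strictly reduces the number of vertices, so Step (3) terminates with every leaf a district. This puts us in the setting where branches and viable edges are defined.

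\textbf{Step (5).} We contract only when ``contracting $\branch_v$ still leads to a balanced tree,'' which we read as: the contracted tree is still $\ell$-splittable. This preserves the first part of the invariant by fiat. Any marked edge set of the contracted tree lifts to one of $T$ using the same edges, since contraction does not change the component weights. The only subtle case is $\viable_v=\emptyset$, where Step (5) forces contraction with probability $1$. For this to be consistent, we must show that if $\viable_v=\emptyset$ and $\deg(T)\ge 3$, then contraction is permitted, i.e.\ the contracted tree is $\ell$-splittable. We argue this via any marked edge set $M^*$ of $T$. Consider the part of the resulting forest that contains $v$. If it is obtained by cutting a single edge, then that edge is viable, because the remainder is then $(\ell-1)$-splittable using $M^*$ minus that edge. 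So if $\viable_v=\emptyset$, every part containing $v$ must extend past the branch vertex. In that case no edge of $\branch_v$ lies in $M^*$, and contracting $\branch_v$ keeps $M^*$ valid.

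\textbf{Step (6).} Here we pick $e\in\viable_v$. The set $\viable_v$ is nonempty: if $\deg(T)\ge3$, this holds because we are not in the forced-contraction case. If $\deg(T)<3$, the tree is a path, and the part containing the endpoint $v$ in any splitting is cut off by a single edge, which is therefore viable. By the definition of viability, removing $e$ leaves a district containing $v$ and a remainder $T'$ that is $(\ell-1)$-splittable. So the invariant passes to $\Alg(T',\ell-1,\sigma',p)$, and $e$ together with a marked edge set of $T'$ is a marked edge set of $T$.

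Termination follows because each recursion either decreases $\ell$ or strictly decreases the number of vertices. The main obstacle is the Step (5) argument that an empty $\viable_v$ forces no cut on $\branch_v$. A secondary issue is the ambiguity that the definition of viability says ``$(k-1)$-splittable'' where ``$(\ell-1)$'' is intended. We will note this and take $\ell$ throughout, consistent with the remark that results are interpreted for $\ell=k$ at the top level.
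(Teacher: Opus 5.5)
Your proof is correct and follows the same route as the paper's: Step (3) is possible because splittability is preserved, the recursion terminates, and the definition of viability guarantees that each cut in Step (6) leaves a district containing $v$ plus an $(\ell-1)$-splittable remainder, so induction on $\ell$ finishes. You are more careful than the paper's short proof, which never checks that $\viable_v\neq\emptyset$ when Step (6) is reached; your observation fills that implicit step: under any marked edge set, an empty $\viable_v$ forces the part containing $v$ to contain the branch vertex, so contracting $\branch_v$ is permitted.
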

\begin{proof}
    Observe that Step (\ref{step:contract-leaves}) is possible because otherwise $T$ would not be $\ell$-splittable. Step (\ref{step:sample-viable}) will certainly occur when $\deg(T) < 3$, so it must occur at some point, Step (\ref{step:contract-branch}) necessarily reduces the degree of some vertex $u \in V(T)$. When $\Alg$ calls Step (\ref{step:sample-viable}), the definition of $\viable_v$ ensures that marking $e$ will induce a forest containing a district $T_v$ and an $(\ell-1)$-splittable tree $T'$. The result follows by induction on $\ell$.
\end{proof}

\noindent To prove \Alg\ has the properties we desire, we need to show that the set of viable edges $\viable_v \subseteq \branch_v$ for any leaf $v$ of $T$. Lemma~\ref{lem:viable-all-in-branch} shows that it suffices to impose moderate bounds on the population tolerance.

\begin{lemma}\label{lem:viable-all-in-branch}
    For any $k \geq 2$ and $0 < \eps < 2/(3k)$, let $T$ be an $\ell$-splittable tree, where we have contracted the leaves until each leaf is a district. For any leaf $v \in V(T)$, we have $\viable_v \subseteq \branch_v$.
\end{lemma}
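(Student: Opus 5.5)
Suppose, for contradiction, that some viable edge $e\in\viable_v$ lies outside $\branch_v$; we aim to show that the district $T_v$ cut off by $e$ is too heavy. The case $\deg(T)<3$ is trivial, since then $\branch_v=T$. So assume $\deg(T)\ge 3$. Let $u$ be the endpoint of $\branch_v$ of degree at least $3$. Any edge $e$ not in $\branch_v$ is separated from $v$ by $u$. Hence the component $T_v$ of $T\setminus e$ containing $v$ contains all of $\branch_v$ and also $u$.

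Since $\deg(u)\ge 3$, vertex $u$ has at least two neighbours other than its neighbour on $\branch_v$. At most one of these lies on the path from $u$ to $e$. So at least one neighbour $w$ of $u$ is such that the whole subtree $S_w$ hanging off $u$ through $w$ (on the side away from $v$) is contained in $T_v$. The subtree $S_w$ contains at least one leaf $v'$ of $T$. By hypothesis every leaf has been contracted until it is a district, so $w(v')\ge 1/k-\eps/2$. Likewise $v$ itself is a district, so $w(v)\ge 1/k-\eps/2$. Both $v$ and $v'$ lie in $T_v$, and they are distinct vertices. Therefore
\[
w(T_v)\ \ge\ 2\left(\frac1k-\frac\eps2\right)=\frac2k-\eps .
\]

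For $e$ to be viable, $T_v$ must itself be a district, so $w(T_v)\le 1/k+\eps/2$. Combining the two bounds gives $2/k-\eps\le 1/k+\eps/2$, that is, $\eps\ge 2/(3k)$. This contradicts the hypothesis $\eps<2/(3k)$. Hence every viable edge lies in $\branch_v$, which is the claim $\viable_v\subseteq\branch_v$.

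The main point needing care is the definition of ``branch'' after contraction. Contracted leaves are merged into single vertices carrying summed weights, so the leaves of the contracted tree must have weight at least $1/k-\eps/2$; this is exactly what Step~\ref{step:contract-leaves} guarantees. A related case is when $e$ is an edge incident to $u$ itself but off the branch. Then $u\in T_v$, and $u$ still has another non-branch neighbour whose entire side lies in $T_v$, so the argument above applies unchanged. One should also confirm that $v'\ne v$, which holds because $S_w$ is disjoint from $\branch_v$.
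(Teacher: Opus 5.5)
Your proposal is correct and follows essentially the same argument as the paper: a viable edge outside $\branch_v$ cuts off a district $T_v$ that contains the degree-$\geq 3$ vertex $u$ and hence at least two leaves of weight at least $1/k-\varepsilon/2$ each, which forces $\varepsilon\geq 2/(3k)$. Your explicit construction of the second leaf $v'$ from a subtree hanging off $u$ simply spells out the paper's terse assertion that $|L|\geq 2$.
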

\begin{proof}
    Consider such a tree $T$ and any leaf $v \in V(T)$. Suppose that there exists an edge $e \in \viable_v \setminus B_v$. Since $e \notin B_v$, there must be a vertex $u \in V(T)$ with $\deg(u) \geq 3$; let $u$ be the closest such vertex to~$v$. Since $e \in \viable_v$, removing it will induce a forest with a district $T_v$ containing $v$. Let $L$ be the set of leaves of $T$ that are in $V(T_v)$, so $v \in L$. Crucially, $|L| \geq 2$, since $\deg(u) \geq 3$ and $e \notin B_v$ implies that $u \in V(T_v)$. But then we have
    \begin{align}
        2\left(\frac 1 k - \frac \eps 2\right) \leq |L|\left(\frac 1 k - \frac \eps 2\right) \leq w(L) \leq w(T_1) \leq \frac 1 k + \frac \eps 2 \label{chain-of-weights}
    \end{align}
    which implies that $\eps \geq 2/(3k)$, a contradiction. 
\end{proof}

\begin{lemma}\label{lem:alg-has-properties}
     For any $k \geq 2$ and $0 < \eps < 2/(3k)$, let $T$ be an $\ell$-splittable tree, $\sigma$ be an ordering of $V(T)$, and $p \in (0, 1)$. Then for any valid marked edge set $M$ of $T$, we have
    \begin{align}
        \Pr(\Alg(T, \ell, \sigma, p) = M) > 0. \label{line:positive-probability}
    \end{align}
    Moreover, conditioned on selecting $M$, the algorithm must select the edges of $M$ in a fixed order, i.e., for some ordering $e_1, \dots, e_{\ell-1}$ of the edges in $M$, we have
    \begin{align}
        \Pr(e_i \text{ is the $i^{\text{th}}$ edge selected}|\Alg(T, \ell, \sigma, p) = M) = 1. \label{line:fixed-order}
    \end{align}
\end{lemma}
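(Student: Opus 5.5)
We argue by strong induction on the number of vertices of $T$ (equivalently, on the number of recursive calls). Both claims are proved together. The induction follows one run of $\Alg$ and shows two things: at every stage the run can stay consistent with $M$, and the choice made at that stage is forced once we condition on the output being $M$.

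Fix a valid marked edge set $M$ of $T$. First, Step~(\ref{step:contract-leaves}) is deterministic. Each leaf that is too light to be a district lies in the same part of $\mP(M)$ as its neighbor, so the contracted edges are not in $M$. Hence $M$ remains a valid marked edge set of the contracted tree.

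Next let $v$ be the first leaf in $\sigma$, which is deterministic given the current tree. Let $T_v$ be the district of the partition induced by $M$ that contains $v$. By Lemma~\ref{lem:viable-all-in-branch}, every viable edge of $v$ lies in $\branch_v$. There are two cases.

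\textbf{Case (i): $T_v$ is separated from the rest by a single edge $e \in M \cap \branch_v$.} Then $e$ is viable, because removing it leaves $T_v$ and a tree on which $M \setminus \{e\}$ is a marked edge set. Step~(\ref{step:sample-viable}) picks $e$ with probability at least $(1-p)/|\viable_v| > 0$. This holds whether or not the contraction option exists, since $p<1$.

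\textbf{Case (ii): no edge of $M$ separating $T_v$ lies in $\branch_v$.} Then $T_v \supseteq \branch_v$, so $\branch_v$ contains no edge of $M$. Contracting $\branch_v$ therefore keeps $M$ valid on the contracted tree, which is still balanced. Step~(\ref{step:contract-branch}) is then available, and it is taken with probability $p$ or $1$, in either event positive.

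In both cases the induction hypothesis applies to the smaller tree, with $M$ or $M \setminus \{e\}$. Multiplying the finitely many positive probabilities gives (\ref{line:positive-probability}).

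For (\ref{line:fixed-order}), note that at each stage the only randomness is the choice between contracting $\branch_v$ and cutting, and, when cutting, which viable edge to cut. Conditioned on the final output being $M$, these choices are determined. If the algorithm cuts an edge $e$, then $e \in M$, and $e$ is the unique edge of $M$ on the path from $v$ toward the rest of the tree: two marked edges in $\branch_v$ would create a district that is a subpath of the branch with no leaf, which is impossible. So the cut edge is determined by $M$.

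If instead the algorithm contracts, then $M \cap \branch_v = \emptyset$. When $M \cap \branch_v \neq \emptyset$, contracting would make $M$ unreachable, so conditioned on output $M$ the algorithm cuts. When $M \cap \branch_v = \emptyset$, cutting would place a non-$M$ edge into the output, so conditioned on output $M$ the algorithm contracts.

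Thus, given $M$, the entire sequence of trees and actions is a deterministic function of $(T,\sigma,M)$. In particular, the order $e_1,\dots,e_{\ell-1}$ in which the edges of $M$ are added is fixed, which is (\ref{line:fixed-order}).

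The main obstacle is the case analysis in Case~(ii). One must verify that $T_v$ really contains all of $\branch_v$ whenever $M$ does not meet the branch. One must also handle the degenerate situation $\deg(T) < 3$, where Step~(\ref{step:contract-branch}) is unavailable but the branch is all of $T$ and Case~(i) applies. This is where the bound $\eps < 2/(3k)$ enters, through Lemma~\ref{lem:viable-all-in-branch}.
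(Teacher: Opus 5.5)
Your overall structure matches the paper's proof. You use the same two cases ($M\cap\branch_v\neq\emptyset$ versus $M\cap\branch_v=\emptyset$), the same lower bound $(1-p)/|\viable_v|$ in the cutting case, and the same use of $\viable_v\subseteq\branch_v$ to force contraction when $M$ misses the branch. Your strong induction on $|V(T)|$ replaces the paper's induction on $\ell$ plus its ``finitely many contractions'' remark. Together with your explicit treatment of Step~(\ref{step:contract-leaves}), this is fine and arguably tidier. You should still state the trivial base case $\ell=1$, where $\emptyset$ is the only marked edge set.

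The step that fails as written is your justification that the cut edge is determined by $M$. You claim $\branch_v$ cannot contain two edges of $M$, because the district between them would be a leafless subpath of the branch. Nothing forbids such a district. Take $k=4$ and let $\branch_v$ be the path $v,a,u$, where $u$ has weight $0$ and is adjacent to two further leaves $x,y$. Let $w(v)=w(a)=w(x)=w(y)=1/4$. Then the edges $va$, $au$, $uy$ form a valid marked edge set, with districts $\{v\},\{a\},\{u,x\},\{y\}$. Two of these marked edges lie in $\branch_v$, so ``the unique edge of $M$ on the branch'' is not well defined.

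The correct argument, which is the one the paper uses, goes as follows. Condition on the output being $M$. The cut edge $e_1$ must lie in $M$. The algorithm never recurses on the component containing $v$ after $e_1$ is removed, so any edge of $M$ strictly between $v$ and $e_1$ would never be output. Since $e_1\in\viable_v\subseteq\branch_v$, this forces $e_1$ to be the edge of $M\cap\branch_v$ nearest to $v$. Equivalently, $e_1$ is the single edge separating the $M$-district $T_v$ from the rest, which is exactly the edge $e$ of your Case~(i).

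With that replacement, your claim that the whole run is a deterministic function of $(T,\sigma,M)$ goes through, and hence so does (\ref{line:fixed-order}).
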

\begin{proof}
    We proceed by induction on $\ell$. For the base case, we only need to prove Inequality~(\ref{line:positive-probability}). Let $T$ be a $2$-splittable tree and fix some marked edge set $M = \{e\}$. The ordering $\sigma$ deterministically picks out some leaf $v$ of $T$. Note we must have $e \in \viable_v$, so $\Alg$ will select $e$ with probability at least
    \begin{align}
        \frac{1-p}{|\viable_v|} > 0.
    \end{align}
    For the inductive step, consider an $\ell$-splittable tree $T$ and fix a marked edge set $M$. Let $e_1$ be the random variable denoting the first edge that $\Alg(T, \ell, \sigma, p)$ selects and let $v$ be the first leaf of $T$ picked out by $\sigma$. Let $S_v = M \cap \branch_v$. We consider two cases.
    \begin{itemize}
        \item \textbf{Case 1: $S_v \neq \emptyset$.} Let $e$ be the nearest edge to $v$ that is in $S_v$. Then we must have $e \in \viable_v$ and our argument above shows that $\Pr(e_1 = e) > 0$. Then the tree $T'$ induced by deleting $e$ is $(\ell-1)$-splittable and $M' = M \setminus \{e\}$ is a marked edge set of $T'$. By our inductive hypothesis, $\Pr(\Alg(T', \ell-1, \sigma', p) = M') > 0$ for any permutation $\sigma'$ over $V(T')$. Therefore the probability that we ultimately return $M$ is at least
        \begin{align}
            \Pr(e_1 = e)\Pr(\Alg(T', \ell-1, \sigma', p) = M'|e_1 = e) > 0.
        \end{align}
        Moreover, we claim that conditioning on $\Alg(T, \ell, \sigma, p) = M$ implies that the algorithm must select $e_1 = e$ with probability $1$. To see why, first note that the algorithm cannot contract $\branch_v$, as that would forbid $e \in E(\branch_v)$ from being included in the output. Therefore the algorithm will sample $e_1 \sim \viable_v$. But our choice of $e$ implies that, if $e_1 \neq e$, $e$ would never be included in the output, since the district $T_v$ induced by deleting $e_1$ will contain $e$, and we do not recurse on $T_v$.
        \item \textbf{Case 2: $M \cap E(\branch_v) = \emptyset$.} This case implies that $\deg(T) \geq 3$ and contracting $\branch_v$ still leads to a balanced tree. Therefore with probability at least $p > 0$, $\Alg$ will contract $\branch_v$ and recurse. Since contracting $\branch_v$ reduces the degree of some $u \in V(T)$ by one, after finitely many steps the recursive call will fall into Case 1, where we have already shown we will return $M$ with positive probability.
        
        Moreover, note that conditioning on $\Alg(T, \ell, \sigma, p) = M$ implies that the algorithm must contract $\branch_v$ with probability $1$, as otherwise it will sample $e_1 \sim \viable_v \subseteq \branch_v$ (by Lemma~\ref{lem:viable-all-in-branch}) and include $e_1 \notin M$ in the solution.
    \end{itemize}
    To prove Equation~(\ref{line:fixed-order}), note that these cases together imply that the first edge $e_1$ selected by the algorithm is fixed by $M$ and the ordering $\sigma$; in particular, for the first leaf node $v$ such that $S_v \neq \emptyset$, $e_1$ is the nearest edge to $v$ that is in $S_v$. The result follows by appealing to our inductive hypothesis.
\end{proof}

\subsection{Restricting BUD for computational efficiency}
\label{sub:restricting_bud_for_computational_efficiency}
To achieve the maximize connections in the state space of the marked-BUD walk, one would need to use $\Alg$ to sample a marked edge set at every step of the chain. This is because a BUD step --- even one that involves a small, local cycle --- can result in new possible ways of marking edges. However, running $\Alg$ on the entire tree at each step introduces a significant computational cost. In this section we discuss an approach to restrict the BUD walk in exchange for more efficient sampling. In essence, our idea is to fix a subset of marked edges and only re-sample marked edges near the cycle.

As discussed above in Section~\ref{sec:generalizing}, there have been previous efforts to reduce changes to the marked edges to improve computational efficiency. The Marked Edge Walk (MEW; \cite{mcwhorter2025MEW}) is equivalent to (i) randomly fixing all but one marked edge and (ii) restricting the potential positions of re-marking the edge so that it shares at least one node with its previous choice. In general, there may be no such option for such a nearby marked edge that keeps the tree in balance. In this case, the algorithm would add self-loops where the marked tree does not change (i.e., rejection steps).

The Cycle Walk method \cite{cyclewalk} also (effectively) adds a restriction to the allowable marked edges by insisting that if we have a cycle involving $\ell$ districts, then (i) all $(k-1) - (\ell-1) = k-\ell$ marked edges \emph{not} involved in the cycle are fixed and (ii) the $\ell-1$ marked edges that begin on the cycle must still be on the cycle for a newly proposed tree.

We strike a balance between the approaches of MEW and Cycle Walk. Fix a distance parameter $d \in \N$. Consider a BUD step from a splittable tree $T$ with marked edge set $M$. Start a BUD step, i.e., add a random edge $e$ and let $C$ be the set of vertices involved in the cycle that was created.
The restricted BUD move proceeds as follows.
\begin{enumerate}
    \item Let $N$ denote the subgraph induced by $C$ and all nodes at most $d$ distance away from $C$, in the path metric on $T$. Let $\ell'$ denote the number of marked edges in $N$, i.e., $\ell' = |M \cap E(N)|$.
    \item Remove the $k-1-\ell'$ marked edges that are $d$-far from $C$ and consider the remaining subgraph $\tilde{N}$ that contains $N$. Note that $\tilde{N}$ is comprised of $\ell'+1$ districts.
    \item For each node $v$ in $\tilde{N}$ that is more than $d$-far from $C$, contract $v$ to the unique vertex $w$ on the path from $v$ to $C$ that is exactly $d$-far from $C$. Call the resulting subgraph $H$.
    \item Determine which edges in $C$ turn $H$ into an $(\ell'+1)$-splittable tree when removed, and select one of these edges to remove uniformly at random, creating a tree $H'$.
    \item Use $\Alg$ to randomly select a set of $\ell'$ marked edges in $\tilde{H'}$. The final set of marked edges are these $\ell'$ edges, along with the unchanged $k - 1 - \ell'$ edges that are $d$-far from $C$.
\end{enumerate}

Importantly, this sequence will produce a \emph{proposed} new state. We will then add rejection via the Metropolis-Hastings algorithm to ensure we sample from the fixed invariant measure $\lambda$.  The probability of making a proposal is tractable and may be written as follows, where $Q(T'_M | T_M)$ is the probability, if at marked tree $T_M$, of proposing tree $T_M'$. 
\begin{align}
Q(T_M' | T_M) &= P(e \notin T_M) P(e' \in C(e, T_M)|d) P(M' | M, d, e \cup T_M, \sigma)\\
              &=  \frac{1}{|V|-1} \frac{1}{|\mathcal{R}(T_M, e |d)|} P(M' | \Alg, M, d, e \cup T_M, \sigma).
\label{eq:markedBUDproposal}
\end{align}
In the above equation, the first term is the probability of selecting an edge $e$ to add that is not in the tree $T_M$; the second term is the probability of selecting an edge $e'$ that can be removed from the cycle formed by $e$ and $T_M$, $C(e, T_M |d)$, and results in a cuttable tree on the $d$-distance contracted subgraph described above; and the last term is the probability of selecting a new set of marked edges. The first term can be expanded by noting that there are $|V|-1$ edges in $T_M$ (and we pick uniformly), the second by letting $\mathcal{R}(T_M, e |d)$ denote the set of removable edges from $C(e, T_M)$ that leave a balanced tree in the contracted subgraph described above, and the last term from the recursive (and tractable) process described in the previous section via \Alg.

The above equation holds assuming that (i) the underlying trees of $T_M$ and $T_M'$ differ by exactly one edge, (ii) given the cycle determined by filling in the difference of the trees, all marked edges more than a distance of $d$ away from the cycle are the same. If these conditions do not hold, the transition probability is 0 if $T_M \neq T_M'$ and is equal to $1-\sum_{\bar{T}_M \neq T_M} Q(\bar{T}_M | T_M)$ if $T_M = T_M'$. In the case where \eqref{eq:markedBUDproposal} holds, the first two terms cancel in the Metropolis-Hastings acceptance ratio, and we are left with the acceptance probability
\begin{align}
A(T_M' | T_M) = \min \left( \frac{ P(M | M', d, e' \cup T_M', \sigma)}{ P(M' | M, d, e \cup T_M, \sigma)} \frac{\lambda(T_M')}{\lambda(T_M)} \right),
\end{align}
in which the second term vanishes when $\lambda \propto 1$ and $\pi \propto \tau(F) \tau(G\backslash \mP)$.

Note that when $d=0$, we simply rearrange marked edges on the cycle, and thus we recover a generalization of Cycle Walk as mentioned above in Section~\ref{sec:generalizing}. We do not explicitly seek to generalize all MEW moves; however, if we added a proposal step which fixed the tree and simply reassigned all (or a subset) of marked edges with \Alg, the algorithm would recover all possible MEW moves with the added benefit of always staying in population balance.

In the experiments below, we limit our study to the case where $d=0$.

\subsection{Numerical results}\label{sub:numerical}
We begin by validating the marked-edge version of BUD again on a $4\times 4$ grid graph with exact balance, under the measure $\lambda \propto 1$. The test can found in the test cases of the repository, and examines the distribution of the cut-edges in the graph which is validated against the exact answer.

\subsubsection{A test on a previously stuck case} We continue with a simple and small graph example of a $4\times 4$ grid graph graph with 5 districts and a balance condition between 3 and 4 nodes. For an observable, we examine the rank-ordered marginals of the isoperimetric ratios. In this case we do not have an exact enumeration of the state-space so we launch six independent runs, each with a distinct initial splittable tree. We then examine the averaged total variation accross the six emperical distributions for each pair of runs and report the largest value.   

We chose this configuration because it was the one case for which the Cycle Walk was shown to \emph{not} mix. However, BUD seems to have no such issues, as we report in Figure~\ref{fig:cyclevsbud4x4d5}. Above we proved that BUD is irreducible for triominos, this result is similar though slightly expansive since the lattice dimensions and the balance conditions are beyond the scope of that proof.

\begin{figure}
\centering
\includegraphics[width = 0.45\textwidth]{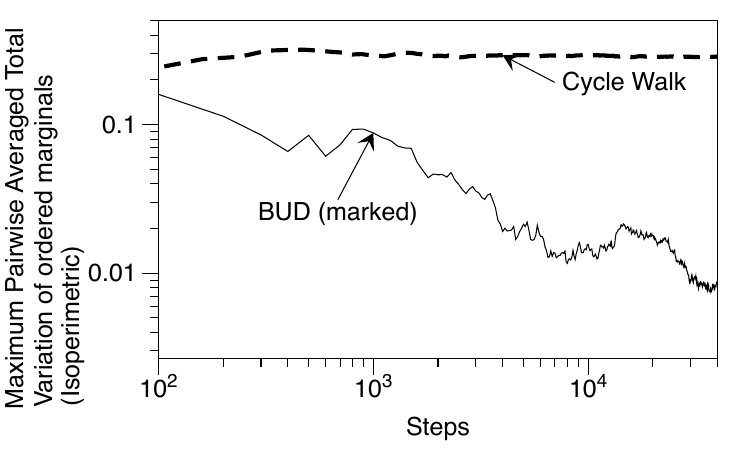}
\caption{We run BUD and the Cycle Walk on a $4\times4$ square lattice with 5 districts and each district comprised of 3 or 4 nodes. The worst case pairwise error across 6 independent runs decays for BUD but does not for Cycle Walk.}
\label{fig:cyclevsbud4x4d5}
\end{figure}

\subsubsection{Comparing BUD with the cycle walk}
Next we compare the relative sampling efficiency between BUD and Cycle Walk. We test this on an $8\times8$ square lattice with 5 districts and a balance condition of 12 to 13 nodes per tree in the corresponding balanced forest. We run BUD for 200,000 steps, gathering the number of cut edges at each step. We confirm that (i) the empirical distribution at 50,000 steps has about a 3\% deviation compared with that taken with 200,000 samples and that (ii) the deviation after 50,000 from an independent chain also has an error of roughly 3\% in total variation between the two empirical distributions. This indicates the number of steps we've considers likely suffices to address questions about limiting distributions. 

In the 200,000 steps, BUD spends roughly 55\% of them proposing edges that are entirely contained within a district, meaning $\mP$ is fixed in these steps. BUD spends roughly 32\% of the steps making cycles with two districts and the remainder with 3 or more. 

In comparing BUD with Cycle Walk, there are a few challenges. First, Cycle Walk has a tunable parameter for taking 1-tree vs 2-tree Cycle Walks and it is not entirely clear how to set a fair comparison. We opt to set the 1-tree frequency to that of BUD, i.e. 55\% of the steps will be 1-tree Cycle Walk steps. Second, it is not clear which measure we should choose. BUD is best suited to sample from $\lambda \propto 1$; Cycle-Walk is best suited to sample on $\nu \propto 1$, where $\nu$ is the measure on balanced forests defined in \eqref{eq:forestmeasure}.\footnote{This statement is based on what measure reduces variation in the Metropolis-Hastings acceptance ratio.} Comparing the methods  on one of these measures would create a biased conclusion. Furthermore, both methods should benefit from their ability to make small, medium, and large changes to the underlying forest to sample from a range of different measures. For current purposes, we compare the effective sample rates and auto-correlations for the natural measures for each method (i.e. each method samples from a separate method).

We display the autocorrelation plots in Figure~\ref{fig:cyclevsbud8x8d5autocor}. We see that, per non-internal step, the BUD walk decorrelates significantly faster than does the Cycle Walk. In terms of the effective sample rate, BUD's effective sample rate is roughly twice that of Cycle Walks.

A caveat of BUD's accelerated sampling is the wall clock time: There is computational cost in running the \emph{TAPP} procedure and in \Alg. To date, the Cycle Walk has a more highly optimized cod ebase; in their current iterations, BUD runs roughly 9 times slower per effective sample than Cycle Walk in this test case. We leave it open whether code optimization can improve the performance of BUD and whether in other types of graphs/sampling conditions BUD will be more performant than Cycle Walk.

\begin{figure}
\centering
\includegraphics[width = 0.45\textwidth]{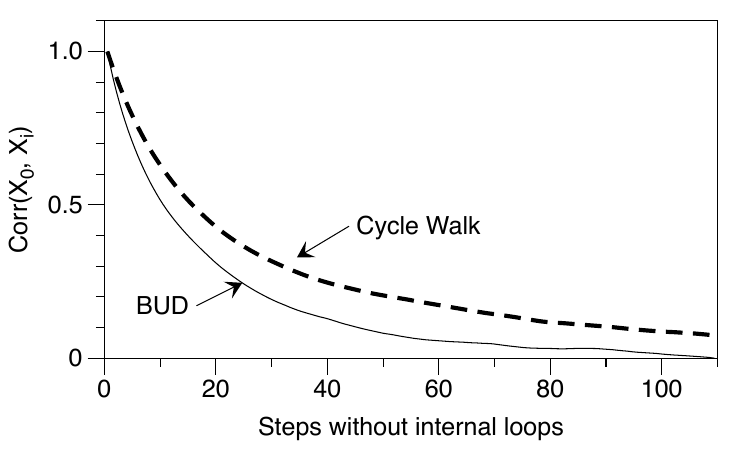}
\caption{We run BUD and Cycle-Walk on a $4\times4$ square lattice with 5 districts and each district comprised of 3 or 4 nodes. The worst case pairwise error across 6 independent runs decays for BUD but does not for Cycle Walk.}
\label{fig:cyclevsbud8x8d5autocor}
\end{figure}

\subsubsection{Sampling a large graph and understanding the uniform measure on splittable trees}
We next study the BUD walk on all of North Carolina. We run for 1 million steps on the 14 districts using the 2020 census population data and output every 100 steps, using a balance condition of 2\% deviation from exact balance. We take 4 runs of  BUD and 4 of Linked Forest Recombination \cite{autry2021metropolized}. On the former chain, we consider $\lambda \propto 1$ and on the latter, $\nu \propto 1$ (again, to provide favorable mixing conditions for each chain). 

To test for convergence we examine the Democratic vote share that each district would have received under the votes 2020 Presidential Election. We then order the the districts from most to least Republican and examine the marginal distributions of the order statistics. We examine the difference in the average total variation across the 14 marginals as a function of the number of samples and see good agreement between the 4 independent chains. We compute total variation by creating histograms with a bin width of $0.2$\% and plot our results in Figure~\ref{fig:budConvergence}.

Despite the difference in measure, the 14 observables appear to be mostly well aligned between the BUD walk and Linked Forest Recombination as shown in Figure~\ref{fig:budAndRecom}. In addition, we look at differences in cut-edges which has been proposed and used as a discretized type of compactness \cite{deford2019recombination}; we find placing importance on the district quotient graphs with many trees has the overall effect of increasing the number of cut edges. 

\begin{figure}
\centering
\includegraphics[width = 0.45\textwidth]{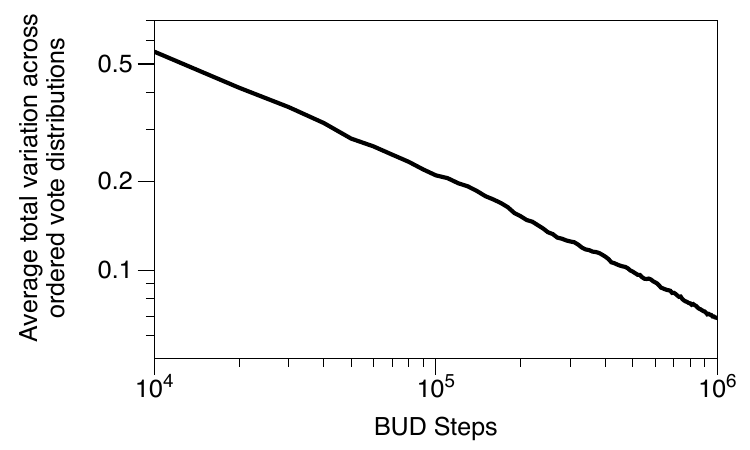}
\caption{We investigate the convergence of BUD on the North Carolina precinct graph and find evidence for convergence when examining distributions of ranked ordered vote .}
\label{fig:budConvergence}
\end{figure}

\begin{figure}
\centering
\includegraphics[width = 0.45\textwidth]{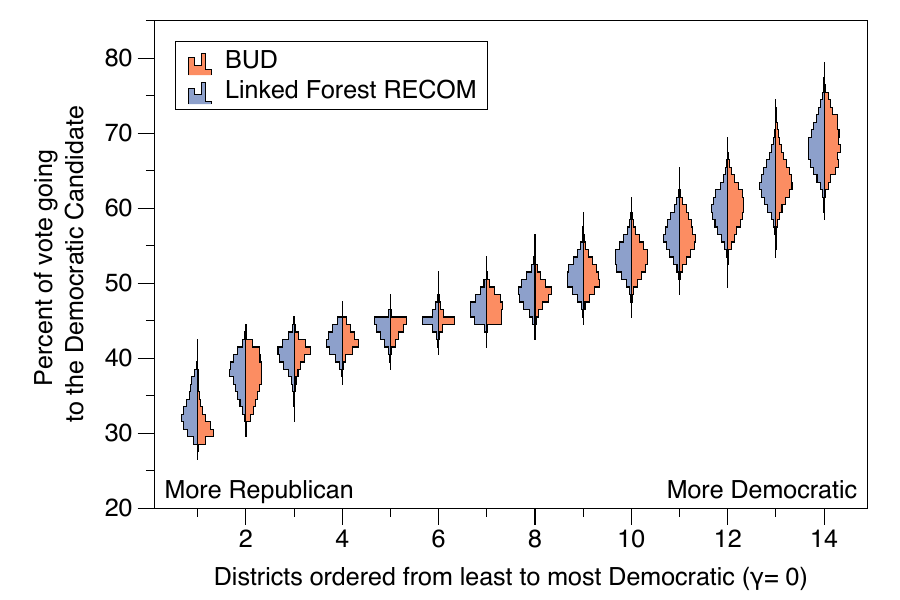}
\includegraphics[width = 0.45\textwidth]{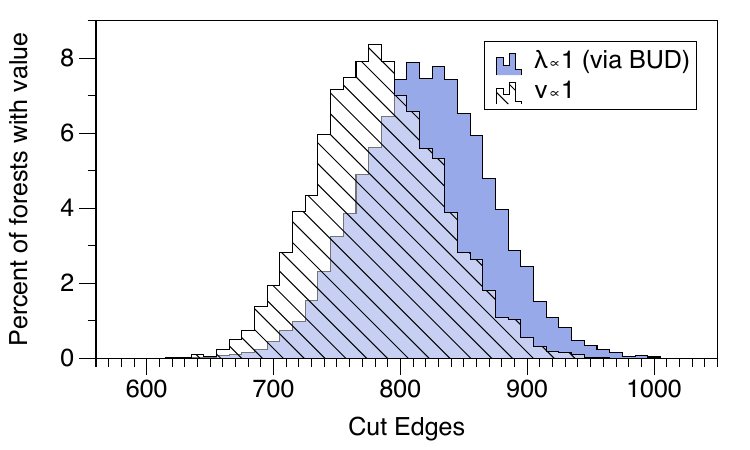}
\caption{We compare the randed-ordered vote marginals under the 2020 Presidential Election $\nu \propto 1$ to BUD with $\lambda \propto 1$ (both measures suited for each method). We also examine the difference in emperical distributions of cut edges between the measures and find that $\lambda \propto 1$ generally leads to more cut edges than the spanning forest distribution with $\nu \propto 1$.}
\label{fig:budAndRecom}
\end{figure}

\subsubsection{Does BUD have optimal scheduling?}
Linked Forest Recombination seeks to change $\mP$ at every step, whereas BUD will only change $\mP$ if an edge that spans to districts is chosen. Cycle Walk comes with a tunable parameters in which one can explicitly set  We emperically examine the ensemble with $\lambda$ uniform and find that only 16.8\% of the proposals do not sample internal edges, and therefore have any change at changing $\mP$. We also examine the ensemble with $\nu$ uniform and find that we sample an edge that spans districts in 15.9\% of the proposals (the smaller value due to the fewer cut edges as shown above).

With the initial investigation of Cycle Walk, the authors found evidence of a transition point when manually setting the ratio between internal and external moves. They found this by examining total variation error of some observables after a fixed number of external steps for a variety of external-to-internal step ratios. We examine the ratio that naturally arises from BUD moves and find that it neatly fits into this transition point: Having more internal moves would increase the work without enhancing mixing; having fewer internal moves may be detrimental to mixing. We plot our result in Figure~\ref{fig:optimalSchedule} by contextualizing BUD in a figure from \cite{cyclewalk}. We see that the BUD ratio seems to be large enough to not waste work with internal moves, but, as evidenced in the previous section, to still have enough internal moves to allow mixing. Further resolution of the referenced study would be needed to observe any tradeoffs between mixing and added computational effort.

It is an open question as to whether the natural ratio will hold as the number of cut edges continues to change according to other measures, or if one would be better served manually selecting a ratio for particular measures. 

\begin{figure}
\centering
\includegraphics[width = 0.45\textwidth]{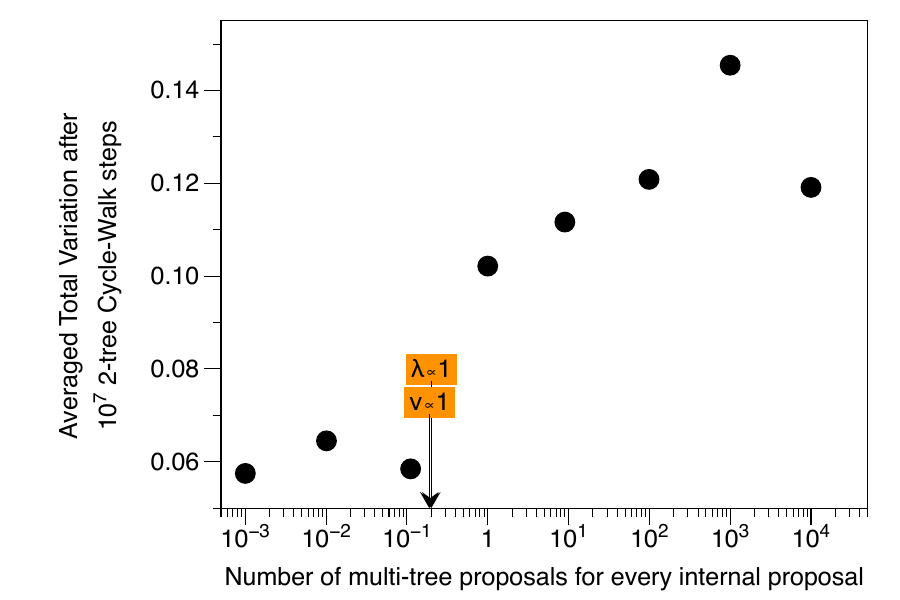}
\caption{We take the results from \cite{cyclewalk} that investigate optimal internal to external move ratios and find that BUD sits near an optimal balance between adding too much work via more internal moves and adding enough internal moves to still allow for efficient mixing.}
\label{fig:optimalSchedule}
\end{figure}

\subsubsection{Balanced up-down versus up-down walks}
We conclude by emperically comparing BUD with the classic Up-Down walk that samples all possible trees rather than restricting to splittable trees. We begin by noting that the $4\times 4$ square lattice with partitioned into 4 regions with exact balance has 35,624 splittable trees and 100352 total trees (ignoring symmetries). Thus, although BUD restricts the allowed moves in the up-down walk, it samples from a significantly smaller space and it is unclear what, if any, impact this would have when contrasting the mixing times of BUD with the classic up-down walk.

We investigate a difference in mixing times by again turing to the $8\times 8$ square lattice, this time with 4 districts and exact balance (so 16 nodes in each district). We again run for 200,000 steps. We cannot use cut edges as a proxy observable because cut edges is not well define for the classic up-down walk as there is no way of partitioning the non-splittable tree. We instead opt to examine the tree diameter as an observable.

We examine the auto-correlation plots for both BUD and the Up-Down walks in Figure~\ref{fig:budVud} along with looking at the resulting emperical histograms of tree diameters. We find that despite differences in the emperical observable distributions, the auto-correlation, and hence effective sample rates are nearly equivalent. 

\begin{figure}
\centering
\includegraphics[width = 0.45\textwidth]{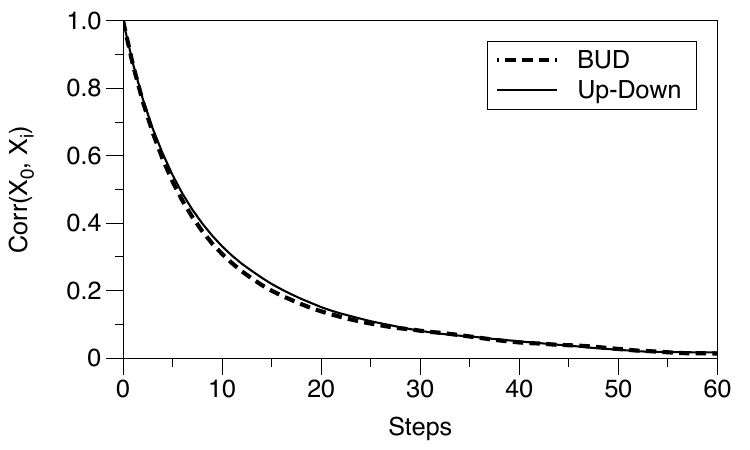}
\includegraphics[width = 0.45\textwidth]{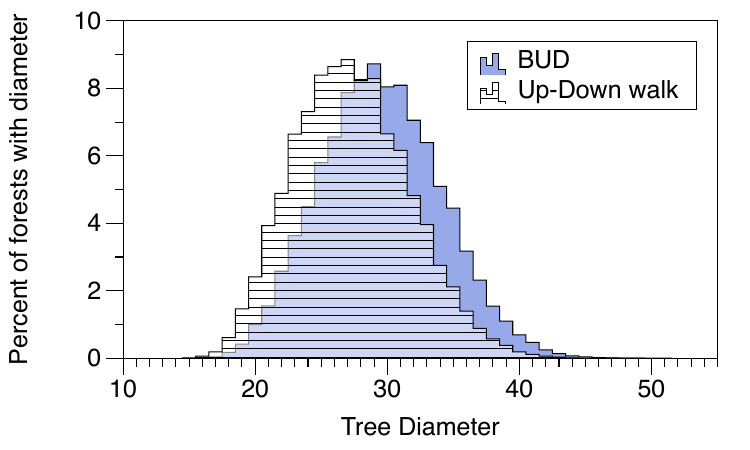}
\caption{We compare the auto-correlations of the BUD and Up-Down walks using the tree diameter as the observable. We also display the empirical distributions of the tree diameter.}
\label{fig:budVud}
\end{figure}

\section*{Acknowledgments}

The authors began this project at the American Institute of Mathematics during a workshop on ``Mathematical foundations of sampling connected balanced graph partitions" in June 2025.  They thank AIM for the funding support that made organizing and attending this workshop possible. The authors also thank Jonathan Mattingly for his early conversations that lead to the formation of the BUD walk and the initial discussions about its irreducibility; Daryl DeFord, who proposed the generic framework of gradually modifying existing forests which spurred much of our investigations; and Eyob Tsegaye, who was involved in creating some of our early examples where BUD was not irreducible.

\bibliographystyle{plain}
\bibliography{ref.bib}

\end{document}